\title{\boldmath A Non-Renormalization Theorem for Local Functionals in Ghost-Free Vector Field Theories Coupled to Dynamical Geometry}
\author{Lavinia Heisenberg, }
\author{Shayan Hemmatyar, }
\author{Nadine Nussbaumer }
\affiliation{Institute for Theoretical Physics, Heidelberg University, Philosophenweg 16, 69120 Heidelberg, Germany}
\emailAdd{heisenberg@thphys.uni-heidelberg.de, hemmatyar@thphys.uni-heidelberg.de, nussbaumer\_n@thphys.uni-heidelberg.de}
\abstract{We establish a non-renormalization theorem for a class of ghost-free local functionals describing massive vector field theories coupled to dynamical geometry. Under the assumptions of locality, Lorentz invariance, and validity of the effective field theory expansion below a fixed cutoff, we show that quantum corrections do not generate local operators that renormalize the classical derivative self-interactions responsible for the constraint structure of the theory.
The proof combines an operator-level analysis of the space of allowed local counterterms with a systematic decoupling-limit argument, which isolates the leading contributions to the effective action at each order in the derivative expansion. As a consequence, all radiatively induced local functionals necessarily involve additional derivatives per field and are suppressed by the intrinsic strong-coupling scales of the theory.
In particular, the classical interactions defining ghost-free vector field theories are stable under renormalization, and any additional degrees of freedom arising from quantum corrections appear only above the effective field theory cutoff. This result extends known non-renormalization properties of flat-space vector theories to the case of dynamical geometry and provides a structural explanation for their perturbative stability to all loop orders.}
\keywords{local functionals, non-renormalization theorems, ghost-free vector field theories, constraint structure, quantum field theory on curved spacetimes}
\begin{document}
\maketitle
\flushbottom

\newtheorem{definition}{Definition}[section]
\newtheorem{example}{Example}[section]
\newtheorem{theorem}{Theorem}[section]
\newtheorem{corollary}{Corollary}[section]
\newtheorem{lemma}{Lemma}[section]
\newtheorem{remark}{Remark}[section]
\newtheorem{proposition}{Proposition}[section]




\section{Introduction}
Local field theories with derivative self-interactions provide a broad class of models in which nontrivial constraint structures can be engineered at the classical level. In particular, suitably chosen derivative couplings may render otherwise higher-derivative theories dynamically consistent by ensuring that the equations of motion propagate only a finite and controlled number of degrees of freedom. From the perspective of effective field theory, such constructions raise a fundamental structural question: whether the constraint structure defining a ghost-free theory is stable under quantum corrections generated by local fluctuations of the fields.
In perturbative quantum field theory, quantum corrections generated by local fluctuations manifest themselves as local and covariant counterterms in the effective action, a fact that can be formulated rigorously within the framework of quantum field theory on curved spacetimes \cite{Brunetti:1999jn,Hollands:2001fb}. Renormalization in curved spacetimes and the algebraic structure of local counterterms can be rigorously formulated using microlocal analysis and locality principles; foundational work in this direction includes the development of microlocal spectrum conditions for quantum field theory on curved backgrounds \cite{Brunetti:1995rf,Brunetti:2001dx} and the construction of local Wick polynomials and time-ordered products \cite{Hollands:2001nf, Hollands:2001fb}. These structural frameworks are deeply connected to the locally covariant approach to quantum field theory, which formalizes how renormalization must respect locality and diffeomorphism covariance in curved geometry \cite{Brunetti:2001dx}. Renormalization can be interpreted as a deformation of the algebra of local observables, see e.g. \cite{Duetsch:2000de} (see also \cite{Ivanov:2024lbs,Ivanov:2024xpr}).

A paradigmatic example of this phenomenon is provided by ghost-free scalar theories with derivative self-interactions, such as Galileon models on flat spacetime \cite{Nicolis:2008in} and their covariant extensions, including Horndeski theories \cite{Horndeski:1974wa}. In these theories, the classical Lagrangian is composed of a finite set of local functionals whose derivative structure guarantees second-order equations of motion and a fixed number of propagating degrees of freedom. Remarkably, despite the absence of a conventional symmetry protecting these interactions, quantum corrections do not renormalize the classical operators. Instead, all radiatively generated local counterterms necessarily involve additional derivatives per field and are suppressed by the intrinsic strong-coupling scales of the theory \cite{Luty:2003vm,Nicolis:2004qq,Burgess:2006bm,Hinterbichler:2010xn,Goon:2016ihr,dePaulaNetto:2012hm,deRham:2012ew,Brouzakis:2013lla,deRham:2013qqa,Heisenberg:2019udf,Heisenberg:2019wjv}. This property is commonly referred to as a non-renormalization theorem and admits a structural interpretation in terms of the organization of the local operator algebra. From a structural perspective, renormalization may be viewed as a deformation of the space of local functionals defining the theory \cite{Costello:2011but}.

Closely related constructions exist for massive vector fields. Ghost-free vector field theories with derivative self-interactions — often referred to as Generalized Proca theories — constitute the most general class of local Lagrangians for a massive vector field that propagate exactly three degrees of freedom \cite{Heisenberg:2014rta,Allys:2015sht,Jimenez:2016isa,Heisenberg:2018vsk}. The absence of ghost-like excitations follows from a nontrivial constraint structure, which renders the temporal component of the vector field nondynamical despite the presence of derivative self-interactions \cite{Heisenberg:2018vsk,Jimenez:2019hpl}. Ghost freedom is ensured by a nontrivial constraint structure in the sense of Dirac \cite{Dirac:1964,Henneaux:1992ig}. When formulated on curved spacetimes, these theories define a natural class of local functionals describing massive vector fields coupled to dynamical geometry. Throughout this work, curved spacetime is treated as a geometric background within the framework of perturbative quantum field theory on curved manifolds \cite{Wald:1994,Hollands:2001fb}.

From an effective field theory perspective, it is essential to determine whether the ghost-free structure of such vector field theories persists under renormalization. Quantum corrections may, in principle, generate additional local operators that modify the derivative structure of the classical action and potentially reintroduce unwanted degrees of freedom. While non-renormalization properties of Generalized Proca theories have been established in flat spacetime \cite{Heisenberg:2020jtr}, the extension of these results to the case of dynamical geometry is nontrivial and remains incomplete. In particular, gravitational fluctuations generically induce new counterterms, and it is not a priori clear whether these preserve the constraint structure responsible for ghost freedom. Partial progress in this direction has been made for restricted subclasses of Proca theories on curved backgrounds, where the divergent part of the one-loop effective action was computed in the absence of derivative self-interactions \cite{Toms:2015fja,Buchbinder:2017zaa,Panda:2021cbf,Ruf:2018gug,Ruf:2018vzq,Garcia-Recio:2019iia} (see also \cite{Salcedo:2022eep} for the non-Abelian case).

In this work, we address this question by establishing a non-renormalization theorem for local functionals defining ghost-free vector field theories coupled to dynamical geometry. Under the assumptions of locality, Lorentz invariance, and validity of the effective field theory expansion below a fixed cutoff, we show that quantum corrections do not generate local operators that renormalize the classical derivative self-interactions. Instead, all radiatively induced local functionals necessarily appear at higher derivative order and are suppressed by the intrinsic strong-coupling scales of the theory. As a consequence, the classical constraint structure is preserved under renormalization, and any additional degrees of freedom induced by quantum effects lie parametrically above the effective field theory cutoff. Our analysis combines an operator-level classification of admissible local counterterms with a systematic decoupling-limit argument that controls the derivative expansion of the effective action. Explicit one-loop computations are employed as consistency checks and to illustrate the general mechanism, while the structural argument extends to all loop orders within the effective field theory framework.

The paper is organized as follows. In Sec.~\ref{sec2}, we review the class of ghost-free vector field theories on curved spacetimes and introduce the relevant local functionals and notation. In Sec.~\ref{InteractionStructure}, we introduce a minimal representative model and derive the cubic and quartic interaction vertices arising from the perturbative expansion of the action. 
In Sec.~\ref{DL_Hierarchy}, we define the decoupling limit as a scaling limit of the effective action and explain its role in controlling the derivative expansion and the structure of admissible counterterms.
Sec.~\ref{quantumcorrections} analyzes representative one-loop contributions to the two-point, three-point and, four-point functions of the vector and metric fields, combining power-counting arguments with explicit evaluation of ultraviolet divergences and their interpretation within the decoupling-limit framework.
In Sec.~\ref{NR-Theorem}, we formulate and prove the non-renormalization theorem for local functionals defining ghost-free vector field theories coupled to dynamical geometry. Sec.~\ref{robustness} extends the structural argument to the full class of such theories and to arbitrary loop order within the effective field theory regime. Sec.~\ref{conclusion} contains our conclusions.
Throughout this work we employ the mostly-minus metric signature $(+,-,-,-)$, follow Wald’s conventions for curvature tensors \cite{Wald:1984rg}, and work in units $c=G=1$.




\section{Ghost-Free Vector Field Theories on Curved Spacetimes}
\label{sec2}
We consider classical field theories defined on a four-dimensional Lorentzian manifold $(\mathcal{M}, g_{\mu\nu})$, whose dynamical fields consist of a massive vector field $A_\mu$ and the spacetime metric $g_{\mu\nu}$. The dynamics is specified by local, diffeomorphism-covariant functionals of these fields and their derivatives, organized as a derivative expansion and interpreted within an effective field theory framework  \cite{Heisenberg:2014rta}. Throughout this work, locality is understood in the standard sense that the action functional depends on the fields and a finite number of their derivatives at each spacetime point, while higher-derivative contributions are suppressed by a set of intrinsic strong-coupling scales that define the regime of validity of the effective description.

A central structural requirement imposed on the class of theories studied here is that they propagate a fixed and finite number of physical degrees of freedom. In particular, despite the presence of derivative self-interactions, the phase-space dynamics is required to propagate exactly three degrees of freedom associated with the massive vector field. This property is ensured by a nontrivial constraint structure, in the sense of Dirac’s theory of constrained systems, which renders the temporal component $A_0$ nondynamical and removes the would-be ghost additional degree of freedom from the spectrum. The absence of ghost-like excitations is therefore understood as a consequence of the constraint algebra, rather than as a result of fine-tuning or symmetry protection \cite{Heisenberg:2014rta,Jimenez:2016isa,Heisenberg:2018vsk}.

\subsection{Structural framework and locality assumptions}
\begin{definition}[Ghost-free vector field theories on curved spacetimes]
\label{def:ghostfreevector}
Let $(\mathcal{M}, g_{\mu\nu})$ be a four-dimensional Lorentzian manifold. A \emph{ghost-free vector field theory on curved spacetime} is defined by an action functional
\[
S[A,g] = \int_{\mathcal{M}} \mathcal{L}(A_\mu, \nabla_\nu A_\mu, g_{\mu\nu}, R_{\mu\nu\rho\sigma}, \dots),
\]
satisfying the following properties:
\begin{enumerate}
\item[(i)] \emph{Locality and covariance:} $\mathcal{L}$ is a local, diffeomorphism-covariant scalar density constructed from the fields and a finite number of their derivatives.
\item[(ii)] \emph{Derivative expansion:} Higher-derivative terms are organized in a derivative expansion and suppressed by a finite set of strong-coupling scales, defining an effective field theory.
\item[(iii)] \emph{Constraint structure:} The associated Hamiltonian system propagates exactly three physical degrees of freedom associated with the massive vector field.
\end{enumerate}
\end{definition}\medskip
\begin{proposition}
\label{prop:proca_realization}
The Generalized Proca theories provide a concrete realization of Definition~\ref{def:ghostfreevector}.
In particular, their interaction terms constitute the \textbf{most general} local functionals of a massive vector field and the metric that satisfy locality, diffeomorphism covariance, and propagate exactly three vector degrees of freedom at the classical level and have equations of motion of second differential order\cite{Heisenberg:2014rta}.
\end{proposition}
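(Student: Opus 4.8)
The plan is to establish the two assertions of the proposition separately: first that the Generalized Proca Lagrangians \emph{satisfy} all three conditions of Definition~\ref{def:ghostfreevector} (the realization statement), and second that they \emph{exhaust} the space of admissible functionals (the maximality statement). For realization, I would dispatch conditions (i)--(ii) by direct inspection, noting that every Generalized Proca term is assembled from contractions of $A_\mu$, the field strength $F_{\mu\nu}=\nabla_\mu A_\nu-\nabla_\nu A_\mu$, the covariant derivatives $\nabla_\nu A_\mu$, and curvature tensors, saturated with the metric and the totally antisymmetric Levi-Civita tensor; such objects are manifestly local, diffeomorphism-covariant scalar densities organized order by order in the derivative expansion. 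Condition (iii) is the substantive one and requires a Hamiltonian (Dirac) analysis: I would compute the momenta conjugate to $A_\mu$, exhibit the primary constraint expressing that $\pi^0$ fails to depend on $\dot A_0$, and show that its preservation under time evolution generates a secondary constraint, the two forming a second-class pair that removes two phase-space functions and leaves exactly three propagating degrees of freedom \cite{Dirac:1964,Henneaux:1992ig,Heisenberg:2018vsk}.

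The harder direction is maximality, which I would treat as a classification problem. First I would write the most general interaction Lagrangian as a sum $\mathcal{L}=\sum_n \mathcal{L}_n$ graded by the number of vector fields, each $\mathcal{L}_n$ being a Lorentz scalar built from $A_\mu$, $\nabla_\nu A_\mu$, and (on curved space) curvature; Lorentz invariance alone restricts each piece to the finite set of independent index contractions at that order. The two defining requirements then act as filters on this ansatz. Demanding equations of motion of at most second differential order forces the derivative couplings into totally antisymmetric combinations, because only the contraction of $\nabla_\nu A_\mu$ against antisymmetric index structures guarantees that the symmetric second-derivative pieces $\nabla_{(\mu}\nabla_{\nu)}A_\rho$ cancel upon variation. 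Independently, the constraint structure of (iii) requires the velocity Hessian $\partial^2\mathcal{L}/\partial\dot A_\mu\,\partial\dot A_\nu$ to stay degenerate along the temporal direction so that $A_0$ remains non-dynamical; imposing this degeneracy yields a system of functional differential equations on the scalar coefficient functions multiplying each allowed structure.

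Solving these conditions order by order in the number of fields is the core of the argument, and in four dimensions the antisymmetrization is capped by the four-index Levi-Civita tensor, which terminates the tower at the sextic level and leaves a single free function of $X = A_\mu A^\mu$ per term. I expect the main obstacle to lie in the curved-space completion: the naive replacement $\partial\to\nabla$ spoils second-order dynamics because the commutators of covariant derivatives produce curvature contributions of the schematic form $[\nabla,\nabla]A\sim R\,A$, which reintroduce higher derivatives of the metric and can disturb the Hessian degeneracy. The delicate step is therefore to prove not merely the \emph{existence} but the \emph{uniqueness} of the non-minimal curvature counterterms that restore both the second-order equations of motion and the degeneracy. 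I would handle this by matching the higher-derivative curvature terms generated by the non-commuting covariant derivatives against the finite basis of non-minimal curvature couplings permitted by locality and covariance, and by verifying that this matching admits a unique solution at each derivative order that neither spoils the Hessian degeneracy nor introduces new free functions beyond those already fixed in flat space \cite{Horndeski:1974wa,Heisenberg:2014rta,Allys:2015sht}.
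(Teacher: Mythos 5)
Your plan is sound, but be aware that the paper does not actually prove this proposition: immediately after stating it, it defers entirely to the cited literature (``A detailed construction of these interactions can be found in Refs.~\cite{Heisenberg:2014rta,Jimenez:2016isa,Heisenberg:2018vsk}''), and the subsequent remark emphasizes that the non-renormalization argument only uses the structural properties of Definition~\ref{def:ghostfreevector}, never the explicit classification. What you have written is therefore a compressed reconstruction of the proof that lives in those references, and in outline it is the correct one: the Dirac analysis with a primary constraint from the degenerate velocity Hessian and its secondary partner forming a second-class pair (eight phase-space dimensions minus two constraints, giving three degrees of freedom), antisymmetric Levi--Civita index structures protecting second-order dynamics, termination of the tower in four dimensions, and non-minimal curvature counterterms (the $G_4 R$, $G_5 G_{\mu\nu}$ and double dual Riemann couplings) fixed by demanding that the commutator terms $[\nabla,\nabla]A \sim R\,A$ not spoil the degeneracy.

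There are, however, two places where your maximality filter is weaker than what the classification actually requires, and both would need to be repaired before the phrase ``most general'' is earned. First, the space of candidate operators must be taken modulo total derivatives and four-dimensional identities (Schouten/Cayley--Hamilton-type relations among contractions); without quotienting by these redundancies you cannot conclude that each order carries exactly the free functions $G_i(X)$ and no more --- this is precisely the point on which the completeness of the set was historically contested (cf.\ the additional structures proposed in \cite{Allys:2015sht} and their subsequent reduction). Second, your claim that second-order equations of motion ``force the derivative couplings into totally antisymmetric combinations'' captures the Galileon-like tower $\mathcal{L}_3$--$\mathcal{L}_5$, but the genuinely new pure-vector structures --- the $g_5(X)\,\tilde{F}^{\alpha\mu}\tilde{F}^{\beta}{}_{\mu}\nabla_\alpha A_\beta$ term and $\mathcal{L}_6$ --- arise from contractions with the dual field strength and vanish identically in the scalar limit $A_\mu \to \nabla_\mu \phi$; a classification that only tracks the longitudinal sector and the Hessian degeneracy along the temporal direction will not detect them (note also that $\mathcal{L}_5$ carries two free functions, $G_5$ and $g_5$, not one per term as you state). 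Neither issue invalidates your strategy, but both are essential for the exhaustiveness half of the proposition, which is the half the paper itself outsources to the references.
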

\noindent
A detailed construction of these interactions can be found in Refs.~\cite{Heisenberg:2014rta,Jimenez:2016isa,Heisenberg:2018vsk}. Throughout this work, we restrict attention to local, perturbative aspects of the theories defined above. In particular, global issues, boundary conditions, and nonperturbative effects are not addressed. All statements are understood within the regime of validity of the derivative expansion and refer to local properties of the action functional and its perturbative quantization.

The corresponding action of the Generalized Proca interactions can be written as a finite sum \cite{Heisenberg:2014rta, Heisenberg:2018vsk}
\begin{equation}
\label{gpaction}
S=\int d^4 x \sqrt{-g}\mathcal{L}, \quad \mathcal{L}=\sum_{i=2}^6 \mathcal{L}_i,
\end{equation}
where the individual Lagrangian densities are given by
\begin{align}
\label{gpLagr}
    &\mathcal{L}_2  =G_2(X,Y,F),\notag \\[5pt]
    &\mathcal{L}_3 =G_3(X) \nabla_\mu A^\mu,\notag \\[4pt]
    &\mathcal{L}_4  =G_4(X) R+\,G_{4, X}\left[\left(\nabla_\mu A^\mu\right)^2-\nabla_\rho A_\sigma \nabla^\sigma A^\rho\right]\notag, \\[4pt]
    &\mathcal{L}_5  =G_5(X) G_{\mu \nu} \nabla^\mu A^\nu-\frac{1}{6} G_{5, X}\left[(\nabla \cdot A)^3+2 \nabla_\rho A_\sigma \nabla^\gamma A^\rho \nabla^\sigma A_\gamma-3(\nabla \cdot A) \nabla_\rho A_\sigma \nabla^\sigma A^\rho\right]\notag\\
    &\qquad\ -g_5(X) \tilde{F}^{\alpha \mu} \tilde{F}^\beta{ }_\mu \nabla_\alpha A_\beta, \notag\\[4pt]
    &\mathcal{L}_6 =G_6(X) \mathcal{L}^{\mu \nu \alpha \beta} \nabla_\mu A_\nu \nabla_\alpha A_\beta+\frac{1}{2}G_{6, X} \tilde{F}^{\alpha \beta} \tilde{F}^{\mu \nu} \nabla_\alpha A_\mu \nabla_\beta A_\nu.
\end{align}
The \(G_i\) are coupling functions that depend on the following independent scalar quantities
\begin{equation}
\label{XYF}
    X\equiv -A_\mu A^\mu/2,\quad Y\equiv A^\mu A^\nu F^\alpha_\mu F_{\nu\alpha},\quad F\equiv -\frac{1}{4}F_{\mu\nu}F^{\mu\nu},
\end{equation}
where $X$ describes the vector field norm, $F$ the kinetic Proca term, and $Y$ is constructed such that it does not give any dynamics to the zeroth component of the vector. Derivatives of the couplings $G_i$ with respect to $X$ are denoted by  $G_{i,X}$. These functions couple the Proca field non-minimally to gravity through the Ricci scalar $R$, the Einstein tensor $G_{\mu\nu}$ and the double dual Riemann tensor \( \mathcal{L}^{\mu\nu\alpha\beta}\equiv\frac{1}{4}\epsilon^{\mu\nu\rho\sigma} \epsilon^{\alpha\beta\gamma\delta} R_{\rho\sigma\gamma\delta} \). The Lagrangian $\mathcal{L}_2$ contains both the kinetic term and a vector potential, which particularly includes the mass term, and allows for all possible contractions among the vector field $A_\mu$ and the associated covariant field strength tensor $F_{\mu\nu}=\nabla_\mu A_\nu-\nabla_\nu A_\mu$, whose dual reads $\tilde{F}_{\mu\nu}\equiv\frac{1}{2}\epsilon^{\mu\nu\rho\sigma}F_{\rho\sigma}$, where we have ignored parity-violating terms that come from contractions with the dual field strength. The derivative self-interactions in the Lagrangians $\mathcal{L}_{i>2}$ are organized such that the resulting Proca equations of motion manifestly remain second-order, avoiding ghost instabilities while maintaining three dynamical vector degrees of freedom. Particularly, by focusing on the longitudinal scalar mode of the vector field $A_\mu\rightarrow\nabla_\mu \phi$, one recovers the scalar-tensor Horndeski theories \cite{Horndeski:1974wa}. At the level of the Proca interactions, both the Lagrangian $\mathcal{L}_6$ and the contribution proportional to $g_5(X)$ in $\mathcal{L}_5$ are genuinely new and pure vector interactions that vanish in the Horndeski limit \cite{Heisenberg:2018vsk}.
\begin{remark}[Radiative stability from structural properties]
The analysis of radiative stability depends only on the structural properties stated in Definition~\ref{def:ghostfreevector}, and not on the detailed form of the chosen interaction terms. They serve as a representative example rather than as an essential ingredient of the proof.
\end{remark}

\subsection{Background Equations of Motion}
In order to define the perturbative expansion used in the subsequent analysis, we first characterize the stationary points of the action functional introduced above. Varying the action with respect to the vector field 
$A_\mu$ and the metric $g_{\mu\nu}$ yields a coupled system of Euler–Lagrange equations, whose explicit form depends on the chosen representative realization of Definition~\ref{def:ghostfreevector}. For the purposes of this work, we restrict attention to backgrounds that solve these equations and admit a well-defined perturbations analysis of the theory.

\begin{definition}[Perturbative expansion about a background configuration]
\label{def:perturbativeexpansion}
Let $(\bar g_{\mu\nu},\bar A_\mu)$ be a background configuration solving the
Euler--Lagrange equations of the action \eqref{gpaction}. The perturbative fields
$(h_{\mu\nu},\delta A_\mu)$ are defined by
\begin{align}
g_{\mu\nu} &= \eta_{\mu\nu} + h_{\mu\nu},\,\,\,\,\,\,\left| h_{\mu\nu} \right| \ll 1, \\
A^\mu &= \bar{A}^\mu + \delta A^\mu,\,\,\,\,\,\,\left|\delta A^\mu \right| \ll 1.
\end{align}
The vector field \(A^\mu\) is decomposed into a background component \(\bar{A}^\mu\) and small perturbations \(\delta A^\mu\). Likewise, \(h_{\mu\nu}\) represents the graviton field emerging as metric perturbation around flat spacetime, where \(\eta_{\mu\nu}\) denotes the Minkowski metric.
\end{definition}

 \begin{proposition}[Euler--Lagrange equations for the background fields]
\label{prop:EOM}
Let $S[A,g]$ be an action functional belonging to the class defined in
Definition~\ref{def:ghostfreevector}.
Then stationary points $(\bar A_\mu,\bar g_{\mu\nu})$ satisfy the coupled
Euler--Lagrange equations obtained by variation with respect to $A_\mu$ and
$g_{\mu\nu}$,
\begin{equation}
\frac{\delta S}{\delta A_\mu}\Big|_{(\bar A,\bar g)} = 0, \qquad
\frac{\delta S}{\delta g_{\mu\nu}}\Big|_{(\bar A,\bar g)} = 0 .
\end{equation}
For the representative realization given by the Generalized Proca interactions \eqref{gpaction},
these equations take the explicit form
\begin{equation}
\frac{1}{2} \bar{G}_2 \eta_{\mu \nu} -\frac{1}{2} \bar{A}_{\mu} \bar{A}_{\nu} \bar{G}_{2,X}=0,
\label{metric_BGEOM}
\end{equation}
for the metric field equations, and
\begin{equation}
\bar{A}^\mu \bar{G}_{2,X}=0.
\label{proca_BGEOM}
\end{equation}
for the vector field equations of motion. All the other functionals trivialise for the flat metric configuration.
\end{proposition}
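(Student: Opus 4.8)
The plan is to treat the two assertions of the proposition separately. The general claim that a stationary point must satisfy $\delta S/\delta A_\mu = 0$ and $\delta S/\delta g_{\mu\nu} = 0$ follows directly from the definition of a stationary point: demanding that the first variation of $S$ vanish under arbitrary compactly supported $\delta A_\mu$ and $\delta g_{\mu\nu}$, and integrating by parts to strip all derivatives off the variations, produces the two Euler--Lagrange operators whose vanishing is the content of the first displayed equation. This part needs no computation beyond the standard variational argument and holds for any Lagrangian of the form in Definition~\ref{def:ghostfreevector}; the substance is entirely in evaluating these operators on the distinguished configuration.

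Accordingly, I would first fix the background as flat spacetime $\bar g_{\mu\nu}=\eta_{\mu\nu}$ with a covariantly constant vector, $\partial_\mu \bar A_\nu = 0$, and record the values of the scalar building blocks \eqref{XYF}. Flatness forces all curvature tensors to vanish ($R_{\mu\nu\rho\sigma}=0$, hence $R=0$, $G_{\mu\nu}=0$, and $\mathcal{L}^{\mu\nu\alpha\beta}=0$), while constancy of $\bar A_\mu$ gives $\nabla_\mu \bar A_\nu = 0$ and $\bar F_{\mu\nu}=0$. Consequently $X=-\tfrac12\bar A_\mu\bar A^\mu$ is a spacetime constant, whereas $Y=0$ and $F=0$. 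These reductions are the engine of the whole computation.

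Next I would evaluate the vector equation $\delta S/\delta A_\mu$ term by term and show that only $\mathcal{L}_2$ survives. The key observation is that every contribution to the Euler--Lagrange expression coming from $\mathcal{L}_3,\dots,\mathcal{L}_6$ carries at least one uncontracted factor of $\nabla A$, of $F_{\mu\nu}$, or of a curvature tensor, each vanishing on the background; moreover, derivatives of the coupling functions $\nabla_\mu G_i = G_{i,X}\,\nabla_\mu X$ vanish because $X$ is constant. Within $\mathcal{L}_2$ the $Y$- and $F$-dependence also drops: $\partial Y/\partial A^\mu$ is proportional to $\bar F$, and the kinetic term produces Maxwell-type pieces $\nabla_\nu(G_{2,F}F^{\mu\nu})\propto\bar F=0$. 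Only $\partial X/\partial A^\mu = -\bar A_\mu$ remains, leaving $\delta S/\delta A_\mu\big|_{\rm bg}\propto \bar A_\mu\bar G_{2,X}$, which is \eqref{proca_BGEOM}.

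The metric equation is handled analogously using $\delta\sqrt{-g}/\delta g^{\mu\nu}=-\tfrac12\sqrt{-g}\,g_{\mu\nu}$ together with $\delta X/\delta g^{\mu\nu}=\tfrac12 A_\mu A_\nu$, so that $\mathcal{L}_2$ yields $-\tfrac12\eta_{\mu\nu}\bar G_2+\tfrac12\bar G_{2,X}\bar A_\mu\bar A_\nu$, which upon vanishing is exactly \eqref{metric_BGEOM}. I expect the main obstacle to be the non-minimal gravitational couplings in $\mathcal{L}_4,\mathcal{L}_5,\mathcal{L}_6$, whose metric variation generates not only curvature factors (Einstein and double-dual Riemann tensors) but also terms carrying two derivatives of the couplings, such as $(g_{\mu\nu}\Box-\nabla_\mu\nabla_\nu)G_4$. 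Demonstrating that these vanish requires invoking both properties of the background at once: flatness eliminates the curvature factors, and constancy of $X$ eliminates $\nabla\nabla G_i(X)$. The argument must therefore track carefully which terms are killed by which property, and once this bookkeeping is complete only the purely algebraic $\mathcal{L}_2$ contribution remains, establishing the proposition.
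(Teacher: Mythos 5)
Your proposal is correct and follows essentially the same route the paper takes implicitly: vary the action, then evaluate on a flat-metric, constant-vector background where $X$ is constant and $Y=F=R_{\mu\nu\rho\sigma}=0$, so that every contribution from $\mathcal{L}_{3},\dots,\mathcal{L}_6$ (and the $Y$-, $F$-dependence of $\mathcal{L}_2$) carries a vanishing factor of $\nabla A$, $F_{\mu\nu}$, curvature, or $\nabla G_i$, leaving only the algebraic $G_2$ terms of Eqs.~\eqref{metric_BGEOM} and \eqref{proca_BGEOM}. A minor point in your favor: you make explicit the assumption $\partial_\mu\bar A_\nu=0$, which the proposition leaves implicit ("flat metric configuration" alone would not trivialize $\mathcal{L}_3$ or the $F$-dependent pieces) but which is clearly intended, as the paper's subsequent Lemma~\ref{lem:flatbackground} and discussion of constant-$\bar A$ branches confirm.
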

Here, quantities evaluated on the background are denoted with an overbar.
\begin{lemma}[Flat background solutions]
\label{lem:flatbackground}
The Minkowski metric $\bar g_{\mu\nu} = \eta_{\mu\nu}$ solves the Euler--Lagrange
equations \eqref{metric_BGEOM} and \eqref{proca_BGEOM} associated with the action
\eqref{gpaction} for the following branches of background configurations:
\begin{align}
&(i)\quad \bar A^\mu = 0, \qquad \bar G_2 = 0, \nonumber\\
&(ii)\quad \bar G_2 = 0, \qquad \bar G_{2,X} = 0,
\label{bgconfig}
\end{align}
where $\bar G_2$ and $\bar G_{2,X}$ denote the background values of the function
$G_2$ and its derivative with respect to $X$.
\end{lemma}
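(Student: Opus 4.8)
The plan is to treat the statement as a direct verification supplemented by a short exhaustiveness argument, relying on the reduced background equations already supplied by Proposition~\ref{prop:EOM}. The essential observation is that on a flat background the Riemann tensor vanishes identically, $\bar R_{\mu\nu\rho\sigma}=0$, so every non-minimal coupling in $\mathcal{L}_4,\mathcal{L}_5,\mathcal{L}_6$ drops out; moreover, for a homogeneous background the field strength $\bar F_{\mu\nu}=\partial_\mu\bar A_\nu-\partial_\nu\bar A_\mu$ and the covariant derivatives $\nabla_\mu\bar A_\nu$ vanish, so the derivative self-interactions of $\mathcal{L}_3$ through $\mathcal{L}_6$ contribute nothing. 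This is precisely the content of the final sentence of Proposition~\ref{prop:EOM}, and it reduces the coupled system to the two scalar-function equations \eqref{metric_BGEOM} and \eqref{proca_BGEOM}, both built solely from $\bar G_2$ and $\bar G_{2,X}$.

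First I would verify each branch by substitution. For branch $(i)$, setting $\bar A^\mu=0$ makes \eqref{proca_BGEOM} hold trivially, while \eqref{metric_BGEOM} collapses to $\tfrac{1}{2}\bar G_2\,\eta_{\mu\nu}=0$, which holds if and only if $\bar G_2=0$; imposing this is exactly the stated condition. For branch $(ii)$, the hypothesis $\bar G_{2,X}=0$ annihilates the second term of \eqref{metric_BGEOM} together with the whole of \eqref{proca_BGEOM}, while $\bar G_2=0$ annihilates the remaining term, so both equations are satisfied for any constant $\bar A^\mu$. This establishes that the Minkowski metric is a solution along each branch.

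Next I would show that these two branches are in fact exhaustive, which clarifies why precisely they appear. Equation \eqref{metric_BGEOM} asserts that the full-rank tensor $\bar G_2\,\eta_{\mu\nu}$ equals the rank-at-most-one tensor $\bar A_\mu\bar A_\nu\,\bar G_{2,X}$. If $\bar A^\mu=0$ we land in branch $(i)$ as above. If instead $\bar A^\mu\neq 0$, then a scalar multiple of the non-degenerate Minkowski metric can coincide with a rank-one tensor only when that multiple vanishes, forcing $\bar G_2=0$; the residual condition $\bar A_\mu\bar A_\nu\,\bar G_{2,X}=0$, evaluated on a nonvanishing diagonal component $(\bar A_\rho)^2\neq 0$, then forces $\bar G_{2,X}=0$, which is branch $(ii)$. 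The case split over $\bar A^\mu=0$ versus $\bar A^\mu\neq 0$ is clean and leaves no further possibilities.

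The computation is elementary once the reduction of Proposition~\ref{prop:EOM} is in hand, so I do not expect a genuine obstacle. The one point requiring care is the justification that the derivative and curvature couplings truly trivialize: one must confirm that the admissible background is homogeneous, so that $\nabla_\mu\bar A_\nu=0$, and invoke $\bar R_{\mu\nu\rho\sigma}=0$, rather than merely asserting that the higher $\mathcal{L}_i$ drop out. Since Proposition~\ref{prop:EOM} already furnishes this reduction, the remaining work reduces to the algebraic verification and the rank argument above.
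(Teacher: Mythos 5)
Your proposal is correct and follows essentially the same route the paper takes implicitly: the lemma is an immediate substitution check against the reduced background equations \eqref{metric_BGEOM} and \eqref{proca_BGEOM} supplied by Proposition~\ref{prop:EOM}, which is exactly what you carry out. Your additional rank-based exhaustiveness argument (a nonzero multiple of the non-degenerate $\eta_{\mu\nu}$ cannot equal the rank-one tensor $\bar A_\mu \bar A_\nu \bar G_{2,X}$, forcing $\bar G_2=0$ and then $\bar G_{2,X}=0$ when $\bar A^\mu\neq 0$) goes slightly beyond what the paper states, correctly showing the two branches exhaust all solutions.
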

While the second branch $(ii)$ allows for a non-vanishing Proca background $\bar{A}^\mu\neq 0$, this comes at the expense of local Lorentz invariance violation. For example, such a configuration could arise from a ``Mexican-hat-type'' potential of the form $G_2(X)=(X-\frac{1}{2}\bar{A}^2)^2$, which realizes the additional constraint on $\bar{G}_{2,X}$ and spontaneously breaks the local Lorentz symmetry. However, a non-zero background vector field would carry over to the perturbative expansion and dress the resulting propagators and interaction vertices, thus greatly increasing the theory's complexity with exceeding computational costs. For this reason, we focus on the first solution branch $(i)$ and, from now on, assume a vanishing background configuration for the Proca field to maintain local Lorentz symmetry.

\begin{remark}[stationary point of the action]
The choice of a Minkowski background is made for technical simplicity and suffices to capture the local ultraviolet structure of the theory relevant for the analysis of counterterms.
\end{remark}

\subsection{Perturbative Expansion: Quadratic Order}
In this section, we consider the second variation of the action functional associated with the Generalized Proca interactions \eqref{gpLagr} around the background configuration \eqref{bgconfig}.
In particular, we perform a perturbative expansion in the fluctuations \( \delta A_\mu \) and \( h_{\mu\nu} \) about the Minkowski background, retaining terms up to quadratic order that capture the dynamics of the Proca field and the graviton. 
\begin{proposition}[Quadratic action and second variation]
\label{prop:quadraticaction}
The quadratic action governing the dynamics of perturbations
$(h_{\mu\nu},\delta A_\mu)$ is given by the second variation of the action
functional \eqref{gpaction} evaluated at the background configuration,
\begin{equation}
S^{(2)}[h,\delta A]
= \frac12 \,\delta^2 S\big|_{(\bar g,\bar A)} .
\end{equation}
For Generalized Proca interactions, $S^{(2)}$ is a local functional involving at
most two derivatives acting on each perturbative field.
\end{proposition}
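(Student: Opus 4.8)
The plan is to combine a structural argument based on the second-order character of the field equations with an explicit, term-by-term expansion of the Lagrangian densities \eqref{gpLagr} about the chosen background. The structural input is Proposition~\ref{prop:proca_realization}: the full nonlinear equations of motion of the Generalized Proca interactions are of second differential order in both $A_\mu$ and $g_{\mu\nu}$. Since linearization about a background configuration is order-preserving — the linearized field equations are obtained by acting with $\delta^2 S/\delta\Phi\,\delta\Phi$ on the fluctuation $\Phi=(h_{\mu\nu},\delta A_\mu)$ — the quadratic kinetic operator $K$ defined by $S^{(2)}[\Phi]=\tfrac12\int \Phi\,K\,\Phi$ inherits a differential order of at most two. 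After integration by parts this operator can always be brought to a formally self-adjoint form with at most two derivatives distributed over the two fields, hence at most two acting on any single field. This already establishes the claim abstractly; the remainder of the argument makes the surviving structure explicit, which is needed for the vertex computations of the following sections.

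First I would exploit the flat, vanishing-vector background of branch $(i)$ in Lemma~\ref{lem:flatbackground}. Since $\bar A_\mu=0$ and $\bar g_{\mu\nu}=\eta_{\mu\nu}$, the scalar invariants satisfy $X=O(\delta A^2)$, $F=O(\delta A^2)$ and $Y=O(\delta A^4)$, so every coupling function contributes to quadratic order only through its background value $G_i(0)$ (or $G_{i,X}(0)$ where it already multiplies a quadratic structure). Consequently each $G_i(X)$ may be replaced by a constant and all derivative terms of the couplings are pushed beyond quadratic order. This is the single most effective simplification and reduces the expansion to a finite, manageable set of contributions.

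Next I would isolate, Lagrangian by Lagrangian, the pieces that are exactly quadratic in $\Phi$. From $\mathcal{L}_2$ one obtains the Maxwell term $\propto F$ and the Proca mass term $\propto X$, each carrying at most one derivative per field. From $\mathcal{L}_4$, the non-minimal term $G_4(0)R$ yields the Fierz--Pauli graviton kinetic operator, whose canonical $h\,\partial^2 h$ form saturates the two-derivative bound, while the antisymmetric bracket $(\nabla\!\cdot\!A)^2-\nabla_\rho A_\sigma\nabla^\sigma A^\rho$ reduces to a total derivative on flat space and therefore contributes no flat-space quadratic dynamics. The mixing term $G_5(0)\,G_{\mu\nu}\nabla^\mu A^\nu$ of $\mathcal{L}_5$ survives, carrying two derivatives on $h$ through the linearized Einstein tensor and one on $\delta A$. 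All remaining structures in $\mathcal{L}_3$, $\mathcal{L}_5$ and $\mathcal{L}_6$ are either linear, cubic, or multiply curvature invariants ($R$, $G_{\mu\nu}$, $\mathcal{L}^{\mu\nu\alpha\beta}$) that vanish on the flat background, and hence drop out at this order.

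The main obstacle is the bookkeeping for the derivative self-interaction terms, where covariant derivatives, connection corrections, and the curvature expansion could naively produce three or more derivatives on a single field. The key point to verify carefully is that the antisymmetric index contractions engineered for ghost-freeness — the same structures guaranteeing second-order equations of motion — force these would-be higher-derivative pieces either to organize into total derivatives, as in the $\mathcal{L}_4$ bracket, or to appear multiplied by curvature tensors that are themselves at least linear in $h$, thereby shifting the offending contributions to cubic and higher order. Confirming that no residual three-derivative quadratic term survives, consistently with the abstract order-counting argument above, completes the proof.
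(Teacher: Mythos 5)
Your two-pronged strategy is sound in outline, and the first prong is genuinely different from what the paper does: the paper proves the proposition by brute force, expanding \eqref{gpLagr} term by term about branch $(i)$ of Lemma~\ref{lem:flatbackground} and reading off that only $\mathcal{L}_2$ and $\mathcal{L}_4$ contribute, whereas your order-preservation argument (linearizing second-order field equations yields a second-order kinetic operator, hence at most two derivatives per field after integration by parts) derives the derivative bound abstractly from Proposition~\ref{prop:proca_realization}. That argument is valid and buys generality across all realizations of Definition~\ref{def:ghostfreevector}, with the caveat that higher-derivative total-derivative terms are excluded only modulo integration by parts — which the proposition implicitly permits, since the paper itself states the result ``after suitable integration by parts.''

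However, your explicit verification contains a genuine error: the claim that the mixing term $G_5(0)\,G_{\mu\nu}\nabla^\mu A^\nu$ ``survives'' at quadratic order. It does not. For the constant part $G_5(0)$ this term is a total derivative already in the full theory, since $\sqrt{-g}\,G_{\mu\nu}\nabla^\mu A^\nu=\sqrt{-g}\,\nabla^\mu\!\left(G_{\mu\nu}A^\nu\right)$ by the contracted Bianchi identity $\nabla^\mu G_{\mu\nu}=0$; equivalently, at quadratic order about flat space the linearized Einstein tensor is identically divergence-free, $\partial^\mu G^{(1)}_{\mu\nu}=0$, so $G^{(1)}_{\mu\nu}\,\partial^\mu\delta A^\nu=\partial^\mu\!\left(G^{(1)}_{\mu\nu}\,\delta A^\nu\right)$ drops from the action. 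This is exactly why the paper finds $\mathcal{L}_5^{(2)}=0$: only the $X$-dependence of $G_5$ (and of $g_5$, $G_6$) generates genuine interactions, and those start at cubic order. Your error does not endanger the two-derivative count (the would-be surviving term has two derivatives on $h$ and one on $\delta A$), but it misidentifies the quadratic action itself: a nonvanishing $h$--$\delta A$ kinetic mixing would contradict the diagonal structure \eqref{Lkin} on which the propagators, Corollary~\ref{cor:DOF}, and the later observation that mixed vector--graviton two-point functions are absent at tree level all rely. A smaller imprecision of the same kind: the quadratic pieces of $\mathcal{L}_3$ are not absent because the term is ``linear'' — $G_3(0)\sqrt{-g}\,\nabla_\mu A^\mu=G_3(0)\,\partial_\mu(\sqrt{-g}\,A^\mu)$ does contain $h\,\partial\delta A$ and $(\partial h)\,\delta A$ pieces — but because they assemble into this same total divergence; the conclusion stands, but the stated reason is not the right one.
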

After suitable integration by parts, the resulting quadratic contributions read
\begin{equation}
\begin{aligned}
\mathcal{L}_2^{(2)} &= 
-\frac{1}{2}\bar{G}_{2,F} \, \partial^\nu\delta A^\mu \,  \partial_\nu \delta A_\mu
+ \frac{1}{2}\bar{G}_{2,F} \, \partial^\nu\delta A^\mu \, \partial_\mu \delta A_\nu
- \frac{1}{2} \bar{G}_{2,X}\, \delta A_\mu \delta A^\mu \,  \,, \\[6pt]
\mathcal{L}_3^{(2)} &= 0 \,, \\[6pt]
\mathcal{L}_4^{(2)} &= 
- \frac{1}{2} \bar{G}_4 \, h^{\mu\nu} \, \partial^\rho \partial_\nu h_{\mu\rho}
+ \frac{1}{2} \bar{G}_4 \, h^\mu{}_\mu \, \partial^\rho \partial^\nu h_{\nu\rho}
+ \frac{1}{4} \bar{G}_4 \, h^{\mu\nu} \, \partial^2 h_{\mu\nu}
- \frac{1}{4} \bar{G}_4 \, h^\mu{}_\mu \, \partial^2 h^\nu{}_\nu \,, \\[6pt]
\mathcal{L}_5^{(2)} &= 0 \,, \\[6pt]
\mathcal{L}_6^{(2)} &= 0 \,.
\end{aligned}
\end{equation}
where $\partial^2\equiv\eta^{\mu\nu}\partial_\mu\partial_\nu$ denotes the flat-space d'Alembertian. Clearly, only the Lagrangians $\mathcal{L}_2$ and $\mathcal{L}_4$ contribute, as they contain the kinetic terms for the Proca field and the graviton, whereas the other Lagrangians purely describe interactions and hence vanish at quadratic order in perturbations. After canonical normalization of the fields  
\begin{equation}
\label{can}
    \delta A_\mu\equiv\frac{\delta A_\mu^{\rm{can}}}{\sqrt{\bar{G}_{2,F}}},\,\,\,\,\,h_{\mu\nu}\equiv \sqrt{\frac{2}{\bar{G}_4}}h_{\mu\nu}^{\rm{can}},
\end{equation}
and introducing the field strength tensor $F^{\delta A}_{\mu\nu}\equiv\partial_\mu\delta A_\nu-\partial_\nu\delta A_\nu$ on flat space, the quadratic Lagrangian becomes

\begin{equation}
\mathcal{L}^{(2)}_{\rm{can}} = -\frac{1}{4} F_{\mu\nu}^{\delta A}F^{\mu\nu}_{\delta A}
- \frac{1}{2} \, \frac{\bar{G}_{2,X}}{\bar{G}_{2,F}} \, \delta A^\mu \delta A_\mu + \frac{1}{2}h_{\mu\nu}\mathcal{O}^{\mu\nu\alpha\beta}h_{\alpha\beta},
\label{Lcan}
\end{equation}
where we omit the superscript $(\cdot)^{can}$ from now on, assuming that all quantities are canonically normalized. Furthermore, we have introduced the graviton kinetic operator \cite{Hinterbichler:2011tt}
\begin{equation}
    \mathcal{O}^{\mu\nu}_{\alpha\beta}\equiv (\eta_\alpha^{(\mu}\eta_\beta^{\nu)}-\eta^{\mu\nu}\eta_{\alpha\beta})\partial^2-2\partial^{(\mu}\partial_{(\alpha}\eta_{\beta)}^{\nu)}+\partial^\mu\partial^\nu\eta_{\alpha\beta}+\partial_\alpha\partial_\beta\eta^{\mu\nu}.
    \label{kinop}
\end{equation}
The graviton sector in \eqref{Lcan} is invariant under linearized diffeomorphisms of the form
\begin{equation}
h_{\mu\nu} \rightarrow h_{\mu\nu} + \partial_\mu \xi_\nu + \partial_\nu \xi_\mu \,,
\end{equation}
where \( \xi_\mu \) is an arbitrary vector field parametrizing infinitesimal coordinate transformations. This gauge symmetry introduces redundancies in the description of the graviton degrees of freedom, rendering the kinetic operator \eqref{kinop} non-invertible. To properly define the graviton propagator, one must add a gauge-fixing term to the kinetic Lagrangian in order to eliminate the gauge degrees of freedom. Employing the harmonic gauge implemented in a Fadeev-Popov procedure \cite{Hinterbichler:2011tt}
\begin{equation}
\partial^\mu h_{\mu\nu} - \frac{1}{2} \partial_\nu h = 0,
\end{equation}
requires adding the following gauge-fixing Lagrangian
\begin{equation}
\mathcal{L}_{\text{GF}} = -\left( \partial^\mu h_{\mu\nu} - \frac{1}{2} \partial_\nu h \right)^2 \,.
\end{equation}
This term removes the degeneracy of the graviton's kinetic operator in \eqref{Lcan} and allows for a consistent definition of the graviton propagator.

Including the gauge-fixing contribution and performing integrations by parts, the final quadratic Lagrangian takes the following form
\begin{equation}
\begin{aligned}
\label{Lkin}
\mathcal{L}_{\rm{kin}}^{(2)} =\ & \mathcal{L}_{\rm{can}}^{(2)}+\mathcal{L}_{\rm{GF}},\\[6pt]
=\ &-\frac{1}{4} F_{\mu\nu}^{\delta A}F^{\mu\nu}_{\delta A}
+ \frac{1}{2} \, m_{\rm{eff}}^2 \, \delta A^\mu \delta A_\mu + \frac{1}{2} \, h^{\mu\nu} \, \partial^\rho \partial_\rho h_{\mu\nu}
- \frac{1}{4} \, h \, \partial^\rho \partial_\rho h \,,
\end{aligned}
\end{equation}
where we have denoted the effective Proca mass by $m_{\rm{eff}}^2\equiv -\bar{G}_{2,X}/\bar{G}_{2,F}$, whose value depends on the precise theory at hand\footnote{In theories with standard kinetic and mass term, this effective mass just reduces to the bare Proca mass, as we expand around a vanishing Proca background with $\bar{X}=0$ and $\bar{F}=0$, such that $\bar{G}_{2,F}=1$ and $\bar{G}_{2,X}=-m^2$.}. Here, the first two contributions correspond to the kinetic and mass contribution of the Proca field. The last two terms describe the graviton kinetic term after including the gauge-fixing contribution. The resulting quadratic action makes manifest the constraint structure responsible for the fixed degree-of-freedom count required by Definition~\ref{def:ghostfreevector} and provides the starting point for the perturbative quantization and the analysis of radiative corrections carried out in later sections. 
\begin{corollary}[Perturbative degrees of freedom]
\label{cor:DOF}
The quadratic action $S^{(2)}$ propagates two tensorial degrees of freedom
associated with the metric perturbations and three vectorial degrees of freedom
associated with the Proca field, with no additional ghost-like excitations.
\end{corollary}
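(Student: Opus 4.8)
The plan is to exploit the fact that the gauge-fixed quadratic Lagrangian \eqref{Lkin} decomposes into two sectors that do not mix at this order: a Proca sector built from $\delta A_\mu$ and a linearized-gravity sector built from $h_{\mu\nu}$. Since there are no cross terms, the physical degree-of-freedom content factorizes, so I would count the two sectors independently and add the results. I would work in the canonical (Hamiltonian) formulation and invoke Dirac's theory of constrained systems \cite{Dirac:1964,Henneaux:1992ig}, so that a ``degree of freedom'' means one conjugate pair of phase-space variables per spacetime point.

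For the Proca sector, $\mathcal{L}_{\delta A} = -\tfrac14 F^{\delta A}_{\mu\nu}F^{\mu\nu}_{\delta A} + \tfrac12 m_{\rm{eff}}^2\,\delta A^\mu\delta A_\mu$ is exactly the canonical massive-vector Lagrangian, with $m_{\rm{eff}}^2=-\bar G_{2,X}/\bar G_{2,F}>0$ in the healthy branch. First I would compute the canonical momenta $\pi^\mu = \partial\mathcal{L}/\partial(\partial_0\delta A_\mu)$ and note that $\pi^0\approx 0$ because $F^{\delta A}_{00}=0$; this primary constraint signals that $\delta A_0$ carries no conjugate momentum and is therefore nondynamical. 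Demanding that the primary constraint be preserved under time evolution then yields a secondary, Gauss-law-type constraint that fixes $\delta A_0$ algebraically in terms of the spatial data $(\delta A_i,\pi^i)$. Because the mass term breaks the linearized $U(1)$ gauge invariance, the Poisson bracket of the two constraints is proportional to $m_{\rm{eff}}^2$ and hence nonvanishing: the constraints are second class. Removing one conjugate pair from the naive eight-dimensional phase space $(\delta A_\mu,\pi^\mu)$ leaves six, i.e.\ exactly three propagating degrees of freedom — the two transverse polarizations together with the longitudinal mode.

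For the graviton sector, I would observe that before gauge fixing the relevant term is the Fierz--Pauli massless kinetic operator $\tfrac12 h_{\mu\nu}\mathcal{O}^{\mu\nu\alpha\beta}h_{\alpha\beta}$ of \eqref{Lcan}, invariant under the linearized diffeomorphisms $h_{\mu\nu}\to h_{\mu\nu}+\partial_\mu\xi_\nu+\partial_\nu\xi_\mu$. The counting then proceeds as for linearized Einstein gravity: the symmetric tensor $h_{\mu\nu}$ has ten components, the four gauge parameters $\xi_\mu$ allow four gauge conditions, and the residual gauge freedom (harmonic $\xi_\mu$ satisfying $\partial^2\xi_\mu=0$) removes four more, leaving $10-4-4=2$ physical polarizations. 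Equivalently, the harmonic gauge fixing renders $\mathcal{O}^{\mu\nu\alpha\beta}$ invertible at the cost of Faddeev--Popov ghosts, whose role is precisely to cancel the unphysical graviton modes so that only the two transverse-traceless polarizations survive on shell.

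Ghost-freedom I would establish by checking positivity of the residues at the physical poles of the propagators following from \eqref{Lkin}, or equivalently positive-definiteness of the reduced Hamiltonian on the constraint surface: in the Proca sector the three surviving modes carry correct-sign kinetic and mass terms once $\bar G_{2,F}>0$ and $m_{\rm{eff}}^2\ge 0$, while the would-be ghostly fourth polarization never propagates because $\delta A_0$ has been eliminated by the second-class constraints. I expect the ghost-freedom statement, rather than the raw counting, to be the main obstacle: one must verify that eliminating $\delta A_0$ through the constraint leaves a positive-definite kinetic form for the remaining vector modes (and that the analogous statement holds for the transverse-traceless graviton once the gauge and Faddeev--Popov sectors are accounted for). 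This positivity is the quadratic-order manifestation of the constraint structure demanded in Definition~\ref{def:ghostfreevector}(iii).
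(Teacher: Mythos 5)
Your proposal is correct and follows essentially the same route as the paper, which states the corollary as an immediate consequence of the fact that the gauge-fixed quadratic Lagrangian \eqref{Lkin} splits, with no cross terms, into a standard Proca sector and a linearized (gauge-fixed) Einstein--Hilbert sector. Your Dirac constraint analysis for the vector (second-class pair $\pi^0\approx 0$ and the Gauss-law constraint, giving $8-2=6$ phase-space dimensions, i.e.\ three modes) and the $10-4-4=2$ counting for the graviton simply make explicit the standard sector-by-sector degree-of-freedom count that the paper leaves implicit.
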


From the above quadratic Lagrangian, the momentum-space propagators for the canonically normalized fields are readily found
\begin{equation}
\begin{aligned}
 &\hspace{10pt}\feynmandiagram[
   inline=(a.base),
   horizontal=a to b
 ] {
  a [particle=\(\mu\)]
    -- [fermion, momentum=\(p\)]
  b [particle=\(\nu\), right=18mm of a],
 };
 \hspace{32pt}
 D_{\mu\nu}(p) =
 \frac{i}{p^2 - m_{\rm eff}^2}
 \left( -\eta_{\mu\nu}
 + \frac{p_\mu p_\nu}{m_{\rm eff}^2} \right),\\[10pt]
 &\feynmandiagram[
   inline=(a.base),
   horizontal=a to b
 ] {
  a [particle=\(\mu\nu\)]
    -- [gluon, momentum=\(p\)]
  b [particle=\(\rho\sigma\), right=20mm of a],
 };
 \qquad
 D_{\mu\nu\rho\sigma}(p) =
 \frac{-i}{p^2}
 \frac{
   \eta_{\mu\rho} \eta_{\nu\sigma}
 + \eta_{\mu\sigma} \eta_{\nu\rho}
 - \eta_{\mu\nu} \eta_{\rho\sigma}
 }{2}.
\end{aligned}
\end{equation}
with implicit Feynman prescription. Notice that at high energies $E\gg m_{\rm{eff}}$, the vector propagator goes like $\sim 1/m_{\rm{eff}}^2$ rather than $\sim 1/p^2$, which invalidates standard power counting arguments. The good high-energy behaviour can be restored through a gauge-fixing procedure, which we relegate to Sec. \ref{DL_Hierarchy}.

A key conceptual ingredient in our analysis is the decoupling limit, which we employ not merely as a computational simplification, but as a structural tool for controlling quantum corrections. The decoupling limit defines a precise scaling regime in which the strong-coupling scales of the theory are held fixed while all other mass scales are parametrically separated. In this limit, the longitudinal mode of the Proca field and its interactions with the graviton isolate the maximally sensitive sector of the theory, in the sense that any local operator capable of destabilizing the effective field theory below its cutoff must already appear in this limit. Consequently, the classification of radiatively generated operators in the decoupling limit provides a complete characterization of all admissible quantum counterterms within the effective field theory. The absence of lower-derivative or unsuppressed operators in this limit therefore implies a non-renormalization of the classical interactions and establishes radiative stability to all orders in perturbation theory.




\section{Interaction Structure}
\label{InteractionStructure}
To facilitate the perturbative expansion at higher orders, we now focus on a specific subset of the Generalized Proca theories by adopting the following minimal model describing a massive vector field minimally coupled to gravity

\begin{equation}
\begin{aligned}
    &G_2(X, F) = F - m^2X , \\
    &G_3(X) = -\frac{m^2}{\Lambda_2^2} c_3\, X, \\
    &G_4(X) = \frac{M_\text{pl}^2}{2},\\
    &G_5(X) = g_5(X)=G_6(X)=0,
    \label{minmodel}
\end{aligned}
\end{equation}
where \( c_3 \) is a dimensionless constant of order unity and the prefactor of $-1$ in $G_3$ was introduced for mere convenience. The classical scales of the theory are the vector mass $m$, the Proca effective field theory cutoff $\Lambda_2$ and the reduced Planck mass $M_\text{pl}$ that are parametrically far separated $m\ll \Lambda_2\ll M_{\rm{pl}}$, an important assumption for the Generalized Proca effective field theory to be well-defined. In that sense, the dimensionless factor $m/ \Lambda_2$ can be regarded as a small coupling constant controlling the size of the vector interactions. The coupling functions are dimensionful quantities and are chosen such that the resulting Lagrangians have the right dimension. Notice that the gravitational sector is purely described by the Einstein-Hilbert action, which admits a perturbative expansion of the form

\begin{equation}
    \mathcal{L}_4\sim \sum_{n=0}^\infty(\partial h)^2\left(\frac{h}{M_{\rm pl}}\right)^n,
\end{equation}
meaning that the Proca field is coupled minimally to gravity. Therefore, non-trivial vector-tensor interactions will appear only through perturbative expansions of the covariant derivatives in $F$ and the metric determinant in the action \eqref{gpaction}. Schematically, these terms can be expanded as
\begin{equation}
    \sqrt{-g}\sim 1+\frac{h}{M_{\rm pl}},\qquad F\sim (\nabla A)^2\sim(\partial\,\delta A)^2+\frac{1}{M_{\rm pl}}(\partial\,\delta A)(\partial h)\delta A+\frac{1}{M_{\rm pl}^2}(\partial h)^2\delta A^2,
\end{equation}
where we expanded $F$ up to quartic order, as higher-order terms would simply introduce additional factors of $\sim \frac{h}{M_{\rm pl}}$. Although the graviton is heavily suppressed by the Planck mass, the non-linear way through which it enters in these expansions can generate non-trivial mixings of the different scales and thus spoil the critical effective field theory hierarchy. Nevertheless, in Sec. \ref{DL_Hierarchy} we will establish  the natural hierarchy $m \ll \Lambda_3 \ll \Lambda_4 \ll \Lambda_2 \ll M_{\mathrm{pl}}$. Focusing on expanding the Generalized Proca Lagrangians in model \eqref{minmodel} to cubic order in perturbations, yields

\begin{equation}
\begin{aligned}
\label{Lcubic}
    \mathcal{L}^{(3)}_{2} &= - \frac{1}{4M_{\rm pl}}\, h^\mu{}_\mu\, F^{\delta A}_{\beta\alpha}\, F_{\delta A}^{\beta\alpha}+
    \frac{m^2}{2M_{\rm pl}}\, h^\mu{}_\mu\, \delta A_\alpha\, \delta A^\alpha
    + \frac{m^2}{M_{\rm pl}}\, h_{\alpha\beta}\, \delta A^{\alpha}\, \delta A^{\beta}\\
    &\quad - \frac{1}{M_{\rm pl}}\, h_{\alpha\gamma}\, (\partial_{\beta} \delta A^{\gamma})\, (\partial^{\beta} \delta A^{\alpha})
    + \frac{1}{M_{\rm pl}}\, h_{\beta\gamma}\, (\partial^{\beta} \delta A^{\alpha})\, (\partial^{\gamma} \delta A_{\alpha})\\
    &\quad + \frac{2}{M_{\rm pl}}\, \delta A^{\alpha}\, (\partial_{\beta} h_{\alpha\gamma})\, (\partial^{\gamma} \delta A^{\beta})
    - \frac{2}{M_{\rm pl}}\, \delta A^{\alpha}\,( \partial_{\gamma} h_{\alpha\beta})\, (\partial^{\gamma} \delta A^{\beta}),  \\[5pt]
    \mathcal{L}^{(3)}_{3} &= 
    \frac{m^2}{2\Lambda_2^2}\,c_3\, \delta A_\alpha\, \delta A^\alpha\,(\partial_\beta \delta A^\beta), 
\end{aligned}
\end{equation}
In $\mathcal{L}_3^{(3)}$, we recognize the pure cubic Proca interaction from flat space, whereas expanding $\mathcal{L}_2$ to third order brought about genuinely new interactions mixing the vector and tensor sector. Evidently, the kinetic and mass term of the Proca field couple to the graviton through perturbing the metric determinant, but the field strength tensor $F_{\mu\nu}$ also gets broken apart as we perturb the inverse metric used to raise indices in $F$ (see Eq. \eqref{XYF}), resulting in terms $\sim h(\partial\delta A)^2$, and through expanding the covariant derivatives, yielding terms of the form $\sim (\partial h) \delta A$. 

Calculating the Generalized Proca Lagrangians at fourth order in perturbations yields, after suitable integration by parts, the following quartic interactions

\begin{align} 
\mathcal{L}^{(4)}_{2} &\sim 
    \frac{m^2}{M_{\rm{pl}}^2}\,h^2\delta A^2+\frac{1}{M_{\rm{pl}}^2}\,h^2 F_{\delta A}^2+ \frac{1}{M_{\rm{pl}}^2}\,h^2(\partial \delta A)^2+ \frac{1}{M_{\rm{pl}}^2}\,h\,(\partial \delta A) (\partial h) \delta A+ \frac{1}{M_{\rm{pl}}^2}\,(\partial h)^2\delta A^2,\notag\\[10pt]
\label{Lquartic}
\mathcal{L}^{(4)}_{3} &\sim 
    \frac{m^2}{2M_{\rm pl}\Lambda_2^2}\,h\, \delta A^2\,(\partial \delta A),
\end{align}
where we have suppressed the Lorentz indices for better readability and give the explicit Lagrangians in Appendix \ref{app1}.

\begin{remark}[Generality of the interaction terms]
\label{rem:generality}
The specific form of the interaction terms \eqref{minmodel} introduced in this section should be understood as representative of the general class of ghost-free Generalized Proca effective field theories, rather than as the result of additional fine tuning. Our conclusions do not rely on special choices of coefficients beyond those required to ensure the absence of Ostrogradsky instabilities at the classical level. In particular, the non-renormalization and radiative stability results derived in the subsequent sections depend only on the derivative structure of the interactions, the hierarchy of strong-coupling scales, and the existence of a well-defined decoupling limit.
\end{remark}

Any ghost-free Generalized Proca theory related to the present formulation by field redefinitions or by the addition of operators suppressed by the effective field theory cutoff exhibits the same pattern of quantum corrections and satisfies the same non-renormalization properties.




\section{Hierarchy of Scales and Decoupling Limit}
\label{DL_Hierarchy}
On flat space, Generalized Proca theories are closely related to their scalar sibling, the Galileon theories. This relationship becomes evident at high energies far beyond the Proca mass, where the dynamics is governed by the longitudinal mode which decouples from the transverse polarizations, such that large parts of the interactions reduce to their corresponding Galileon counterparts. To make this behaviour explicit, it is convenient to perform the Stückelberg trick \cite{Stueckelberg:1900zz, Hinterbichler:2011tt, Heisenberg:2018vsk, Heisenberg:2020jtr}
\begin{equation}
\label{stb}
    \delta A_\mu \rightarrow \delta A_\mu + \frac{1}{m}\partial_\mu \phi,
\end{equation}
where the additional (canonically normalized) scalar field $\phi$ is introduced to restore a formal gauge symmetry that was broken explicitly by the Proca mass term. The scalar plays the role of the associated Goldstone boson which non-linearly realizes the gauge symmetry, while the field strength tensors $F_{\mu\nu}^{\delta A},\,\tilde{F}_{\mu\nu}^{\delta A}$ remain invariant under the replacement \eqref{stb}, as it is modelled after the broken gauge symmetry. The resulting theory is now invariant under the following joint gauge redundancy
\begin{equation}
    \label{gaugesymm}
    \phi\rightarrow \phi+m\,\alpha,\quad \delta A_\mu\rightarrow \delta A_\mu-\partial_\mu\alpha.
\end{equation}
By choosing the unitary gauge $\alpha=-\frac{\phi}{m}$, we obtain $\phi=0$ such that  the original minimal model \eqref{Lkin}, \eqref{Lcubic}-\eqref{Lquartic} is restored,  demonstrating that the two theories are identical. Employing instead the Lorentz-like gauge $\partial_\mu\delta A^\mu+m\,\phi=0$ in a Fadeev-Popov procedure diagonalizes the Stückelberg-ed kinetic Lagrangian $\mathcal{L}_{2}^{(2)}$ and hence results in the following propagators for $\delta A^\mu$ and $\phi$ \cite{Hinterbichler:2011tt}
\begin{equation}
    D_{\mu\nu}(p)=\frac{-i\,\eta_{\mu\nu}}{p^2+m^2},\qquad D(p)=\frac{-i}{p^2+m^2},
\end{equation}
which are well-behaved in the high-energy limit $p^2\gg m^2$. 
Applying the Stückelberg trick to the Generalized Proca interactions \eqref{Lcubic}-\eqref{Lquartic} gives a number of non-renormalizable terms of the following schematic form
\begin{equation}
\begin{aligned}
    \label{LpreDL}
    &\mathcal{L}_2^{(3)}\supset \frac{1}{M_{\rm{pl}}m} \, h\,(\partial^2\phi)F_{\delta A}+\frac{1}{M_{\rm{pl}}m}(\partial h)(\partial\phi)F_{\delta A},\\[5pt]
    &\mathcal{L}_3^{(3)}\supset \frac{c_3}{\Lambda_2^2m}(\partial \phi)^2(\partial^2\phi),\\[5pt]
    &\mathcal{L}_2^{(4)}\supset \frac{1}{M_{\rm{pl}}^2m^2}\,h^2\,(\partial^2\phi)^2+ \frac{1}{M_{\rm{pl}}^2m^2}\,h\,(\partial^2\phi)(\partial h)(\partial\phi)+ \frac{1}{M_{\rm{pl}}^2m^2}(\partial h)^2(\partial\phi)^2,
\end{aligned}
\end{equation}
where the full schematic expressions are relegated to Appendix \ref{app2}. In these interactions, we see two new energy scales emerging
\begin{equation}
\label{scscale}
    \Lambda_3\equiv (\Lambda_2m)^{1/3},\qquad \Lambda_4\equiv\sqrt{M_{\rm{pl}}m}.
\end{equation}
The first scale $\Lambda_3$ is the familiar strong coupling scale of Generalized Proca theory on flat space that arises in the pure scalar sector \cite{Heisenberg:2014rta,Jimenez:2016isa,Heisenberg:2018vsk,Heisenberg:2020jtr}. The second scale indicates when the Proca field and the graviton become strongly interacting, which can be seen by looking at e.g. the $2\rightarrow 2$ tree-level scattering amplitude $\mathcal{M}_{2\rightarrow 2}\sim \frac{E^4}{M_{\rm{pl}}^2m^2}$ stemming from operators in $\mathcal{L}_2^{(4)}$. Evidently, $\mathcal{M}_{2\rightarrow 2}$ becomes of order one at energies $E\sim \Lambda_4$, beyond which unitarity is violated, and hence $\Lambda_4$ marks the maximal cutoff of the Stückelberg theory. 

Being ignorant of the precise Lorentz structure of the theory, one might have naively expected this scale to lie at energies $\sim (M_{\rm pl}m^2)^{1/3}$, originating from terms $\sim h(\partial^2\phi)^2$, $(\partial h)(\partial^2\phi)(\partial\phi)\subset \mathcal{L}_2^{(3)}$. However, the coefficients of Generalized Proca theory are carefully tuned to avoid ghosts, which leads to the cancellation of such terms with the poorest behaviour. This is in close analogy to massive gravity, where the broken diffeomorphism invariance is restored à la Stückelberg through an additional vector and scalar field. The resulting theory has a strong coupling scale set by higher-order scalar self-interactions, causing the propagation of a ghostly degree of freedom (Boulware-Deser ghost). Explicitly tuning the Lagrangian coefficients, as in Generalized Proca theory, allows to eliminate these interactions, which raises the cutoff to a higher value and renders the theory manifestly ghost-free \cite{Heisenberg:2014rta,Jimenez:2016isa,Hinterbichler:2011tt}. 

To ensure a natural hierarchy of scales, we assume a small classical Proca coupling constant with $m^2\ll \Lambda_2$ such that the scale $\Lambda_3$ lies parametrically far above the mass, serving as an important criterion for the Proca effective field theory to be valid. Further assuming the effective field theory cutoff to obey $\Lambda_2^4\ll m M_{\rm{pl}}^3$, we guarantee that the strong coupling effects arising from interactions with the graviton appear at much higher energies as the ones in the pure Proca sector. This leads to the natural hierarchy 
\begin{equation}
    m \ll \Lambda_3 \ll \Lambda_4 \ll \Lambda_2 \ll M_{\mathrm{pl}},
\end{equation}
where all energy scales are separated by parametrically large gaps, ensuring the healthiness of the Proca effective field theory. To isolate the leading operators at high energies, we take the decoupling limit (DL) to zoom in on the dangerous energy scales at which the theory becomes strongly interacting, while simultaneously, in a controlled way, sending all other scales either to zero or to infinity.
\begin{definition}[Decoupling limit]
\label{def:decoupling}
Let $S[A,g]$ be an action functional belonging to the class defined in
Definition~\ref{def:ghostfreevector}, with parameters $(M_{\rm Pl},m,\Lambda_2)$.
The decoupling limit is defined as the scaling limit
\begin{equation}
    \label{DL}
    m\rightarrow 0,\quad \Lambda_2\rightarrow\infty,\quad \rm{and}\quad  M_{\rm{pl}}\rightarrow\infty\quad \rm{s.t.}\quad 
    \begin{cases} 
        \Lambda_3=(\Lambda_2^2m)^{1/3}=\rm{const.} \\[5pt]
        \Lambda_4=\sqrt{M_{\rm{pl}}m}=\rm{const.}
    \end{cases}\,
\end{equation}
and, the metric and vector fields are expanded around a background solution as in
Definition~\ref{def:perturbativeexpansion}.
\end{definition}
 
This procedure extracts precisely the dominating interactions in the energy band $[\Lambda_3,\Lambda_4]$ and allows to decouple the Goldstone boson from the transverse vector modes, which only survive in their gauge-invariant forms $F_{\delta A}$ and $\tilde{F}_{\delta A}$. Crucially, the simultaneous gauge symmetry \eqref{gaugesymm} breaks apart into two separate transformations 
\begin{equation}
    \phi\rightarrow\phi+ c,\quad \delta A_\mu\rightarrow\delta A_\mu-\partial_\mu\alpha,
    \label{gaugesymmsep}
\end{equation}
where the scalar field only keeps a global shift symmetry that is familiar from Galileon theories. 
\begin{proposition}[Structure of the decoupled action]
\label{prop:decoupledaction}
In the decoupling limit of Definition~\ref{def:decoupling}, the action functional
$S[A,g]$ \eqref{gpaction} admits a well-defined limiting functional $S_{\rm dec}$ that decomposes
into the different sectors governing the high-energy behavior of the theory
\begin{equation}
\begin{aligned}
    &\mathcal{L}_{\rm{DL}}^{(2)}=-\frac{1}{4}F_{\mu\nu}^{\delta A}F_{\delta A}^{\mu\nu}+\frac{1}{2}(\partial\phi)^2+\frac{1}{2}h^{\mu\nu}(\partial^2h_{\mu\nu})-\frac{1}{4}h^\mu{}_\mu(\partial^2h^\mu{}_\mu),\\[10pt]
    &\mathcal{L}^{(3)}_{\rm{DL}}=\frac{1}{\Lambda_3^3\,}c_3\,(\partial\phi)^2(\partial^2\phi)+\frac{2}{\Lambda_4^2}h_{\beta}{}^\gamma F_{\delta A}^{\alpha\beta}(\partial_\alpha\partial_\gamma\phi)+\frac{2}{\Lambda_4^2}(\partial_\beta h_{\alpha\gamma})F_{\delta A}^{\beta\alpha}(\partial^\gamma\phi),\\[10pt]
    &\mathcal{L}^{(4)}_{\rm{DL}}\sim \frac{1}{\Lambda_4^4}\,h^2\,(\partial^2\phi)^2+ \frac{1}{\Lambda_4^4}\,h\,(\partial^2\phi)(\partial h)(\partial\phi)+ \frac{1}{\Lambda_4^4}(\partial h)^2(\partial\phi)^2,
    \label{LagrDL}
\end{aligned}
\end{equation}
where the explicit Lorentz contractions in $\mathcal{L}^{(4)}_{\rm{DL}}$ are omitted for brevity, see Appendix \ref{app2} for more details. 
\end{proposition}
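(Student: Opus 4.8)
The plan is to prove the proposition by explicit Stückelberg decomposition combined with a power-counting argument organized around the two scales held fixed in Definition~\ref{def:decoupling}. First I would insert the replacement \eqref{stb}, $\delta A_\mu \to \delta A_\mu + \tfrac{1}{m}\partial_\mu\phi$, into the perturbative Lagrangians \eqref{Lcubic}--\eqref{Lquartic}, using that the Abelian field strength $F_{\mu\nu}^{\delta A}$ is inert under this shift, so that every factor of $\phi$ enters through at least one derivative. Each resulting monomial carries a prefactor of the schematic form $m^{a}\Lambda_2^{-b}M_{\rm pl}^{-c}$. Eliminating $\Lambda_2$ and $M_{\rm pl}$ in favour of the fixed scales via $\Lambda_2 = \Lambda_3^{3/2}m^{-1/2}$ and $M_{\rm pl} = \Lambda_4^{2}m^{-1}$ turns this prefactor into $\Lambda_3^{-3b/2}\Lambda_4^{-2c}\,m^{\,a+b/2+c}$, so that the entire $m$-dependence is controlled by the single exponent $\sigma \equiv a+b/2+c$.

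With this bookkeeping in hand, the three possible fates of a term are immediate: a monomial survives the limit $m\to0$ iff $\sigma=0$, is parametrically suppressed and drops out iff $\sigma>0$, and would diverge iff $\sigma<0$. I would then run through the cubic and quartic vertices term by term. The purely longitudinal piece of $\mathcal{L}_3^{(3)}$, obtained by replacing all three vectors by $\tfrac1m\partial\phi$, carries $\tfrac{1}{\Lambda_2^2 m}=\tfrac{1}{\Lambda_3^3}$ ($\sigma=0$) and reproduces the cubic Galileon operator $(\partial\phi)^2(\partial^2\phi)$. The single-longitudinal pieces of the $\mathcal{L}_2^{(3)}$ vector--tensor mixing carry $\tfrac{1}{M_{\rm pl}m}=\tfrac{1}{\Lambda_4^2}$ ($\sigma=0$) and, after using the antisymmetry of $F_{\mu\nu}^{\delta A}$ and integrating by parts, organize into the two mixing operators $h\,F_{\delta A}\,\partial\partial\phi$ displayed in $\mathcal{L}_{\rm DL}^{(3)}$; the analogous doubly-longitudinal pieces of $\mathcal{L}_2^{(4)}$ carry $\tfrac{1}{M_{\rm pl}^2 m^2}=\tfrac{1}{\Lambda_4^4}$ ($\sigma=0$) and furnish $\mathcal{L}_{\rm DL}^{(4)}$. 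All remaining structures---those retaining explicit factors of the transverse mode $\delta A_\mu$ without a compensating $1/m$, or extra powers of $h/M_{\rm pl}$---have $\sigma>0$ and disappear.

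The main obstacle, and the only nontrivial step, is to show that the would-be divergent terms with $\sigma<0$ cancel identically rather than merely being discarded. These arise precisely from the doubly-longitudinal parts of the cubic mixing $\mathcal{L}_2^{(3)}$, which scale as $\tfrac{1}{M_{\rm pl}m^2}=\tfrac{1}{\Lambda_4^2 m}$ ($\sigma=-1$) and produce operators of the form $h(\partial^2\phi)^2$ and $(\partial h)(\partial^2\phi)(\partial\phi)$. I would establish their cancellation by substituting $\delta A_\mu\to\tfrac1m\partial_\mu\phi$ into the four mixing vertices of $\mathcal{L}_2^{(3)}$ and contracting against $\Phi_{\mu\nu}\equiv\partial_\mu\partial_\nu\phi$: because $\Phi_{\mu\nu}$ is symmetric and $h_{\mu\nu}=h_{\nu\mu}$, the paired contractions generated by the first two vertices, $h_{\alpha\gamma}\Phi^{\gamma}{}_{\beta}\Phi^{\beta\alpha}$ and $h_{\beta\gamma}\Phi^{\beta\alpha}\Phi^{\gamma}{}_{\alpha}$, coincide and enter with opposite sign, while the two $(\partial h)$-vertices likewise coincide upon relabeling and cancel. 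This is the curved-space counterpart of the coefficient tuning that removes the Boulware--Deser ghost in massive gravity: the very antisymmetric, second-order structure that guarantees ghost freedom in Definition~\ref{def:ghostfreevector} is what forces the $\sigma<0$ contributions to vanish, raising the cutoff from the naive $(M_{\rm pl}m^2)^{1/3}$ to $\Lambda_4$.

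Collecting the surviving $\sigma=0$ operators then yields exactly the three sectors of \eqref{LagrDL}, and since no term with $\sigma<0$ remains, the pointwise limit of each coefficient exists and $S_{\rm dec}$ is a well-defined local functional. I expect the cancellation bookkeeping in the mixing sector---tracking index contractions through the required integrations by parts for every one of the $\mathcal{L}_2^{(3)}$ and $\mathcal{L}_2^{(4)}$ vertices, rather than only the representative ones above---to be the most laborious part, although it is purely algebraic once the symmetry of $\Phi_{\mu\nu}$ is exploited.
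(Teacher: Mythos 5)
Your proposal is correct and follows essentially the same route as the paper: Stückelberg substitution \eqref{stb} into the vertices \eqref{Lcubic}--\eqref{Lquartic}, power counting in the fixed scales $\Lambda_3=(\Lambda_2^2 m)^{1/3}$ and $\Lambda_4=(M_{\rm pl}m)^{1/2}$, and retention of only the terms whose coefficients remain finite as $m\to 0$, which is exactly how the paper arrives at \eqref{LagrDL} via \eqref{LpreDL} and Appendix~\ref{app2}. Your only addition is that you verify explicitly, using the symmetry of $\partial_\mu\partial_\nu\phi$ and of $h_{\mu\nu}$, the pairwise cancellation of the would-be divergent $\tfrac{1}{M_{\rm pl}m^2}\,h(\partial^2\phi)^2$ and $\tfrac{1}{M_{\rm pl}m^2}\,(\partial h)(\partial^2\phi)(\partial\phi)$ structures, a step the paper asserts as a consequence of the ghost-free tuning (and displays only implicitly through the absence of such terms in Appendix~\ref{app2}) rather than computing.
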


Eq. \eqref{LagrDL} reveals surprisingly simple structure of the theory in the decoupling limit: While the Stückelberg-ed Lagrangians \eqref{LpreDL} (see Appendix \ref{app2} for the full expressions) contain numerous terms that mix the graviton with the scalar and vector sector, where particularly the vector degree of freedom do not necessarily appear in gauge-invariant form, the resulting theory in the decoupling limit, however, collapses to a small, tractable number of classical interactions, as only few terms from \eqref{LpreDL} survive. Furthermore, the decoupling limit in Eq. \eqref{LagrDL} reflects the anticipated high-energy behaviour of the theory: the longitudinal polarization is manifestly broken apart from the transverse modes and propagates independently. This is in accordance with the Goldstone boson equivalence theorem, stating that all three vector polarizations are indistinguishable at low energies, but at high energies, the dynamics of the Goldstone boson becomes distinct, as it decouples from the two transverse polarizations. Regarding the cubic Lagrangian, the pure Proca term is clearly governed by the longitudinal mode and reduced to its Galileon form. All other terms describe interactions among the scalar, vector and tensor sector, where the vector degrees of freedom manifestly enter in gauge-invariant form only and are hence protected from quantum detuning and ghost instabilities by the $U(1)$ gauge symmetry \eqref{gaugesymmsep}. Therefore, the decoupling limit provides a useful tool to assess the stability of an effective field theory under radiative corrections, as it isolates the physics right at the dangerous energy scales and reduces the analysis to the behaviour of the Goldstone boson only. 

 \begin{lemma}[Decoupling Limit Control of Radiative Corrections]\label{lem:DL}
Any local operator generated by quantum corrections that contributes to physical amplitudes at energies $E \ll \Lambda_{\text{cutoff}}$ must survive in the decoupling limit with a finite coefficient. Conversely, any operator that vanishes or becomes parametrically suppressed in the decoupling limit cannot be generated radiatively with a coefficient unsuppressed by the effective field theory cutoff.
In particular, the decoupling limit provides a complete characterization of all radiatively admissible counterterms relevant below the cutoff, and the absence of a given operator in this limit implies its non-renormalization within the effective field theory expansion.
\end{lemma}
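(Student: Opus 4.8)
The plan is to reduce both directions of the lemma to a single structural fact: the decoupling limit of Definition~\ref{def:decoupling} is a smooth one-parameter rescaling of fields and couplings with which the loop expansion commutes term by term, so that taking the limit and computing radiative corrections can be interchanged. I would parametrize the limit by the single small parameter $\epsilon \equiv m$; holding $\Lambda_3$ and $\Lambda_4$ fixed then forces $\Lambda_2 \sim \epsilon^{-1/2}$ and $M_{\rm pl} \sim \epsilon^{-1}$. By the locality and Lorentz-invariance assumptions of Definition~\ref{def:ghostfreevector}, every admissible counterterm is a local Lorentz scalar whose coefficient is fixed by dimensional analysis to a monomial $m^a \Lambda_2^b M_{\rm pl}^c$ (up to a dimensionless coupling and logarithms of scale ratios). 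Re-expressing this monomial in decoupling-limit variables gives a definite leading power, $C \sim \epsilon^{\,p}\, f(\Lambda_3,\Lambda_4)$ with $p = a - b/2 - c$, and the sign of $p$ classifies the operator.

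First I would exclude $p<0$: an operator whose coefficient diverges as $\epsilon \to 0$ would be incompatible with the existence of the limiting functional $S_{\rm dec}$ guaranteed by Proposition~\ref{prop:decoupledaction}, which is precisely what the ghost-free tuning of coefficients enforces by cancelling the worst-behaved Stückelberg terms. With $p<0$ removed, operators split into those with $p=0$, which survive with finite coefficient and assemble into $S_{\rm dec}$ as displayed in \eqref{LagrDL}, and those with $p>0$, which vanish in the limit. For the forward direction, a physical amplitude at fixed external energy $E \ll \Lambda_{\text{cutoff}}$ is a function of $\epsilon$ whose $\epsilon \to 0$ value is built entirely from the $p=0$ operators, the $p>0$ contributions being suppressed by positive powers of $E/\Lambda_2$ or $E/M_{\rm pl}$ and dropping out. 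Hence any operator that genuinely contributes below the cutoff must have $p=0$, i.e. it survives the limit with a finite coefficient.

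For the converse I would invoke the commutativity of the limit with the loop expansion: since the couplings enter the Feynman rules as fixed powers of $\epsilon$ and each loop integral is analytic in those couplings within the regime of validity of the derivative expansion, the coefficient $C(\epsilon)$ generated at a given loop order has a well-defined leading power whose $\epsilon \to 0$ value coincides with the coefficient obtained by evaluating the same diagram directly with the decoupled Feynman rules of $S_{\rm dec}$. An operator absent from $S_{\rm dec}$ therefore cannot be produced at $p=0$; if it is generated at all, it carries $p>0$ and is suppressed by the effective field theory cutoff. Combining the two directions yields the claimed correspondence between operators present in the decoupling limit and radiatively admissible counterterms below the cutoff, and in particular the absence of an operator in $S_{\rm dec}$ implies its non-renormalization.

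The hard part will be justifying the commutativity of the decoupling limit with the loop expansion uniformly in $\epsilon$. The delicate point is the vector propagator, whose longitudinal piece $p_\mu p_\nu / m_{\rm eff}^2$ produces $1/m$ enhancements at high energy that could, inside loop integrals, raise the naive $\epsilon$-power of a coefficient. I would control this precisely by performing the Stückelberg replacement \eqref{stb} before integrating, so that these enhancements are carried by the canonically normalized scalar $\phi$ with its own well-behaved kinetic term in \eqref{LagrDL}, while the transverse modes appear only through the gauge-invariant $F_{\delta A}$; this renders every loop integrand of uniform $\epsilon$ scaling and free of spurious soft enhancements. Establishing that the leading $\epsilon$-power of each loop-generated coefficient is faithfully reproduced by the decoupled theory, rather than being shifted by the interplay of the vanishing and diverging scales inside the integrals, is the technical heart of the argument.
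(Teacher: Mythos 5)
Your overall architecture---parametrize the decoupling limit by a single parameter $\epsilon=m$, assign each counterterm coefficient a definite leading power $\epsilon^{p}$, and reduce both directions of the lemma to the interchangeability of the scaling limit with the loop expansion, established via the Stückelberg replacement \eqref{stb}---is essentially the argument the paper itself relies on. The paper never gives a self-contained proof of Lemma~\ref{lem:DL}; it supports the statement through the Stückelberg reformulation of Sec.~\ref{DL_Hierarchy}, the explicit unitary-gauge checks of Sec.~\ref{quantumcorrections}, and the structural power counting of Sec.~\ref{robustness}, where loops built directly from the decoupled vertices \eqref{LagrDL} manifestly produce only counterterms of the form \eqref{Lfullctgen}. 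Your identification of the longitudinal $p_\mu p_\nu/m^{2}$ piece of the unitary-gauge propagator as the obstruction to naive commutativity, and of the Stückelberg scalar as the cure, is precisely the mechanism the paper invokes in Sec.~\ref{sec53}.

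There is, however, one step that fails as written: excluding $p<0$ by appeal to Proposition~\ref{prop:decoupledaction} is circular. That proposition is a statement about the \emph{classical} action---the ghost-free tuning guarantees that the classical vertices have finite limits. It does not, by itself, prevent a loop diagram assembled from those vertices and the unitary-gauge propagators at finite $\epsilon$ from producing a coefficient with negative $\epsilon$-power, because the $1/m^{2}$ longitudinal enhancements inside the integral can overwhelm the positive powers of $\epsilon$ carried by the vertices. Indeed, the paper's own unitary-gauge power counting explicitly allows such terms---e.g.\ the $p^{8}/m^{6}$ contribution in \eqref{pcgravprop} and the $p^{6}/m^{4}$ term in \eqref{pcprocaprop}, both of which would diverge in the decoupling limit---and ruling them out is exactly the nontrivial content of the lemma, settled in the paper either by explicit computation or by constructing the loops from the decoupled Feynman rules. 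The exclusion of $p<0$ must therefore be \emph{derived from} the commutativity statement you defer to the end (once loops are computed with decoupled rules, every generated coefficient is a manifestly finite function of $\Lambda_3,\Lambda_4$, hence $p\ge 0$), not assumed prior to it. With that reorganization---commutativity first, then the trichotomy on $p$---your proof goes through and coincides with the paper's reasoning; as stated, your first step asserts the very conclusion it is meant to establish.
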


As a consequence, the analysis of quantum corrections in the decoupling limit is sufficient to determine the radiative stability of the full theory to all loop orders within its regime of validity.



\section{Quantum Corrections}\label{quantumcorrections}

\subsection{Two-point functions}
\label{sec51}
Before presenting explicit loop computations, we emphasize that the purpose of the following sections is not a diagram-by-diagram analysis of quantum corrections, but rather a systematic classification of the local operators that can be generated radiatively within the effective field theory. The individual Feynman diagrams computed below serve as representative probes of the operator algebra of the theory and are used to identify the derivative structure and scale suppression of admissible counterterms. In particular, our focus is on determining whether quantum corrections can generate operators with derivatives per field as those appearing in the classical ghost-free interactions. As we demonstrate, this never occurs: all radiatively induced operators are necessarily higher-derivative and suppressed by the strong-coupling scales identified in the decoupling limit. The explicit loop calculations presented in this section therefore provide constructive confirmation of a general operator-level statement, rather than isolated perturbative results.

Before computing one-loop corrections explicitly, it is instructive to perform a power counting analysis of the resulting one-loop diagrams in order to see the types of quantum-induced operators that can appear and potentially detune the classical interactions. Given that there is no symmetry protecting Generalized Proca theory from receiving large quantum corrections, we explicitly expect such dangerous terms to arise at the quantum level. The main danger lies in the fact that such operators come with higher derivatives, which could lead to ghost instabilities within the effective field theory's regime of validity. At this stage, we choose to work in unitary gauge and hence focus on Generalized Proca theory in the original formulation, with the kinetic term \eqref{Lkin} and classical interactions \eqref{Lcubic}-\eqref{Lquartic}. Since the vector propagator goes like $\sim \frac{1}{m^2}$ in the old formulation, we anticipate counterterms enhanced by inverse powers of the mass that prevail over the classical interactions at high energies, which would destabilize the classical effective field theory structure. This behaviour becomes very evident in the decoupling limit, where such quantum-enhanced operators diverge explicitly as we take $m\to 0$. Nevertheless, as we will show in Sec. \ref{sec52} for one-loop corrections to the two--point functions, all these terms either contain the vector field in gauge-invariant form, which softens their behaviour in the decoupling limit, or they simply do not appear and hence must experience non-trivial cancellations in unitary gauge. In Sec. \ref{sec53}, we provide a natural explanation for these cancellations by constructing the resulting one-loop counterterms directly from the decoupling limit. Overall, we observe that the non-renormalizable operators generated from one-loop quantum corrections respect the structure and hierarchy of the classical effective field theory and therefore conclude radiative stability of Generalized Proca theories.

\subsubsection{Power-counting estimates}

We now examine the two-point diagrams at one-loop that can be constructed from the cubic and quartic interaction vertices of the Lagrangians \eqref{Lcubic}-\eqref{Lquartic}. These diagrams induce quantum corrections to the quadratic part of the action and could potentially detune the kinetic structure of the theory. For now, we restrict the analysis to the lowest-order corrections involving only one graviton, since higher orders in $h$ are further suppressed by inverse powers of the Planck mass $\sim 1/M_{\rm{pl}}^n$ with $n>2$. Employing standard Feynman diagram techniques and working in the $\overline{\rm{MS}}$-scheme of perturbative renormalization, the dangerous destabilizing terms arise from the UV-divergent parts of the one-particle irreducible (1PI) diagrams, which we first estimate by power-counting before giving exact results later in Sec. \ref{sec52}. We restrict our analysis to 1PI diagrams that involve mixings between the vector field and the graviton, since, on the one hand, the pure Proca sector on flat space was already studied in \cite{Heisenberg:2020jtr} and shown to be quantum-stable, and, on the other hand, pure gravity is known to be perturbatively non-renormalizable \cite{Donoghue:1994dn}. Overall, these mixing diagrams can be divided into three physically distinct classes: $(i)$ corrections to the Proca propagator, $(ii)$ corrections to the graviton propagator, and $(iii)$ mixed vector–graviton two-point functions that induce kinetic mixing between the two sectors.\\

\paragraph{$(i)$ Proca propagator corrections:}
Corrections to the Proca two-point function arise from two distinct one-loop 1PI diagrams that include vector-tensor mixing, consisting of a tadpole and a bubble contribution depticted in Fig. \ref{fig2pt}.

\begin{figure}[H]
    \centering
    \includegraphics[width=0.7\linewidth]{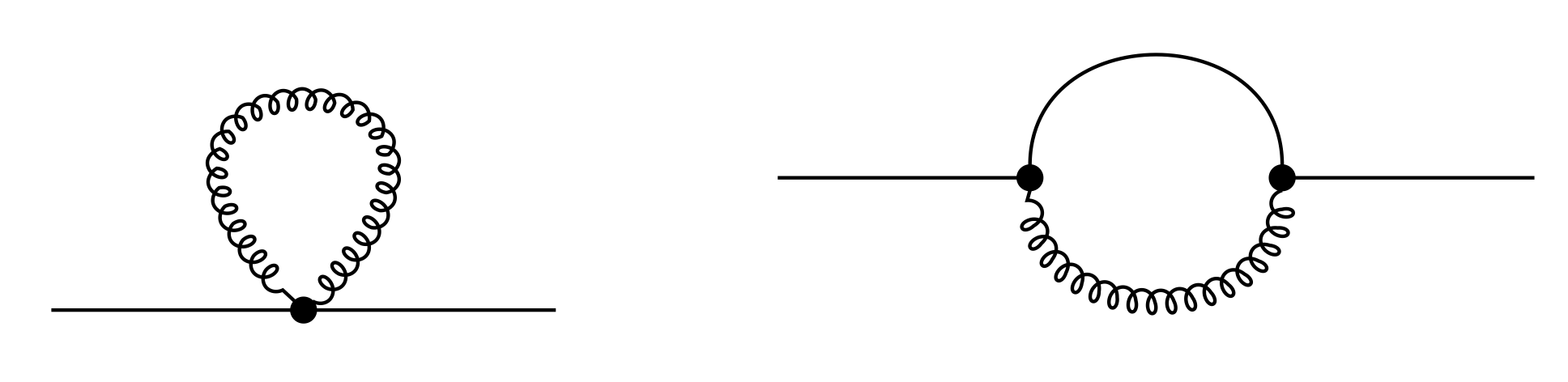}
    \caption{\small{The two mixed one-loop 1PI diagrams inducing quantum corrections to the Proca two-point function, both accompanied by a symmetry factor of~2. The vertices required for these diagrams arise from the perturbative expansion of the Lagrangian in the minimal model at cubic and quartic order that involve mixed vector--tensor terms, i.e.\ from $\mathcal{L}^{(3)}_{h\,\delta A^2}\subset\mathcal{L}^{(3)}$ and $\mathcal{L}^{(4)}_{h^2\delta A^2}\subset\mathcal{L}^{(4)}$.}}
    \label{fig2pt}
\end{figure}
The power counting estimates include all terms potentially arising in dimensional regularization that are allowed by Lorentz invariance. For the above diagrams from Fig. \ref{fig2pt}, we expect the following structures
\begin{equation}
\begin{aligned}
\label{pcprocaprop}
    &\mathcal{M}_{\mathcal{L}^{(4)}_{h^2\delta A^2}}\sim\frac{m^{2}}{M_{\rm Pl}^{2}}\!\left(m^{2} + p^{2} + \frac{p^{4}}{m^{2}}\right),\\[5pt]
    &\mathcal{M}_{\mathcal{L}^{(3)}_{h\,\delta A^2},\mathcal{L}^{(3)}_{h\,\delta A^2}}\sim\frac{m^{2}}{M_{\rm Pl}^{2}}\!\left( m^{2} + p^{2} + \frac{p^{4}}{m^{2}} + \frac{p^{6}}{m^{4}} \right), 
\end{aligned}
\end{equation}
where the terms with inverse mass powers reflect the number of internal propagators in the loop, which is maximally two. The presence of momentum powers indicates that these diagrams can generate irrelevant operators with two or more derivatives acting on the vector field: For instance, terms $\sim p^2$ in \eqref{pcprocaprop} induce counterterms of the following schematic form
\begin{equation}
\label{ghost}
    \sim \frac{m^2}{M_{\rm{pl}}^2}\,(\partial\,\delta A)^2,
\end{equation}
that could potentially detune the kinetic term of the vector field. Yet, at a closer look, these terms are parametrically suppressed by the effective field theory hierarchy $m\ll M_{\rm{pl}}$, with the mass of the induced ghost of order $\sim \Lambda_4^4/m^2$ (after canonical normalization) lying far beyond the effective field theory cutoff, such that they do not threaten the classical effective field theory structure. However, terms involving higher powers in momenta come with additional risks, as they have the mass appearing explicitly in their denominator. Looking at the highest-momentum contributions $\sim p^4$ and $\sim p^6$ in \eqref{pcprocaprop}, we anticipate the following quantum operators in the decoupling limit
\begin{equation}
    \begin{aligned}
        &\sim\frac{\partial^4}{M_{\rm{pl}}^2}\,\delta A^2\xrightarrow[]{\text{DL}}\frac{\partial^4}{M_{\rm{pl}}^2m^2}\,(\partial\phi)^2=\frac{\partial^4}{\Lambda_4^4}(\partial\phi)^2,\\[5pt]
        &\sim \frac{\partial^6}{M_{\rm{pl}}^2m^2}\,\delta A^2\xrightarrow[]{\text{DL}}\frac{\partial^6}{\Lambda_4^4m^2}\,(\partial\phi)^2,
    \end{aligned}
\end{equation}
where we have focused on the pure scalar sector when applying the Stückelberg transformation \eqref{stb}, as these terms exhibit the most pathological behaviour at high energies. While terms $\sim p^4 $ still induce counterterms that remain safely suppressed in the decoupling limit, as the inverse mass powers are absorbed by the strong coupling scale, terms $\sim p^6$ lead to operators that naively blow up in the decoupling limit, such that their destabilizing nature becomes evident. Therefore, when doing explicit one-loop computations, we have to pay special attention to corrections coming from the bubble diagram from Fig. \ref{fig2pt} and verify that these counterterms do not appear.\\

\paragraph{$(ii)$ Corrections to the graviton propagator: } For the graviton two-point function, we expect one-loop corrections stemming from similar tadpole and bubble 1PI diagrams as in case $(i)$ that are given in Fig. \ref{fig2ptgrav}. 

\begin{figure}[H]
    \centering
    \includegraphics[width=0.7\linewidth]{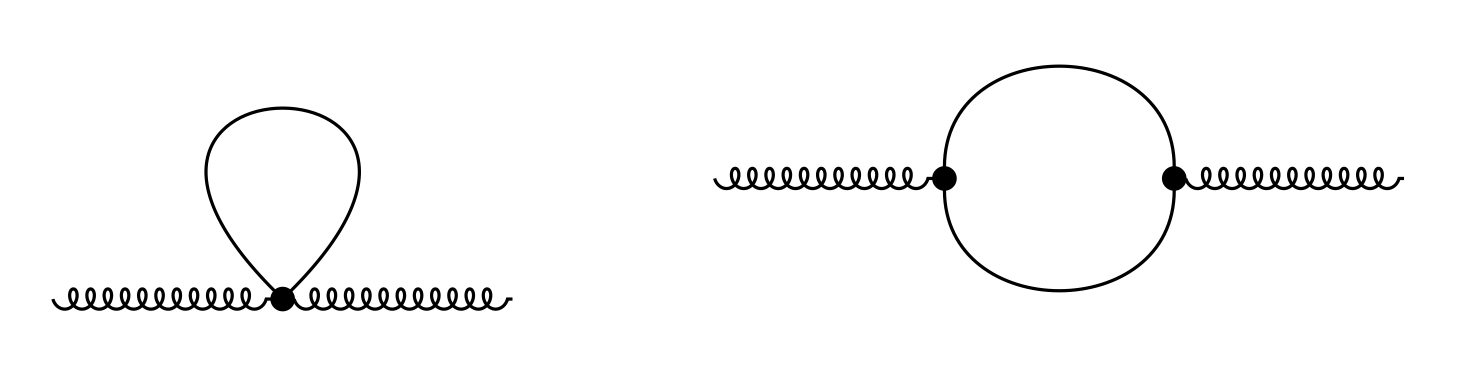}
    \caption{\small{The two mixed one-loop 1PI diagrams inducing quantum corrections to the graviton two-point function, both accompanied by a symmetry factor of 2. The vertices required for these diagrams arise from the perturbative expansion of the Lagrangian in the minimal model at cubic and quartic order that involve mixed vector-tensor terms, i.e. from $\mathcal{L}^{(3)}_{h\,\delta A^2}\subset\mathcal{L}^{(3)}$ and $\mathcal{L}^{(4)}_{h^2\delta A^2}\subset \mathcal{L}^{(4)}$. }}
    \label{fig2ptgrav}
\end{figure}
Based on power counting in dimensional regularization together with Lorentz invariance, we anticipate the following results
\begin{equation}
\label{pcgravprop}
    \begin{aligned}
        &\mathcal{M}_{\mathcal{L}^{(4)}_{h^2\delta A^2}}\sim\frac{m^{2}}{M_{\rm Pl}^{2}}\!\left( m^{2} + p^{2} + \frac{p^{4}}{m^{2}} + \frac{p^{6}}{m^{4}} \right),\\[5pt]
        &\mathcal{M}_{\mathcal{L}^{(3)}_{h\,\delta A^2},\mathcal{L}^{(3)}_{h\,\delta A^2}}\sim\frac{m^{2}}{M_{\rm Pl}^{2}}\!\left( m^{2} + p^{2} + \frac{p^{4}}{m^{2}} + \frac{p^{6}}{m^{4}}+\frac{p^8}{m^6} \right).
    \end{aligned}
\end{equation}
Focusing on the highest-momentum terms that are the most dangerous in the decoupling limit, we obtain quantum-induced counterterms of the form
\begin{equation}
\label{ct2ptgrav}
    \begin{aligned}
        &\sim\frac{\partial^6}{M_{\rm Pl}^{2}m^2}\,h^2=\frac{\partial^6}{\Lambda_4^4}\,h^2,\\[5pt]
        &\sim\frac{\partial^8}{M_{\rm Pl}^{2}m^4}\,h^2=\frac{\partial^8}{\Lambda_4^4m^2}\,h^2,
    \end{aligned}
\end{equation}
where particularly the second counterterm in Eq. \eqref{ct2ptgrav} stemming from the bubble diagram in Fig. \ref{fig2ptgrav} diverges in the decoupling limit, leading to a breakdown of the classical effective field theory structure. However, when computing the associated one-loop corrections explicitly in Sec. \ref{sec6}, we will observe that the Proca effective field theory is structured in a way that does not allow for such destabilizing operators to appear and all potentially worrisome terms are viably suppressed by the strong coupling scale $\Lambda_4$ without additional enhancement from inverse mass powers.\\ 

\paragraph{$(iii)$ Mixed Proca-graviton propagators:  }
The remaining two one-loop 1PI diagrams displayed in Fig. \ref{fig2ptmix} that involve both vector and tensor sectors contribute to two-point functions with one external Proca and one external graviton leg, hence inducing kinetic mixing between \(A_\mu\) and \(h_{\mu\nu}\) at the quantum level. As such mixed propagators do not appear classically in our chosen background configuration \eqref{bgconfig}, verifying that such mixing terms are heavily suppressed at high energies is crucial for the stability of the Proca effective field theory. 
\begin{figure}[H]
    \centering
    \includegraphics[width=0.7\linewidth]{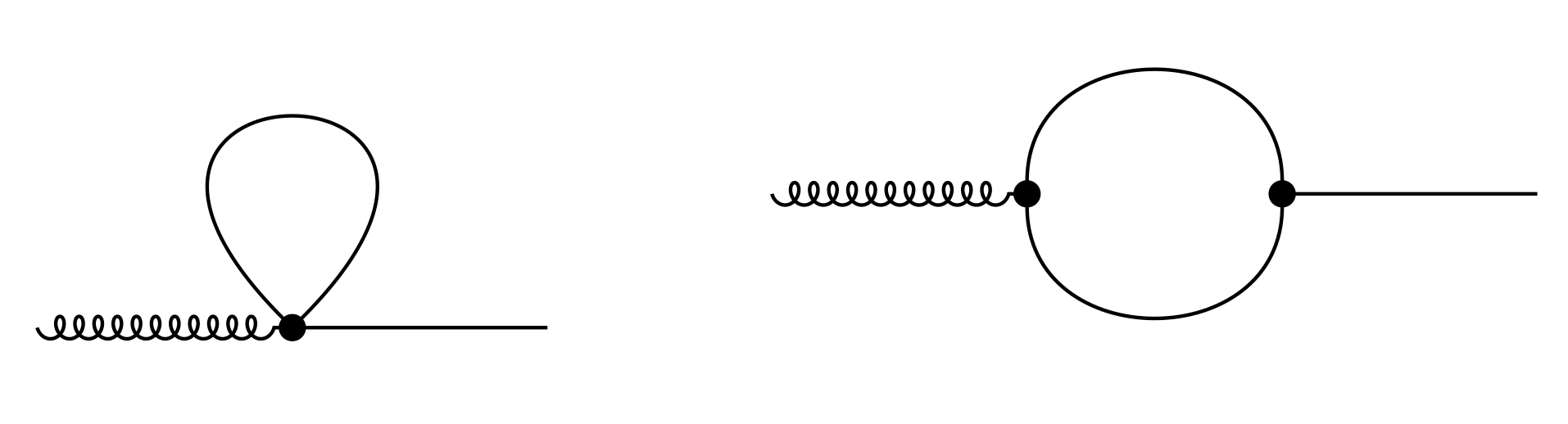}
    \caption{\small{The two mixed one-loop 1PI diagrams inducing kinetic mixing at the quantum level, both accompanied by a symmetry factor of 2. The vertices required for these diagrams arise from the perturbative expansion of the Lagrangian in the minimal model at cubic and quartic order that involve mixed vector-tensor and pure Proca terms, i.e. from $\mathcal{L}^{(3)}_{h\,\delta A^2},\,\mathcal{L}^{(3)}_{\delta A^3}\subset\mathcal{L}^{(3)}$ and $\mathcal{L}^{(4)}_{h\delta A^3}\subset \mathcal{L}^{(4)}$. }}
    \label{fig2ptmix}
\end{figure}
Power counting and Lorentz invariance imply corrections of the form
\begin{equation}
\label{pcmixedprop}
    \begin{aligned}
        &\mathcal{M}_{\mathcal{L}^{(4)}_{h\delta A^3}}\sim\frac{m^{4}}{\Lambda_2^{2} M_{\rm Pl}}\!\left( p + \frac{p^{3}}{m^{2}} + \frac{p^{5}}{m^{4}} \right),\\[5pt]
        &\mathcal{M}_{\mathcal{L}^{(3)}_{h\,\delta A^2},\mathcal{L}^{(3)}_{\delta A^3}}\sim\frac{m^{4}}{\Lambda_2^{2} M_{\rm Pl}}\!\left( p + \frac{p^{3}}{m^{2}} + \frac{p^{5}}{m^{4}} + \frac{p^{7}}{m^{6}} \right).
    \end{aligned}
\end{equation}
Here, particularly the contribution $\sim p^7$ can become dangerous in the decoupling limit, where it leads to the following divergent operator
\begin{equation}
    \sim \frac{\partial^7}{\Lambda_2^2M_{\rm{pl}}m^2}\,(h\,\delta A)\xrightarrow[]{\text{DL}}\frac{\partial^7}{\Lambda_3^3 \Lambda_4^2\, m}\,(h\,\partial\phi).
\end{equation}
Again, the precise Feynman calculation of the one-loop corrections will reveal that such pathological terms are either tamed by the vector appearing in gauge-invariant form or absent altogether. Having successfully classified all possible 1PI diagrams at one loop and identified the destabilizing quantum-induced operators arising from the highest momentum terms in the bubble diagrams, we now turn to computing these one-loop corrections explicitly.

\subsubsection{Explicit Feynman diagram calculation}
\label{sec52}
Using standard Feynman calculus, we compute the one-loop quantum corrections to Generalized Proca theory in unitary gauge. These corrections stem from the one-loop 1PI graphs that all give contributions to the reduced matrix element $\mathcal{M}_{\rm{1PI}}$ in the perturbative treatment of the scattering matrix. In perturbative renormalization within the $\overline{\rm{MS}}$-scheme, concrete counterterms to the classical theory are invoked to precisely cancel the UV-divergences generated by loop diagrams
in dimensional regularization, which is carried out setting $d=4-2\varepsilon$. The UV-divergent part of a generic one-loop diagram with $N$ external legs is denoted by $\mathcal{M}_N^{\rm{div}}$ and depends on $N-1$ external momenta $p_1,...,p_{N-1}$, where the  remaining $p_N$ is fixed by momentum conservation. From now on, we assume all external momenta to be incoming. Following these steps, we obtain the contribution to the UV-divergent piece of the reduced matrix element coming from the diagrams in Fig. \ref{fig2pt}
\begin{equation}
\label{M2divproca}
    \begin{aligned}
        \mathcal{M}_{2,\,\delta A}^{\rm{div}}\supset \frac{\epsilon_{p}^\mu\epsilon^\nu_{-p}}{48\,\pi^2\,\varepsilon\,M_{\rm Pl}^{2}}
        \big[
        \big(3 p^{4} + 26 p^{2} m^{2} - 9 m^{4}\big) \eta_{\mu\nu}-\big(3 p^{2} + 23 m^{2}\big) p_{\mu} p_{\nu}\big]  ,
    \end{aligned}
\end{equation}
where the external momenta are on-shell such that $p\equiv p_1=-p_2$, and $\epsilon_{p}^\alpha$ denotes the vector polarization associated with external momentum $p$. Clearly, the resulting expression inherently fits with the prediction \eqref{pcprocaprop} based on dimensional analysis, however, the dangerous $\sim p^8$ contribution is manifestly absent. Furthermore, the highest-momentum contribution in \eqref{M2divproca} appears in the gauge-invariant form $\eta_{\mu\nu}\partial^2-\partial_\mu \partial_\nu$, which tames the high-energy behaviour of the induced counterterm 
\begin{equation}
    \sim \frac{\partial^2}{M_{\rm{pl}}^2}\,F_{\delta A}^2\xrightarrow[]{\text{DL}}0,
\end{equation}
even to the point where the latter vanishes in the decoupling limit, such that corrections to the gauge-invariant sector are only expected to enter at the Planck scale. Therefore, the structure of Generalized Proca theory is organized in a healthy way that leads to non-trivial cancellations of the higher-momentum terms anticipated in unitary gauge based on dimensional grounds and symmetry considerations, which prevents the appearance of destabilizing quantum-induced counterterms at the two-point level. We will justify this behaviour more rigorously in the ensuing Sec. \ref{sec53} based on the theory's structure in the decoupling limit.

Next, we turn our attention towards the one-loop corrections to the graviton propagator. Repeating the same steps as above, we extract the UV-divergent part of the reduced 1PI matrix element induced by the graphs in Fig. \ref{fig2ptgrav}

\begin{equation}
\label{M2divgrav}
    \mathcal{M}_{2,h}^{\rm{div}}\supset \frac{\epsilon^{\mu\nu}_p\,\epsilon^{\alpha\beta}_{-p}}{1920\, \pi^2\varepsilon \,M_{\rm{pl}}^2}
    \Big[m^4M_{\mu\nu\alpha\beta}^{(0)}(p)+m^2M_{\mu\nu\alpha\beta}^{(2)}(p)+M_{\mu\nu\alpha\beta}^{(4)}(p)+\frac{1}{m^2}M_{\mu\nu\alpha\beta}^{(6)}(p)\Big],
\end{equation}
where we have abbreviated the Lorentz structures $M^{(n)}_{\mu\nu\alpha\beta}$ by tensors collecting all momentum powers of order $n$ to avoid clutter (see Appendix \ref{app3} for the full answer). The metric polarizations are denoted by $\epsilon_{\mu\nu}$ and we have again put the external momenta on-shell $p\equiv p_1=p_2$. Notably, the highest momentum contribution $\sim p^8$ anticipated in power counting \eqref{pcgravprop} that leads to divergences in the decoupling limit is not present in \eqref{M2divgrav}, proving that also corrections to the graviton two-point function induced by mixed vector-tensor loops are safely suppressed at high energies. Again, in the exact expression \eqref{M2divgrav}, we hence witness non-trivial cancellations of all troublesome terms that were a priori allowed by dimensional analysis in unitary gauge. Further comments on this fortunate behaviour are given in Sec. \ref{sec53}.

Finally, we focus on the one-loop diagrams in Fig. \ref{fig2ptmix} that lead to mixed two-point functions between the Proca field and the graviton. The explicit computation of the UV-divergence in the associated reduced matrix element yields 
\begin{equation}
    \mathcal{M}_{2,\,h\delta A}^{\rm{div}}\supset \frac{i\, c_3\,\epsilon_p^{{\mu\nu}}\epsilon_{-p}^\alpha}{192\, \pi ^2 \varepsilon\,\Lambda_2 ^2\, M_{\rm{pl}}}
    \big[3\, m^4 p_{\mu } \eta_{\alpha \nu }-3\, m^4 p_{\nu } \eta_{\alpha \mu }+\left(6\, m^2-p^{2}\right)
   p_{\alpha } \left(p_{\mu } p_{\nu }-p^{2} \eta_{\mu \nu }\right)\big],
\end{equation}
which does not contain the dangerous momentum power $\sim p^7$ allowed in the power counting estimate \eqref{pcmixedprop}. Therefore, the induced quantum counterterms all exhibit a healthy behaviour in the high-energy limit, as they are heavily suppressed by the strong coupling scales of the theory \eqref{scscale} and hence do not destroy the classical kinetic structure that decomposes into an independent Proca and graviton sector. In the next section, we argue based on the theory's structure in the decoupling limit why such dangerous terms do not appear in the explicit one-loop contributions to the reduced matrix elements, which puts these seemingly magical cancellations on firm theoretical grounds.

\subsubsection{Decoupling limit analysis}
\label{sec53}
In the previous section, we observed non-trivial cancellations in unitary gauge of the highest-momenta terms in the reduced matrix elements, which saved Generalized Proca theory from incurring destabilizing counterterms at the quantum level. As a next step, we show that this behaviour can easily be explained from the decoupling limit: Looking at $\mathcal{L}^{(3)}_{\rm{DL}}$ in \eqref{LagrDL}, we notice that the only remaining cubic interactions among different sectors couple the graviton $h$ and the scalar field equipped with additional derivatives $\sim \partial\phi,\,\partial^2\phi$ to the gauge-invariant vector $F_{\delta A}$, which tightly constrains the structure of the bubble diagram to the forms of Fig. \ref{figDLbubble}. In addition, the structure of $\mathcal{L}_{\rm{DL}}^{(4)}$ in \eqref{LagrDL} allows for interactions between two gravitons and two scalars, such that there is a remaining tadpole contribution in the decoupling limit depicted in Fig. \ref{figDLtadpole}.

\begin{figure}[H]
    \centering
    \includegraphics[width=0.7\linewidth]{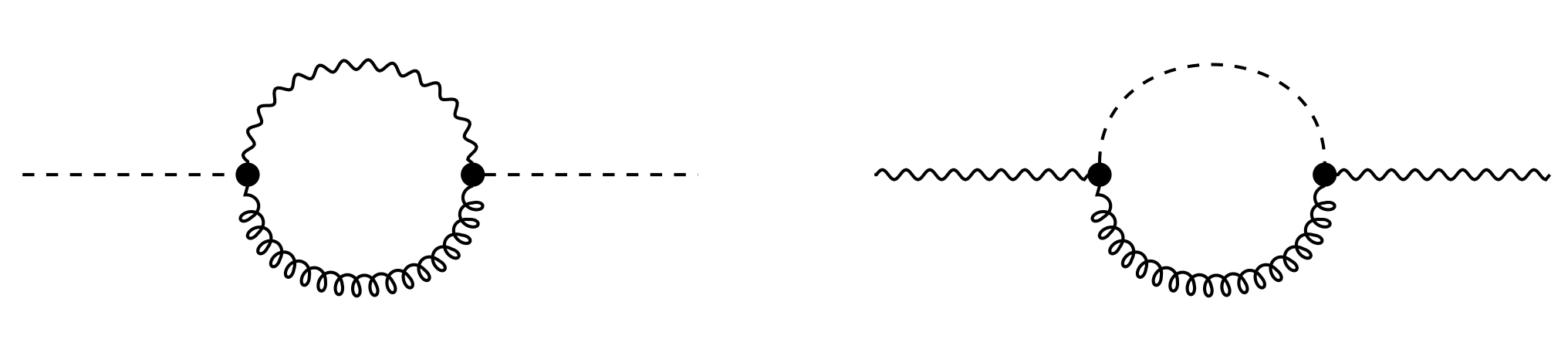}
    \caption{\small{The two mixed-sector one-loop 1PI bubble diagrams inducing quantum corrections to the Proca two-point function, when constructed from the decoupling limit. The vertices required for these diagrams arise from the scalar-vector-tensor interactions in $\mathcal{L}^{(3)}_{\rm{DL}}$. Dashed lines denote external legs or propagators of the scalar field $\phi$ that couples to the graviton and the gauge-invariant vector field $F_{\delta A}$, whose propagators and external legs are depicted by wavy lines. }}
\label{figDLbubble}
\end{figure}

\begin{figure}[H]
    \centering
    \includegraphics[width=0.7\linewidth]{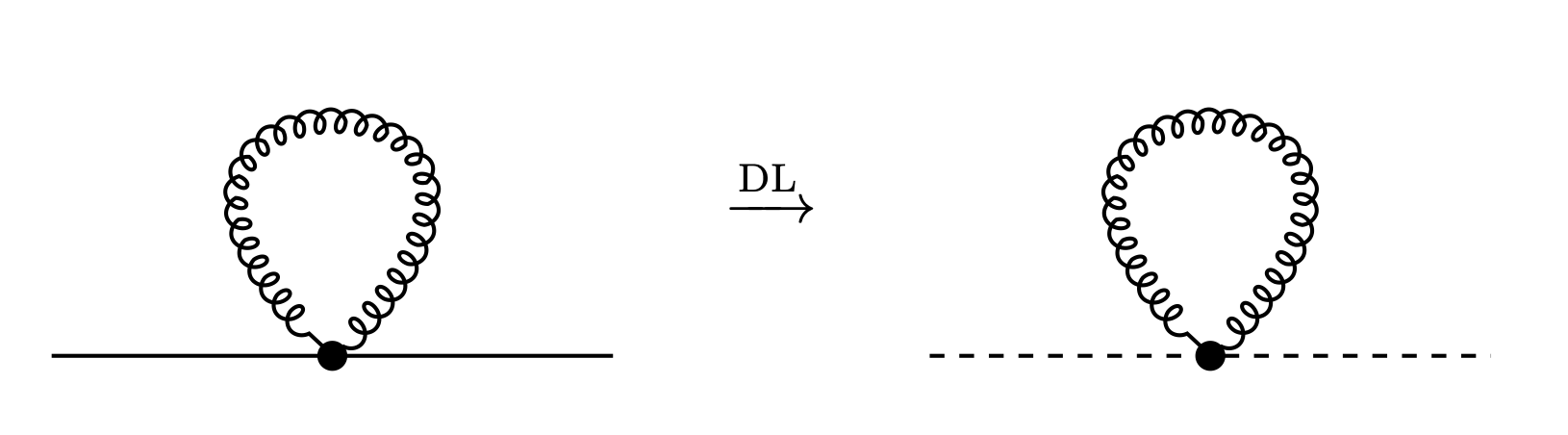}
    \caption{\small{The tadpole diagram in the decoupling limit, giving quantum corrections to the Proca two-point function. The left diagram corresponds to the one-loop contribution in unitary gauge, whereas the right one represents the surviving configuration in the decoupling limit, with dashed lines indicating external scalar legs.}}
\label{figDLtadpole}
\end{figure}
In the decoupling limit, all propagators behave as $\sim 1/p^2$ at high energies and vertices are suppressed by factors of the strong coupling scales $\sim 1/\Lambda_3^3$ or $\sim 1/\Lambda_4^2$, while the mass scale  $m\to 0$ is sent to zero. Based on dimensional analysis, the diagrams from Figs. \ref{figDLbubble} and \ref{figDLtadpole} hence induce the following counterterms
\begin{equation}
\label{ctprocakin}
    \mathcal{L}_{\rm{DL},\,ct}^{(2),\,\delta A^2}\sim \frac{\partial^4}{\Lambda_4^4}(\partial\phi)^2
    +\frac{\partial^4}{\Lambda_4^4}F_{\delta A}^2,
\end{equation}
which all are manifestly dressed by more derivatives than the kinetic term of the vector field such that they do not contribute to its renormalization. Crucially, we observe that the worrisome $\sim p^6$ term from power counting \eqref{pcprocaprop} is absent, because it simply cannot be formed in the decoupling limit, which explains the non-trivial cancellations appearing in unitary gauge that ended up curing the effective field theory structure. 

Next, we focus on the corrections induced on the graviton two-point function at one loop in the decoupling limit. The resulting diagrams are again constructed from the surviving vertices in $\mathcal{L}_{\rm{DL}}^{(3)}$ and $\mathcal{L}_{\rm{DL}}^{(4)}$ from \eqref{LagrDL} and depicted in Fig. \ref{figDLbubblegrav}.

\begin{figure}[H]
    \centering
    \includegraphics[width=0.7\linewidth]{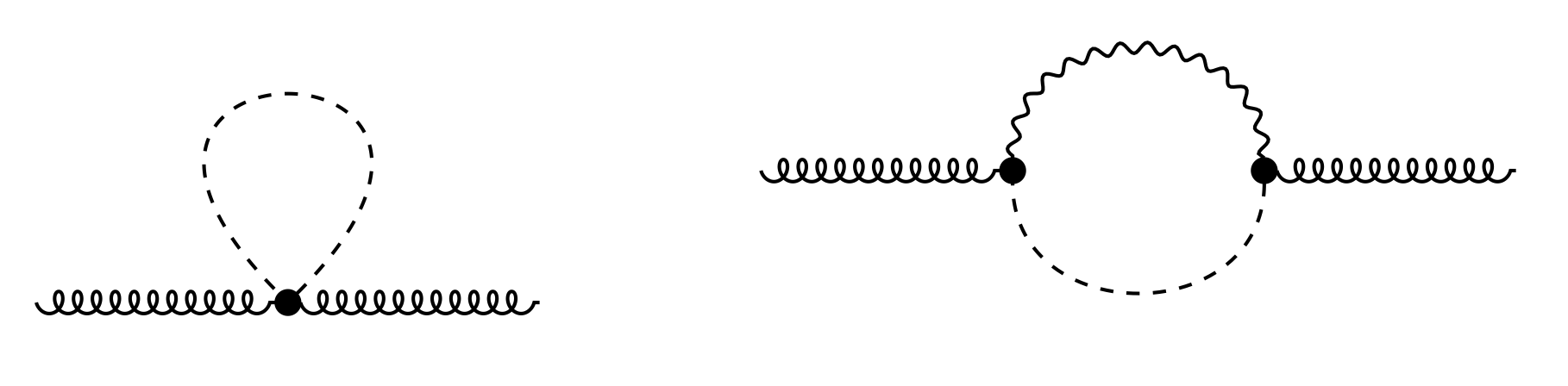}
    \caption{\small{The two mixed-sector one-loop 1PI diagrams inducing quantum corrections to the graviton two-point function that prevail in the decoupling limit. The vertices required for these diagrams arise from $\mathcal{L}^{(3)}_{\rm{DL}}$ and $\mathcal{L}^{(4)}_{\rm{DL}}$, where dashed lines correspond to scalar legs or propagators, and wavy lines to gauge-invariant vector legs or propagators. }}
\label{figDLbubblegrav}
\end{figure}
These diagrams lead to the following irrelevant operator arising at the quantum level
\begin{equation}
\label{ctgravkin}
    \mathcal{L}_{\rm{DL},\, ct}^{(2),\,h^2}\sim \frac{\partial^4}{\Lambda_4^4}(\partial h)^2
\end{equation}
where the dangerous high-momentum terms $\sim p^8$ anticipated in power counting do not appear, which manifestly saves the graviton two-point function from destabilizing counterterms stemming from scalar-vector-tensor interactions at high energies. 

Finally, we examine the mixed one-loop diagrams from Fig. \eqref{fig2ptmix} in the decoupling limit. From the surviving interactions in \eqref{LagrDL}, we directly observe that these graphs with an external graviton and Proca leg each cannot be assembled at all, meaning that the classical kinetic structure remains manifestly protected from such kinetic mixing at the quantum level. 

Overall, the above power counting analysis in the decoupling limit provided a natural explanation for the absence of the highest-power momentum terms anticipated in unitary gauge that would have a destabilizing effect on the Generalized Proca effective field theory. Due to the limited number of remaining interactions \eqref{LagrDL}, only very few one-loop diagrams survive at high energies in the first place, which produce counterterms that are all safely suppressed by the strong coupling scale of the Proca-graviton interactions. While these counterterms naturally do introduce ghosts owing to their higher derivative structure, the associated ghostly masses, however, lie safely above the Proca effective field theory cutoff, such that we do not incur into ghost instabilities within the effective field theory's regime of validity. Overall, all two-point operators generated at the quantum level from one-loop 1PI graphs preserve the classical kinetic structures and respect the crucial effective field theory hierarchy between classical and quantum terms.




\subsection{Three-point functions}
\label{sec6}
We now turn to one-loop diagrams that contribute to the perturbative renormalization of the cubic sector of the theory, by generating local counterterms with three external fields. These diagrams can naturally be organized into four distinct classes according to the field content of their external legs, which allows for corrections to the classical vertices $\sim \delta A^3$, $\sim \delta A^2h$, $\sim h^2\delta A$ and $\sim h^3$ appearing in the full theory \eqref{Lcubic}, where we have suppressed the specific derivative structure for brevity. In the previous section, we observed that only very few one-loop graphs prevail in the decoupling limit based on the comparatively simple structure the theory admits at these energy scales. As the number of contributing one-loop diagrams drastically increases at the three-point level compared to two points, we choose to directly focus on the relevant graphs that actually survive at high energies, as their induced counterterms pose the largest threat to the classical effective field theory structure.

\subsubsection{One-loop corrections to the cubic Proca vertex}
The first set of diagrams carries three external vector legs and thus contributes to the renormalization of cubic interactions of the schematic form $\sim \delta A^3$. Given the interactions \eqref{LagrDL} in the decoupling limit, there is only possible one-loop diagram displayed in Fig. \ref{fig3ptAAA} yielding a correction to the cubic pure Proca vertex at high energies.

\begin{figure}[H]
    \centering
    \includegraphics[width=0.7\linewidth]{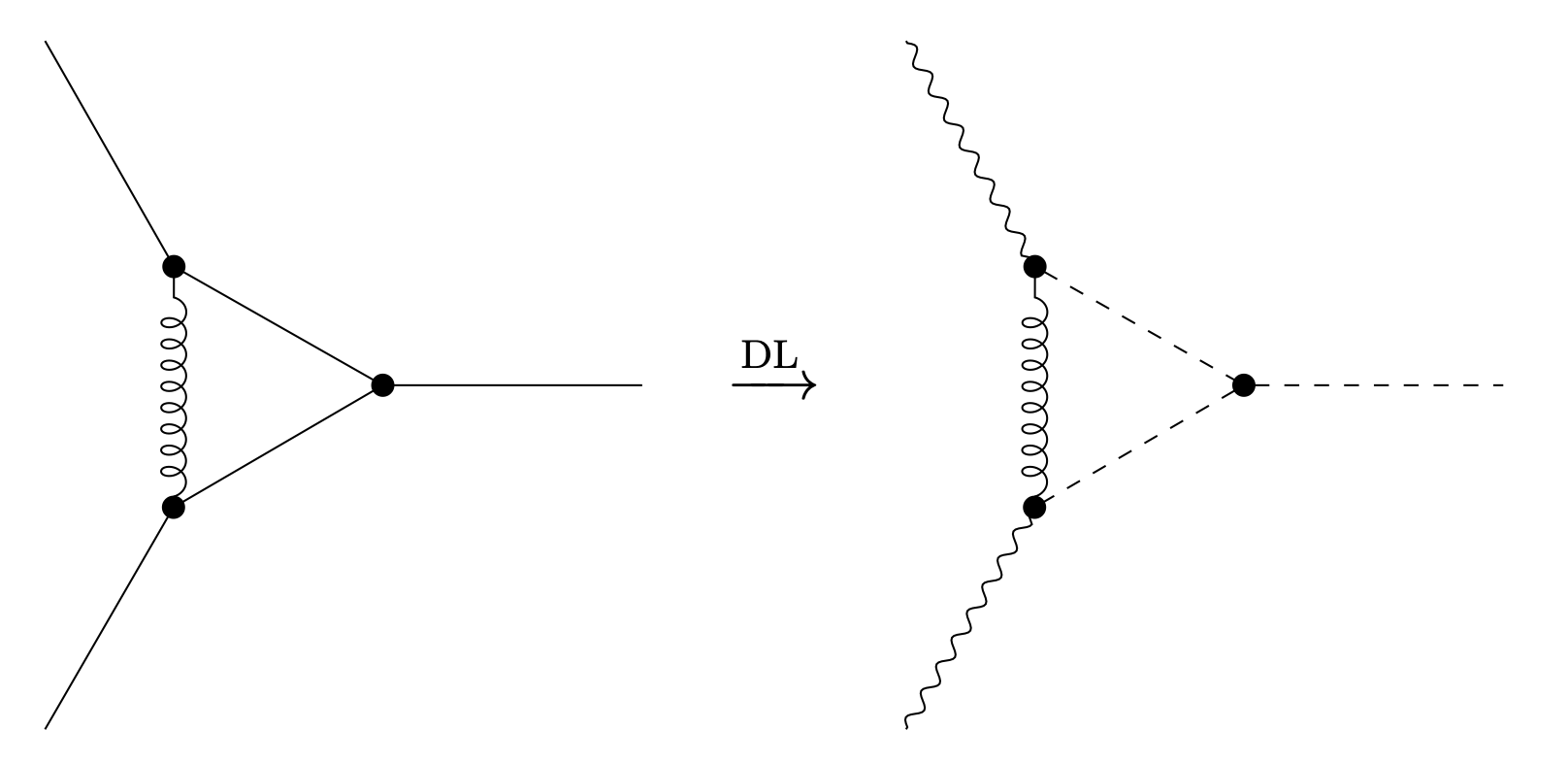}
\caption{\small{The only one-loop 1PI diagram prevailing in the decoupling limit that induces quantum corrections to the classical three-point vertex $\sim \delta A^3$. The left diagram corresponds to the one-loop contribution in unitary gauge, whereas the right one represents the surviving configuration in the decoupling limit, with wavy lines denoting gauge-preserving vector legs $\sim F_{\delta A}$ and dashed lines scalar legs or propagators.}}
\label{fig3ptAAA}
\end{figure}

Power counting the left-hand diagram in Fig. \ref{fig3ptAAA} in unitary gauge within dimensional regularization gives the following naive estimate 

\begin{equation}
    \mathcal{M}_{(\mathcal{L}_{2}^{(3)})^2,\,\mathcal{L}_{3}^{(3)}}\sim \frac{m^2}{\Lambda_2^2 M_{\rm pl}^2}\left(m^2 p+p^3+\frac{p^5}{m^2}+\frac{p^7}{m^4}\right),
\end{equation}
where $p$ indicates external momenta and the series stops at $\sim 1/m^4$ which accounts for two internal vector propagators. Following the same logic as in the previous subsection, we need to be careful about the highest momentum contribution $\sim p^7$, as it amplifies the size of the correspondingly induced counterterm in the decoupling limit and can hence destabilize the effective field theory structure. 

Explicitly calculating the UV-divergent piece of the associated off-shell reduced matrix element formally yields 
\begin{equation}
\begin{aligned}
    \mathcal{M}^{\rm{div}}_{3,\,\delta A^3}\supset\, & \frac{m^4 }{96 \pi^2 \varepsilon M_{\rm pl}^2\Lambda_2^2}\left[M_3^{(1)}(p_1,p_2)+\frac{1}{m^2}M_3^{(3)}(p_1,p_2)+\frac{1}{m^4}M_3^{(5)}(p_1,p_2)\right.\\[10pt]
    &\left.+\frac{1}{m^6}M_3^{(7)}(p_1,p_2)\right],
\end{aligned}
\end{equation}
where we abbreviated all possible contributions from external momenta $p_1,\,p_2$ at power $i$ with $M_3^{(i)}(p_1,p_2)$ due to the increasing complexity of the expressions. We give the explicit results for $M_3^{(i)}(p_1,p_2)$ in an ancillary file and just state the leading piece here

\begin{equation}
\label{M3AAA}
    M_3^{(1)}(p_1,p_2)=126 \left[\epsilon_{12} (\epsilon p_{31}+\epsilon p_{32})-\epsilon_{13} \epsilon p_{22}-\epsilon_{23} \epsilon p_{11}\right],
\end{equation}
where we used the short-hand notations $\epsilon_{ij}\equiv \epsilon_{p_i}\cdot\epsilon_{p_j}$, $\epsilon p_{ij}\equiv \epsilon_{p_i}\cdot p_j$. Evidently, the dangerous contribution $\sim p^7$ is present in Eq. \eqref{M3AAA}, which is worrisome at first glance. However, dimensional analysis of the right-hand diagram in Fig. \ref{fig3ptAAA} in the decoupling limit reveals that this contribution comes in manifestly gauge-invariant form: in the decoupling limit, all propagators admit the healthy form $\sim 1/p^2$ and vertices come with factors $\sim 1/\Lambda_3^3$ or $\sim 1/\Lambda_4^2$, depending on whether they stem from pure Proca or Proca-graviton interactions. Invoking power counting in dimensional regularization together with Lorentz invariance, we find the following quantum-induced counterterm in the decoupling limit

\begin{equation}
\label{ctAAA}
    \mathcal{L}_{\rm{DL},\,ct}^{(3),\,\delta A^3}\sim \frac{\partial^4}{\Lambda_3^3\Lambda_4^4}\big[F_{\delta A}^2(\partial^2\phi)\big],
\end{equation}
which is safely suppressed by the strong coupling scales $\Lambda_3,\,\Lambda_4$ of the theory, such that the classical effective field theory structure remains unharmed.

\subsubsection{One-loop corrections to the cubic vector-tensor vertices}
The next class of diagrams contributes to the renormalization of the cubic mixed-sector interactions, involving either two Proca fields and one graviton or, the other way around, two gravitons and one Proca field as external legs. We first focus on corrections to three-point vertices of the type $\sim \delta A^2h$, where we omit the derivative structure for simplicity. Despite numerous contributing one-loop diagrams in the unitary-gauge formulation, there are only two distinct combinations that are allowed after taking the decoupling limit, which we depict in Fig. \ref{fig3ptAAh}.

\begin{figure}[H]
    \centering
    \includegraphics[width=0.7\linewidth]{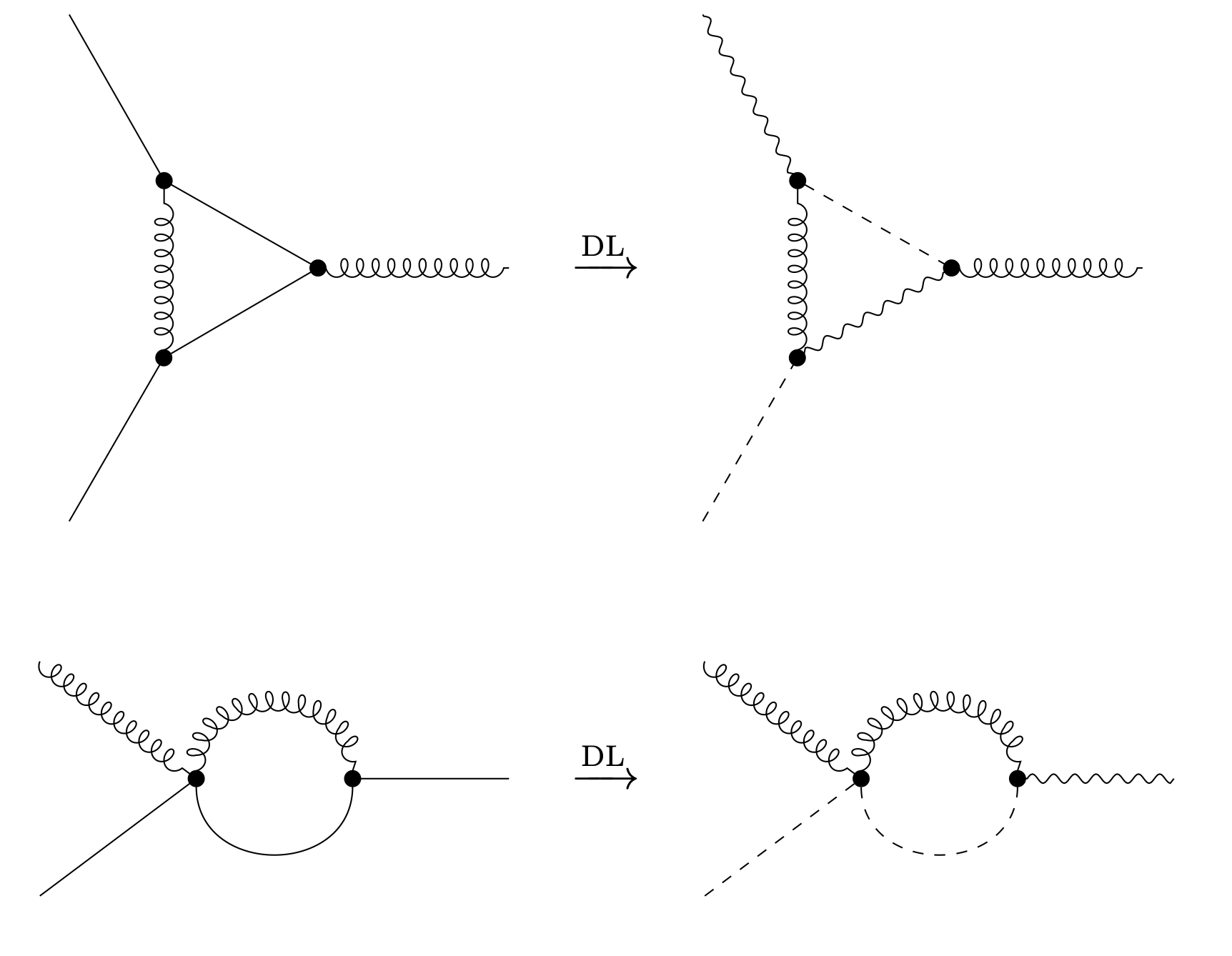}
    \caption{\small{The surviving one-loop 1PI diagrams in the decoupling limit that induce quantum corrections to the classical three-point vertex $\sim  \delta A^2h$. The left diagrams correspond to the one-loop contribution in unitary gauge, whereas the right ones result in the decoupling limit, with wavy lines denoting gauge-preserving vector legs or propagators $\sim F_{\delta A}$ and dashed lines scalar legs or propagators.}}
\label{fig3ptAAh}
\end{figure}
In unitary gauge, we anticipate the following one-loop structure through dimensional analysis and Lorentz invariance

\begin{equation}
    \begin{aligned}
        &\mathcal{M}_{(\mathcal{L}_{2}^{(3)})^3}\sim \frac{1}{ M_{\rm pl}^3}\left(m^4+m^2 p^2+p^4+\frac{p^6}{m^2}+\frac{p^8}{m^4}\right),\\[10pt]
        &\mathcal{M}_{\mathcal{L}_{2}^{(3)},\,\mathcal{L}_{2}^{(4)}}\sim \frac{1}{ M_{\rm pl}^3}\left(m^4+m^2 p^2+p^4+\frac{p^6}{m^2}\right),
    \end{aligned}
\end{equation}
where $p$ stands for external momenta and the expansion stops either at order $\sim 1/m^2$ or $\sim 1/m^4$, depending on the number of internal vector legs. Again, we have to pay special attention to the highest-order contributions $\sim p^6$ and $\sim p^8$ in external momenta, which could heavily detune the classical effective field theory structure at the quantum level.

Through explicit computation, we find the precise one-loop contribution of these two diagrams to the UV divergence of the off-shell reduced matrix element 

\begin{equation}
\begin{aligned}
\label{M3divAAh}
    \mathcal{M}^{\rm{div}}_{3,\,\delta A^2h}\supset\, & \frac{m^4}{960 \pi^2 \varepsilon M_{\rm pl}^3}\left[M_3^{(0)}(p_1,p_2)+\frac{1}{m^2}M_3^{(2)}(p_1,p_2)+\frac{1}{m^4}M_3^{(4)}(p_1,p_2)\right.\\[10pt]
    &\left.+\frac{1}{m^6}M_3^{(6)}(p_1,p_2)\right],
\end{aligned}
\end{equation}
where we adopted the same notation as above and simply give the leading contribution
\begin{equation}
    M_3^{(0)}=-480[E_1\epsilon_{23}+\epsilon_{12}E_3+\epsilon_{13} E_2+2(E\epsilon\epsilon_{123}+E\epsilon\epsilon_{213}+E\epsilon\epsilon_{312})],
\end{equation}
where $E_i^{\mu\nu}$ denotes the graviton polarization and we abbreviated $E_i\equiv E^\mu_{i,\,\mu}$ as well as $E\epsilon\epsilon_{ijk}\equiv E_i^{\mu\nu}\epsilon_{j,\,\mu}\epsilon_{k,\,\nu}$, with higher momentum terms $M_3^{(j)}$ relegated to the ancillary file. In Eq. \eqref{M3divAAh}, notice that the dangerous momentum powers $\sim p^6$ do not vanish explicitly. Therefore, the effective field theory structure is only protected if the associated counterterms at high energies feature the vector field in gauge-preserving form $F_{\delta A}$.\footnote{Due to the enormous complexity of the expression in \eqref{M3divAAh}, we refrain from showing this statement explicitly in the unitary gauge calculation.} Indeed, performing a power counting analysis in dimensional regularization of the corresponding one-loop graphs depicted on the right-hand side of Fig. \ref{fig3ptAAh} yields the following counterterms in the decoupling limit

\begin{equation}
\label{ctAAh}
    \mathcal{L}_{\rm{DL},\,ct}^{(3),\,\delta A^2 h}\sim \frac{\partial^4}{\Lambda_4^6}\big[F_{\delta A}(\partial h)(\partial\phi)\big]+\frac{\partial^4}{\Lambda_4^6}\big[F_{\delta A}\,h\,(\partial^2\phi)\big],
\end{equation}
where one vector field manifestly enters in gauge-invariant form, which tames the bad high-energy behaviour of the possible counterterms expected from unitary-gauge power counting. Furthermore, both quantum-induced interactions are safely under control owing to the strong coupling scale $\Lambda_4$, such that the mass of the associated ghosts does not come close to the effective field theory cutoff. 

The second class of cubic vector-tensor interactions describes vertices with two gravitons and one Proca field that are of the schematic form $\sim h^2\delta A$, where we again suppressed the precise derivative structure for simplicity. For this type of interaction, there are two remnant one-loop diagrams in the decoupling limit given in Fig. \ref{fig3pthhA}.

\begin{figure}[H]
    \centering
    \includegraphics[width=0.7\linewidth]{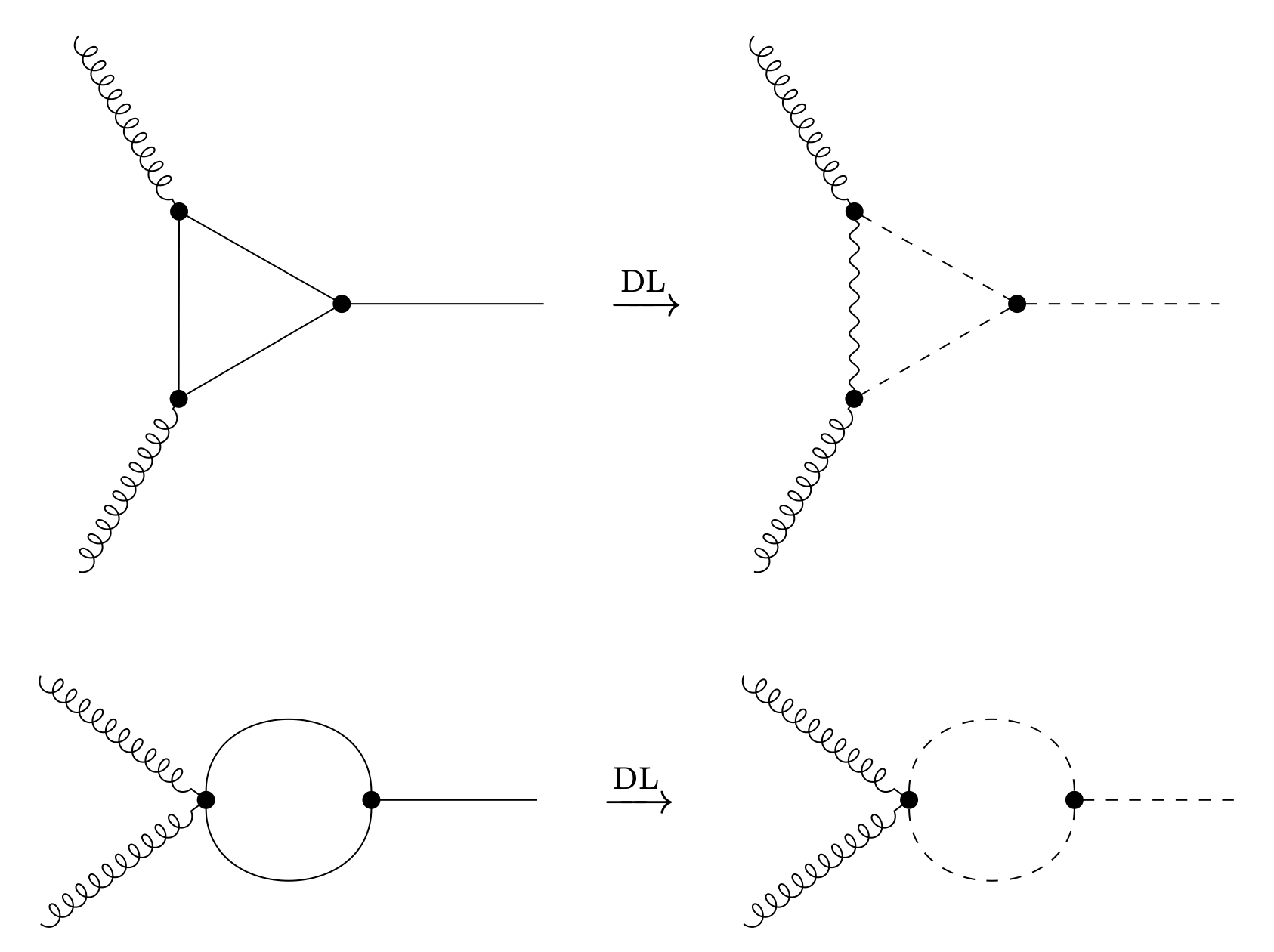}
    \caption{\small{The surviving one-loop 1PI diagrams in the decoupling limit yielding quantum corrections to the three-point vertex $\sim h^2 \delta A$. The left diagrams are in unitary gauge, whereas the right ones result in the decoupling limit, with wavy lines denoting gauge-preserving vector legs $\sim F_{\delta A}$ and dashed lines scalar legs or propagators.}}
\label{fig3pthhA}
\end{figure}
Based on power counting arguments in unitary gauge within dimensional regularization and Lorentz invariance, we expect the following results for these one-loop graphs

\begin{equation}
    \begin{aligned}
        &\mathcal{M}_{(\mathcal{L}_{2}^{(3)})^2,\,\mathcal{L}_{3}^{(3)}}\sim \frac{m^2}{ M_{\rm pl}^2\Lambda_2^2}\left(m^2 p+p^3+\frac{p^5}{m^2}+\frac{p^7}{m^4}+\frac{p^9}{m^6}\right),\\[10pt]
        &\mathcal{M}_{\mathcal{L}_{2}^{(3)},\,\mathcal{L}_{2}^{(4)}}\sim \frac{m^2}{ M_{\rm pl}^2\Lambda_2^2}\left(m^2 p+p^3+\frac{p^5}{m^2}+\frac{p^7}{m^4}\right),
    \end{aligned}
\end{equation}
where the highest power in external momenta $\sim p^9$ can lead to severe quantum detuning. Through explicit calculation, we obtain the following UV-divergent contribution to the reduced matrix element with off-shell legs

\begin{equation}
\begin{aligned}
\label{M3hhAdiv}
    \mathcal{M}^{\rm{div}}_{3,\,h^2\delta A}\supset\, & \frac{-i\,c_3\,m^4}{480 \pi^2 \varepsilon M_{\rm pl}^2\Lambda_2^2}\left[M_3^{(1)}(p_1,p_2)+\frac{1}{m^2}M_3^{(3)}(p_1,p_2)+\frac{1}{m^4}M_3^{(5)}(p_1,p_2)\right.\\[10pt]
    &\left.+\frac{1}{m^6}M_3^{(7)}(p_1,p_2)\right],
\end{aligned}
\end{equation}
with the detailed structure outsourced to the ancillary file, we observe that the dangerous $\sim p^9$ terms are manifestly absent, indicating once again that non-trivial cancellations in unitary gauge seem to magically rescue the effective field theory. Yet, these cancellations can be understood from a  decoupling limit perspective: dimensional analysis together with Lorentz invariance applied to the resulting right-hand diagrams in Fig. \ref{fig3pthhA} within dimensional regularization induces the following one-loop corrections at the Lagrangian level

\begin{equation}
\label{LhhADL}
    \mathcal{L}_{\rm{DL},\,ct}^{(3),\,h^2\delta A}\sim\, 
    \frac{\partial^6}{\Lambda_3^3\Lambda_4^4}\big[h\,(\partial h)(\partial\phi)\big]+
    \frac{\partial^4}{\Lambda_3^3\Lambda_4^4}\big[(\partial h)^2(\partial^2\phi)\big],
\end{equation}
which are safely suppressed by the strong coupling scales $\Lambda_3,\,\Lambda_4$ of the theory in the decoupling limit. Furthermore, Eq. \eqref{LhhADL} explains why there was no $\sim p^9$ in the reduced matrix element \eqref{M3hhAdiv}, simply because terms of this order cannot be formed within the decoupling limit. Overall, we conclude that the classical effective field theory structure in the decoupling limit remains unharmed from one-loop corrections to tree-level diagrams in the mixed Proca-graviton sector.

\subsubsection{One-loop corrections to the cubic graviton vertex}
The last group of one-loop diagrams features only external graviton legs, which gives quantum contributions to the classical three-point interactions $\sim (\partial h)^2h,\,$ in the pure graviton sector. Despite multiple one-loop graphs partaking in the renormalization the three-point graviton vertex in unitary gauge, a quick inspection of the surviving vertices \eqref{LagrDL} shows that there are no surviving diagrams once the decoupling limit is taken. Thus, the cubic pure-graviton interaction does suffer from quantum detuning in the decoupling limit, such that its classical structure remains manifestly protected. 

To summarize, we have witnessed quantum corrections at one loop to be parametrically controlled by the effective field theory cutoffs of the theory in the decoupling limit. Despite worrisome expectations based on dimensional analysis for the full theory in unitary gauge, actual computations of one-loop corrections revealed that most dangerous high-momentum terms are absent, although they would have technically been allowed. These non-trivial cancellations can readily be understood from the decoupling limit, where terms of this order cannot be constructed to begin with. In addition, the decoupling limit analysis shows explicitly that all high-momentum contributions that persevere at high energies originate from counterterms involving the Proca field in gauge-invariant form $F_{\delta A}$, whose improved high-energy behaviour does not destabilize the effective field theory structure. Moreover, all quantum-induced operators so far have come with more derivatives than their corresponding classical pendants, which implies that the classical structure of Generalized Proca theory generically remains protected from renormalization.




\subsection{A glimpse at four-point functions}
\label{sec7}
In this section, we turn our attention to one-loop corrections to the classical four-point functions arising in the minimal model \eqref{minmodel}. Due to the increasing number of contributing diagrams and the complexity of the loop integrals, which comes with high computational costs at quartic order, we restrict our analysis to a general power counting argument in the decoupling limit in order to infer the allowed quantum counterterms for a number of selected diagrams. The one-loop 1PI graphs displayed in Fig. \ref{fig4pthhss} can consistently be constructed from the decoupling limit vertices \eqref{LagrDL} and yield corrections to the surviving quartic scalar-tensor interaction in \eqref{LagrDL} featuring two scalar fields and two gravitons.

\begin{figure}[H]
    \centering
    \includegraphics[width=0.85\linewidth]{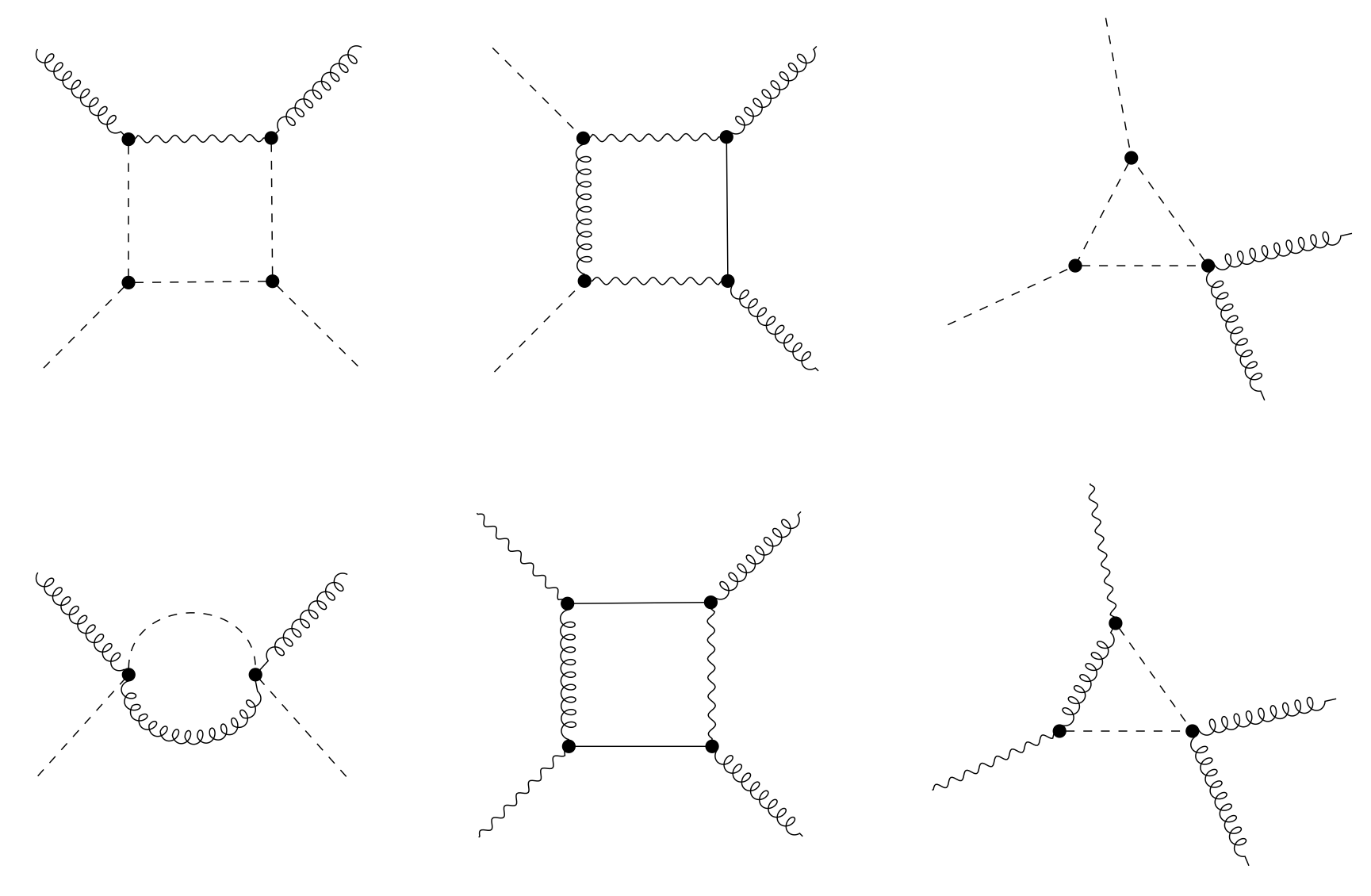}
    \caption{\small{All mixed-sector one-loop 1PI diagrams contributing to the renormalization of the remaining quartic scalar-tensor vertex in the decoupling limit. These diagrams are purely constructed from the $\mathcal{L}^{(3)}_{\rm{DL}}$ and $\mathcal{L}^{(4)}_{\rm{DL}}$ vertices, where the box diagram can either have gauge-preserving transverse vector polarizations $\sim F_{\delta A}$, denoted by wavy lines, or scalar fields as external legs, corresponding to dashed lines. }}
\label{fig4pthhss}
\end{figure}

Invoking dimensional analysis and Lorentz invariance, we find the corresponding quantum-induced counterterms

\begin{equation}
\label{cthhAA}
\begin{aligned}
    \mathcal{L}_{\rm DL,\,ct}^{(4),\,h^2\delta A^2}\sim&\frac{\partial^4}{\Lambda_4^4\Lambda_3^6}\big[(\partial^2\phi)^2(\partial h)^2\big]+\frac{\partial^6}{\Lambda_4^4\Lambda_3^6}\big[(\partial\phi)^2(\partial h)^2+h\,(\partial^2\phi)(\partial\phi) (\partial h)\big]\\[5pt]
    &+\frac{\partial^8}{\Lambda_4^4\Lambda_3^6}\big[h^2(\partial\phi)^2\big]+\frac{\partial^6}{\Lambda_4^8}\big[ h^2F_{\delta A}^2\big]+ \frac{\partial^4}{\Lambda_4^8}\big[(\partial h)^2F_{\delta A}^2\big]\\[5pt]
    &+\frac{\partial^4}{\Lambda_4^8}\big[(\partial h)^2(\partial\phi)^2+h(\partial^2\phi)(\partial h)(\partial\phi)+h^2(\partial^2\phi)^2\big],
\end{aligned}
\end{equation}
which are all safely suppressed by the strong coupling scales of the Proca effective field theory, such that they do not spoil the crucial hierarchy between classical and quantum terms. 

Furthermore, the classical vertices remaining in the decoupling limit allow for the construction of new operators at the quantum level that do not renormalize any classical interaction. For example, three insertions of $\mathcal{L}_{2,\,\rm DL}^{(3)}$ combined with $\mathcal{L}_{3,\,\rm DL}^{(3)}$ enable one-loop diagrams that induce quantum terms describing interactions among one graviton and three scalars, as shown in Fig. \ref{fig4pthsss}.

\begin{figure}[H]
    \centering
    \includegraphics[width=0.3\linewidth]{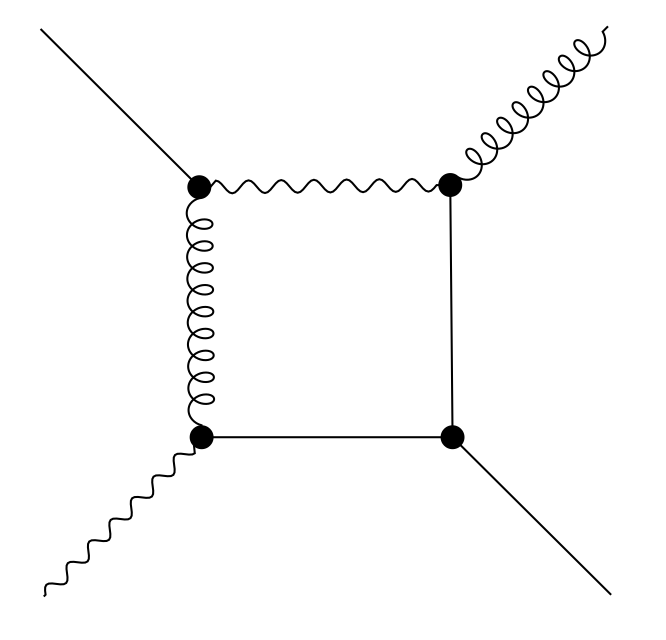}
    \caption{\small{Mixed-sector one-loop 1PI diagram creating a new scalar-tensor interaction at the quantum level in the decoupling limit, constructed from $\mathcal{L}^{(3)}_{2,\,\rm{DL}}$ and $\mathcal{L}^{(3)}_{3,\,\rm{DL}}$ vertices. Wavy lines denote gauge-preserving transverse vector polarizations $\sim F_{\delta A}$, dashed lines correspond to scalar legs. }}
\label{fig4pthsss}
\end{figure}

The corresponding quantum-induced operators found in power counting together with Lorentz invariance reads

\begin{equation}
\label{cthAAA}
    \mathcal{L}_{\rm DL,\,ct }^{(4),\,h\,\delta A^3}\sim \frac{\partial^4}{\Lambda_4^6\Lambda_3^3}\big[(\partial^2\phi)(\partial\phi)(\partial h) F_{\delta A}\big]+\frac{\partial^6}{\Lambda_4^6\Lambda_3^3}\big[(\partial\phi)^2h\,F_{\delta A}\big],
\end{equation}
which are healthily suppressed by the strong coupling scales $\Lambda_3,\,\Lambda_4$ of Generalized Proca theory.

Clearly, the diagram in Fig. \ref{fig4pthsss} is not the only one capable of inducing genuinely new quartic interactions at the quantum level. The decoupling limit Lagrangians \eqref{LagrDL} also generate pure Proca and pure graviton four-point vertices from a number of bubble, triangle and box diagrams. Instead of repeating the same power counting analysis for these 1PI graphs again to show that these scalar-vector-tensor interactions are safely controlled by the effective field theory cutoffs $\Lambda_3,\,\Lambda_4$, we will establish a non-renormalization theorem and show its robustness for the full Generalized Proca effective field theory.




\section{Non-Renormalization Theorem}
\label{NR-Theorem}
We are now in a position to summarize the results of the preceding sections in the form of a non-renormalization theorem governing Generalized Proca effective field theories coupled to gravity.
 \begin{theorem}[Non-Renormalization of Generalized Proca]\label{lem:NR}
 Consider a Generalized Proca effective field theory coupled to gravity and expanded about Minkowski spacetime, with Proca mass 
$m$, Planck mass $M_{\rm pl}$, and a hierarchy of strong-coupling scales defining an effective field theory cutoff 
$\Lambda_{\rm EFT}$. Assume that the classical interaction terms are chosen so as to propagate only the three physical degrees of freedom of a massive vector field and to be free of Ostrogradski instabilities at the classical level.
Then, within the regime of validity of the effective field theory and for energies $E\ll\Lambda_{\rm EFT}$ quantum corrections do not renormalize the classical ghost-free Proca interactions. More precisely:
\begin{itemize}
\item All local operators generated radiatively by graviton-mediated and matter loops necessarily contain additional derivatives per field relative to the classical interaction terms and are suppressed by the strong-coupling scales of the theory.
\item No operators with derivatives per field as those defining the classical Generalized Proca action are generated with coefficients unsuppressed by the effective field theory cutoff.
\item Any higher-derivative operators induced by quantum corrections give rise only to ghost-like excitations with masses parametrically above $\Lambda_{\rm EFT}$ and therefore do not affect the low-energy spectrum or dynamics.
\end{itemize}
Consequently, the operator basis defining ghost-free Generalized Proca theories is closed under radiative corrections within the effective field theory expansion, and the theories are radiatively stable to all orders in perturbation theory when coupled to gravity.
\end{theorem}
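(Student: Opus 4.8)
The plan is to reduce the all-orders assertion about the full graviton-coupled theory to a purely structural statement about the decoupling-limit action of Proposition~\ref{prop:decoupledaction}, which is under complete symmetry control. The engine is Lemma~\ref{lem:DL}: since every local operator that can influence amplitudes below $\Lambda_{\rm EFT}$ must survive in the decoupling limit with a finite coefficient, it suffices to classify the operators that the effective action can generate in the decoupling-limit theory, and then pull the classification back to the full theory. First I would fix the residual symmetries that organize $S_{\rm dec}$: the global shift symmetry $\phi\to\phi+c$ of \eqref{gaugesymmsep}, which the cubic vertex $(\partial\phi)^2\partial^2\phi$ enhances to the Galileon shift $\phi\to\phi+c+b_\mu x^\mu$ under which $\mathcal{L}^{(3)}_{\rm DL}$ varies by a total derivative; the transverse $U(1)$ invariance $\delta A_\mu\to\delta A_\mu-\partial_\mu\alpha$; and linearized diffeomorphism invariance of the graviton. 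These are non-anomalous and are therefore respected by the Wilsonian effective action at every loop order.

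The second step is to convert these symmetries into a restricted basis of admissible building blocks, which I would record as an auxiliary lemma: every local functional generated in $S_{\rm dec}$ at any loop order is a Lorentz scalar built from $\{\partial_\mu\partial_\nu\phi,\,F^{\delta A}_{\mu\nu},\,R^{\rm lin}_{\mu\nu\rho\sigma}\}$ and their derivatives, contracted with $\eta_{\mu\nu}$ and the Levi-Civita symbol. Indeed, Galileon-shift invariance forces $\phi$ to enter only through its second derivatives $\partial_\mu\partial_\nu\phi$ — the statement that no genuine Galileon counterterm is produced being exactly the scalar-sector non-renormalization result we take as established input \cite{Nicolis:2008in,deRham:2012ew,deRham:2013qqa}; $U(1)$ invariance forces the transverse vector to appear only through the field strength $F^{\delta A}_{\mu\nu}$; and linearized diffeomorphism invariance forces the graviton to appear only through the linearized Riemann tensor $R^{\rm lin}_{\mu\nu\rho\sigma}\sim\partial\partial h$, beyond the minimal-coupling vertices already displayed in \eqref{LagrDL}.

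The third step is a derivative-weight count on this restricted basis. Assigning to each operator the grade $g=(\text{number of derivatives})-(\text{number of fields})$, each building block carries strictly positive grade — two derivatives per $\phi$, one per transverse vector carried by $F^{\delta A}_{\mu\nu}$, two per graviton carried by $R^{\rm lin}_{\mu\nu\rho\sigma}$ — so that every vertex of $\mathcal{L}^{(3,4)}_{\rm DL}$ has positive grade, every decoupling-limit propagator contributes the fixed grade of a $\sim 1/p^2$ line, and loop integration in dimensional regularization produces only logarithmic divergences whose local coefficient carries exactly the summed grade of the inserted vertices. Since any one-loop (or $\ell$-loop) 1PI graph correcting a vertex of a given field content necessarily inserts at least two interaction vertices, the grade of the induced counterterm strictly exceeds that of any classical operator of the same field content; this is precisely the statement that every radiative correction acquires additional derivatives per field and is correspondingly suppressed by positive powers of $\Lambda_3$ and $\Lambda_4$. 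The explicit one-loop results of Sec.~\ref{quantumcorrections}, in which the would-be dangerous unitary-gauge structures $\sim p^{2n}/m^{2n-2}$ collapse onto gauge-invariant higher-derivative operators such as \eqref{ctprocakin}, \eqref{ctAAA} and \eqref{cthAAA}, furnish the base case that anchors the grading argument.

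I expect the main obstacle to lie in securing the grading to all orders, specifically in ruling out that a conspiracy of index contractions among many insertions of the mixed $h$--$F^{\delta A}$--$\phi$ vertices produces, after integration by parts, a lower-grade operator that evades the symmetry bound. The delicacy is that the Galileon vertex is invariant only up to a total derivative, so the per-vertex grade is saturated rather than strict on the scalar legs; closing this gap requires the antisymmetry of $F^{\delta A}_{\mu\nu}$ together with the linearized Bianchi identity for $R^{\rm lin}_{\mu\nu\rho\sigma}$ to forbid the dangerous contractions, and is where the constraint structure of Definition~\ref{def:ghostfreevector} ultimately enters. Once strict grading is established, the three bulleted claims follow: the derivative-counting statements are immediate, while the ghost-mass claim follows because a higher-derivative kinetic correction of the schematic form \eqref{ghost} displaces the would-be ghost pole to a mass of order $\Lambda_4^4/m^2$, which the hierarchy \eqref{DL} places parametrically above $\Lambda_{\rm EFT}$. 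Closure of the operator basis and radiative stability to all loop orders are then the direct consequence of Lemma~\ref{lem:DL} transporting the decoupling-limit classification back to the full theory.
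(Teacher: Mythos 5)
Your reduction strategy (use Lemma~\ref{lem:DL} to transport an operator classification in the decoupling limit back to the full theory, with the one-loop results of Sec.~\ref{quantumcorrections} as a base case) matches the skeleton of the paper's proof. However, your central auxiliary lemma --- that every radiatively generated local functional in $S_{\rm dec}$ is built exclusively from $\{\partial_\mu\partial_\nu\phi,\,F^{\delta A}_{\mu\nu},\,R^{\rm lin}_{\mu\nu\rho\sigma}\}$ --- is false in the gravity-coupled theory, and the paper's own classification shows it. The general counterterm derived in Sec.~\ref{robustness}, Eq.~\eqref{Lfullctgen}, explicitly contains the blocks $h/\Lambda_4$, $\partial h/\Lambda_4^2$, $\partial\phi/\Lambda_4^2$ and $F_{\delta A}/\Lambda_4^2$, i.e.\ undifferentiated gravitons, single-derivative gravitons, and single-derivative scalar legs; the paper even flags this as a genuine novelty relative to flat space (see the footnotes around Eqs.~\eqref{delF} and \eqref{betas}): because the scalar couples \emph{linearly} to the graviton through $h(\partial^2\phi)F_{\delta A}+(\partial h)(\partial\phi)F_{\delta A}$, counterterms with legs of type $\partial\phi$ and $h$ do occur. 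The symmetry input you invoke is not available to forbid them: the Galileon shift $\phi\to\phi+b_\mu x^\mu$ is \emph{not} a symmetry of the mixed vertices in \eqref{LagrDL} (only the constant shift of \eqref{gaugesymmsep} survives; $(\partial h)\,b\,F_{\delta A}$ is not a total derivative), and linearized diffeomorphism invariance holds only up to matter equations of motion, so the corresponding Ward identities do not force the graviton into $R^{\rm lin}$-only combinations. Relatedly, your premise that ``every vertex of $\mathcal{L}^{(3,4)}_{\rm DL}$ has positive grade'' fails: with $g=(\#\partial)-(\#\text{fields})$, the mixed vertices $hF_{\delta A}\partial^2\phi$, $(\partial h)(\partial\phi)F_{\delta A}$ and all quartic scalar-tensor vertices have $g=0$, and $F_{\delta A}$ itself has $g=0$, so strict positivity is lost exactly where you would need it.

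The gap you flag yourself --- ruling out lower-grade operators from conspiracies of mixed-vertex insertions --- is therefore not a technical loose end but the place where the symmetry-plus-grading mechanism actually breaks, and the paper closes it by a different mechanism that your proposal does not contain: scale suppression plus dimensional analysis. In the decoupling limit every vertex carries at least one factor of $1/\Lambda_3^3$ or $1/\Lambda_4^2$ and every propagator behaves as $1/p^2$; in dimensional regularization the local divergence of any 1PI graph must then compensate the accumulated inverse powers of $\Lambda_3,\Lambda_4$ by explicit derivatives, which produces the universal $\partial^4$ prefactor in \eqref{Lfullctgen} and hence strictly more derivatives (and stronger suppression) than any classical operator of the same field content, even though individual blocks like $h/\Lambda_4$ have negative grade. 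The all-orders statement then follows because adding a loop at fixed external legs adds a vertex or internal line, hence further $1/\Lambda$ factors, hence further derivatives --- the bookkeeping encoded in the expansion parameters \eqref{betas}. To repair your proposal you would have to replace the restricted-basis lemma by this power-counting argument (or prove a genuinely new Ward-identity statement controlling the $h$, $\partial h$, $\partial\phi$ blocks), since as written the second and third steps rest on a classification that the theory does not satisfy.
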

 \begin{corollary}[Invariance of Low-Energy Degrees of Freedom]\label{coro:NR}
The low-energy dynamics and number of propagating degrees of freedom of Generalized Proca theories remain invariant under quantum corrections below the effective field theory cutoff.
\end{corollary}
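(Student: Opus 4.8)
The plan is to promote the explicit one-loop results of Section~\ref{quantumcorrections} to an all-orders statement by working entirely within the \emph{decoupling limit} of Definition~\ref{def:decoupling} and invoking Lemma~\ref{lem:DL} as the organizing principle. Since Lemma~\ref{lem:DL} guarantees that any radiatively generated operator capable of destabilizing the effective field theory below $\Lambda_{\rm EFT}$ must already appear with a finite, unsuppressed coefficient in the decoupled action $S_{\rm dec}$ of Proposition~\ref{prop:decoupledaction}, it suffices to classify the admissible counterterms of $S_{\rm dec}$ and show that none of them carries the derivative-per-field content of a classical Generalized Proca interaction. The full-theory statement in unitary gauge then follows automatically, and the non-trivial cancellations observed in Section~\ref{quantumcorrections} acquire a structural explanation rather than remaining accidental.

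The first key step is to exploit the residual symmetries that survive the decoupling limit. As recorded in Eq.~\eqref{gaugesymmsep}, the single gauge redundancy of the St\"uckelberg theory factorizes into an exact $U(1)$ acting on $\delta A_\mu$ and a global shift symmetry acting on $\phi$, while the graviton retains its linearized diffeomorphism invariance. Because the decoupling-limit regulator can be chosen to preserve these symmetries, and they are free of anomalies in this abelian, derivatively coupled sector, every counterterm must be built from the invariants $F^{\delta A}_{\mu\nu}$, $\partial_\mu\phi$ together with its higher derivatives, and the linearized curvature $\partial\partial h$. Consequently, each external vector leg necessarily enters through $F^{\delta A}_{\mu\nu}$ and each external scalar leg through at least one derivative $\partial_\mu\phi$, which already forbids the regeneration of the lower-derivative pieces of the classical kinetic and mass terms and reproduces the gauge-invariant structures found in Eqs.~\eqref{ctprocakin}, \eqref{ctgravkin}, \eqref{ctAAA}, \eqref{ctAAh} and \eqref{LhhADL}.

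The second step is a derivative-counting argument for the interaction vertices of $S_{\rm dec}$ in \eqref{LagrDL}. Each vertex carries strictly more derivatives than the number of fields it joins — the Galileon vertex $(\partial\phi)^2(\partial^2\phi)$ distributes four derivatives over three lines, and the mixed vertices $\propto 1/\Lambda_4^2$ likewise carry an excess — and the antisymmetric, total-derivative organization responsible for ghost-freedom ensures that at least two derivatives remain attached to every external leg once the loop momenta are integrated out. In dimensional regularization the local divergence is polynomial in the external momenta, so an arbitrary $L$-loop 1PI graph built from these vertices produces a counterterm whose external-momentum degree, per external field, strictly exceeds that of the corresponding classical operator, with the extra momentum factors compensated by inverse powers of $\Lambda_3$ and $\Lambda_4$ fixed by Eq.~\eqref{scscale}. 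This establishes the first two bullets of the theorem, the explicit amplitudes of Section~\ref{quantumcorrections} furnishing the $L=1$ instance. The third bullet then follows: a higher-derivative operator $\sim \partial^{2n}/\Lambda^{p}$ modifies an inverse propagator only at momenta of order the relevant strong-coupling scale, so the induced ghosts acquire masses $\gtrsim \Lambda_3$, parametrically above $\Lambda_{\rm EFT}$, and decouple from the low-energy spectrum, which proves Corollary~\ref{coro:NR}.

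The main obstacle is controlling the all-orders extension rigorously rather than diagram by diagram: one must argue that no operator which is formally subleading in the decoupling-limit scaling can nonetheless be generated with a coefficient unsuppressed below $\Lambda_{\rm EFT}$ once the Planck-suppressed, non-minimal entanglement of the graviton through $\sqrt{-g}$ and the covariant derivatives is reinstated. This requires showing that the scaling assignments of Definition~\ref{def:decoupling} are compatible with the loop expansion to all orders — essentially that taking the decoupling limit \emph{commutes} with renormalization — and that the residual symmetries of Eq.~\eqref{gaugesymmsep} are genuinely preserved by the regularization at every loop order. Establishing this commutativity, together with the absence of symmetry-violating counterterms from the gravitational sector, is the delicate part; the remaining derivative bookkeeping is routine once the symmetry-constrained operator basis of $S_{\rm dec}$ has been fixed.
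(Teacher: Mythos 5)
Your overall scaffolding coincides with the paper's own proof: Lemma~\ref{lem:DL} as the necessary condition on admissible counterterms, a classification of decoupling-limit operators, an all-orders power-counting extension (the paper's Sec.~\ref{robustness}), and the corollary following because every induced operator is higher-derivative with its ghost parametrically above $\Lambda_{\rm EFT}$. However, two of your supporting steps are wrong as stated and would fail if carried out. First, your symmetry-based operator basis over-constrains the graviton sector: linearized diffeomorphism invariance is not available as a constraint on counterterms (it is broken by the gauge fixing, and the decoupling-limit vertices in \eqref{LagrDL} themselves involve bare $h$), and indeed the paper's explicit counterterms --- \eqref{ctgravkin}, \eqref{LhhADL}, \eqref{cthhAA}, and the generic form \eqref{Lfullctgen} with its factors $(h/\Lambda_4)^{c_6}$ and $(\partial h/\Lambda_4^2)^{c_5}$ --- depend on bare $h$ and $\partial h$, not only on the linearized curvature $\partial\partial h$. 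A classification admitting only $\partial\partial h$ would miss the counterterms that are actually generated, so the derivative bookkeeping built on it cannot be trusted; the $U(1)$ and shift-symmetry parts of your argument are sound and are implicitly what forces the $F_{\delta A}$ and $\partial\phi$ structures in the paper's basis.

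Second, your derivative-counting premises are factually false for the mixed vertices: $h\,F_{\delta A}\,\partial^2\phi/\Lambda_4^2$ carries three derivatives on three fields and $h^2(\partial^2\phi)^2/\Lambda_4^4$ carries four on four --- not ``strictly more derivatives than the number of fields it joins'' --- and external graviton legs can carry zero derivatives, contradicting your claim that at least two derivatives remain attached to every external leg. The mechanism that actually does the work in Sec.~\ref{robustness} is dimensional rather than vertex-combinatorial: every vertex contributes a factor $1/\Lambda_3^3$ or $1/\Lambda_4^2$, all decoupling-limit propagators scale as $1/p^2$, and in dimensional regularization the local divergence is polynomial in external momenta, so restoring total mass dimension four forces the overall $\partial^4$ enhancement in \eqref{Lfullctgen} relative to the classical operators; adding loops adds vertices or internal lines, hence further inverse powers of $\Lambda_{3,4}$ compensated by yet more derivatives. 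Your flagged ``delicate point'' (commutativity of the decoupling limit with renormalization, and symmetry preservation by the regulator) is a genuine assumption, but it is the same one the paper makes without further proof; the steps that actually need repair in your proposal are the two identified above.
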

\begin{proof}
The proof proceeds in three steps and relies on the effective field theory hierarchy of scales, the existence of a well-defined decoupling limit, and the classification of radiatively generated local operators.
\paragraph{$(i)$ Decoupling-limit control of counterterms:}
As established in Lemma \ref{lem:DL}, the decoupling limit defines a controlled scaling regime in which all operators contributing to physical amplitudes at energies 
$E\ll\Lambda_{\rm EFT}$ must survive with finite coefficients. Any local operator that vanishes or becomes parametrically suppressed in this limit cannot be generated radiatively with an unsuppressed coefficient below the cutoff. The decoupling limit therefore provides a necessary condition for the appearance of admissible counterterms in the effective action. Its robustness for the entire class will be further confirmed in \ref{robustness}.
\paragraph{$(ii)$ Operator classification at the quantum level:}
The explicit one-loop computations of two- and three-point functions performed in Section \ref{quantumcorrections} exhaust the set of radiatively generated local operators compatible with Lorentz invariance, locality, and the symmetries of the theory. In all cases, the resulting operators contain additional derivatives per field relative to the classical ghost-free interactions and are suppressed by the strong-coupling scales identified in the decoupling limit. In particular, no operators with derivatives per field as those defining the classical Generalized Proca action are generated. This establishes that the classical operator basis is closed under radiative corrections at one loop within the effective field theory expansion.
\paragraph{$(ii)$ Extension to all loop orders:}
The closure of the operator basis and the derivative suppression identified at one loop extend to all orders in perturbation theory. Indeed, higher-loop diagrams can only generate local counterterms built from the same operator structures allowed by the decoupling limit and are necessarily further suppressed by additional powers of the strong-coupling scales. Since no lower-derivative operators survive in the decoupling limit, none can be generated radiatively with coefficients unsuppressed by 
$\Lambda_{\rm EFT}$ at any loop order (see section \ref{robustness}). 

Combining these results, we conclude that quantum corrections do not renormalize the classical ghost-free Proca interactions below the effective field theory cutoff. Any higher-derivative operators induced radiatively correspond to degrees of freedom with masses parametrically above $\Lambda_{\rm EFT}$ and therefore do not affect the low-energy spectrum or dynamics. This establishes the non-renormalization and radiative stability of Generalized Proca theories coupled to gravity within the effective field theory framework.
\end{proof}

 \begin{remark}[Scope of gravitational coupling]\label{rem:gravity}
The non-renormalization theorem is established for Generalized Proca effective field theories coupled to gravity in the weak-field regime, i.e. expanded around a specific spacetime and treated perturbatively in the gravitational fluctuations. The proof relies only on the derivative structure of the interactions and on effective field theory power counting, and does not assume any special properties of non-linear or strongly curved gravitational backgrounds. For such backgrounds Vainshtein screening at quantum level will be relevant.
 \end{remark}
  \begin{remark}[Stability under field redefinitions]\label{rem:redefinition}
The conclusions of Theorem \ref{NR-Theorem} are invariant under local field redefinitions of the Proca and metric fields that preserve the effective field theory expansion. Such redefinitions may reshuffle operators within the effective action but cannot generate lower-derivative interactions or modify the operator classification below the cutoff. Consequently, the non-renormalization property is a statement about the equivalence class of ghost-free Generalized Proca theories rather than a specific choice of Lagrangian representation.
\end{remark}
  \begin{remark}[Matter couplings]\label{rem:matter}
The argument extends straightforwardly to the inclusion of minimally coupled matter fields whose interactions respect locality and the effective field theory hierarchy of scales. Matter loops can generate additional higher-derivative operators but are subject to the same decoupling-limit constraints and therefore do not alter the non-renormalization of the classical Proca interactions below the effective field theory cutoff.
 \end{remark}
 \begin{remark}[Role of the effective field theory cutoff]\label{rem:cutoff}
The non-renormalization theorem is inherently an effective field theory statement and applies only at energies parametrically below the cutoff 
$\Lambda_{\rm EFT}$. At energies approaching the cutoff, higher-derivative operators become important and the effective description necessarily breaks down. The theorem therefore does not address the ultraviolet completion of Generalized Proca theories, but rather constrains their infrared quantum consistency.
\end{remark}
 \begin{remark}[Relation to flat-space results]\label{rem:flatresults}
In the limit of vanishing gravitational interactions, Theorem \ref{NR-Theorem} reduces to the known non-renormalization properties of Generalized Proca and Galileon theories in flat spacetime. The present analysis shows that these structural properties persist under the inclusion of weak gravitational fields, despite the absence of an exact gauge symmetry protecting the massive vector interactions.
\end{remark}




\section{Structural Origin and Robustness of the Non-Renormalization Theorem}
\label{robustness}
In this section we provide a structural interpretation of the non-renormalization theorem established above and discuss the robustness of the result beyond the explicit perturbative computations.
While the minimal model chosen in \eqref{minmodel} allows for tractable computations of one-loop corrections, in this section, we include all the Lagrangians $\mathcal{L}_{i\leq 6}$ from \eqref{gpLagr} with more generic coupling functions. Since we are interested in understanding how quantum corrections involving the graviton affect the non-renormalization theorem for the Proca effective field theory on flat space \cite{Heisenberg:2020jtr}, we require the coupling functions $G_i$ in the covariant theory \eqref{gpLagr} to be structured such that the corresponding Generalized Proca interactions on flat space, meaning the vector Galileons, are recovered as part of the perturbative expansion around the background configuration \eqref{bgconfig}. In general, we assume that the coupling functions $G_i$ are smooth and admit the following Taylor expansions
\begin{equation}
\label{Gexp}
    \begin{aligned}
        &G_2(X,Y,F)=\bar{G}_2(0,0,0)+\bar{G}_{2,X}(0,0,0)\delta A^2+\bar{G}_{2,F}(0,0,0)F_{\delta A}^2+\mathcal{O}(\delta A^4),\\[10pt]
        &G_i(X)=\bar{G}_i(0)+\bar{G}_{i,X}(0)\delta A^2+\bar{G}_{i,XX}(0)\delta A^4+\mathcal{O}(\delta A^6)\qquad\quad\qquad\, \rm{for}\quad i\geq 3,\\[10pt]
        &G_{i,X}(X)=\bar{G}_{i,X}(0)+\bar{G}_{i,XX}(0)\delta A^2+\bar{G}_{i,XXX}(0)\delta A^4+\mathcal{O}(\delta A^6)\qquad \rm{for}\quad i\geq 4,
    \end{aligned}
\end{equation}
where we have expanded perturbatively in the vector field around the background configuration \eqref{bgconfig}, which further imposes $\bar{G}_2=0$. The couplings $\bar{G}_i$, $\bar{G}_{i,X}$ evaluated on the background become constant parameters that encode the effective field theory coefficients and need to be chosen in terms of the classical scales $m$, $\Lambda_2$ and $M_{\rm pl}$ such that the resulting Lagrangians have the correct mass dimension. Note that the coefficients $\bar{G}_{2,F}(0,0,0)$ and $\bar{G}_4(0)$ contain the normalization of the vector and graviton kinetic terms, respectively. Regarding the functions $G_2,\,G_3$ and $g_5$, their effective field theory expansion parallels the one on flat space \cite{Heisenberg:2020jtr}

\begin{equation}
\label{Giexp}
    \begin{aligned}
        &G_2(X,Y,F)=\Lambda_2^4\left(c_{2,1}^F\frac{F_{\delta A}^2}{\Lambda_2^4}+c_{2,1}^X\frac{m^2\delta A^2}{\Lambda_2^4}+c_{2,2}^X\frac{m^4\delta A^4}{\Lambda_2^8}+c_{2,2}^F\frac{F_{\delta A}^4}{\Lambda_2^8}+...\right)\\[10pt]
        &G_3(X)=\Lambda_2^2\left(c_{3,0}+c_{3,1}\frac{m^2\delta A^2}{\Lambda_2^4}+c_{3,2}\frac{m^4\delta A^4}{\Lambda_2^8}+...\right),\\[5pt]
        &\qquad\quad = \frac{m^2\delta A^2}{\Lambda_2^2}\left(c_{3,1}+c_{3,2}\frac{m^2\delta A^2}{\Lambda_2^4}+...\right),\\[10pt]
        &g_5(X)=\frac{1}{\Lambda_2^2}\left(\tilde{c}_{5,0}+\tilde{c}_{5,1}\frac{m^2\delta A^2}{\Lambda_2^4}+\tilde{c}_{5,2}\frac{m^4\delta A^4}{\Lambda_2^8}+...\right),
    \end{aligned}
\end{equation}
where the $c_{i,j},\,\tilde{c}_{i,j}$ denote the dimensionless effective field theory coefficients that precede powers of $\delta A^2$ in the dimensionless combination $\frac{m^2\delta A^2}{\Lambda_2^4}$, as required by the effective field theory rulebook. Conventionally, we set $c_{2,1}^X=1$ such that loop contributions give corrections to the squared mass $m^2$ directly. Furthermore, we suppress the explicit expansion of $G_2$ in $Y$, as we can schematically absorb these terms through suitable multiplications of $\frac{m^2\delta A^2}{\Lambda_2^4}$ with $\frac{F_{\delta A}^2}{\Lambda_2^4}$, as further discussed in appendix \ref{app4}. The second line in the equation for $G_3$ follows from the fact that terms proportional to $\bar{G}_3(0)$ amount to total derivatives and hence drop out of the analysis, such that we can set $\bar{G}_3(0)=c_{3,0}=0$ without loss of generality. 

Since the vector self-interactions in the Lagrangians $\mathcal{L}_{4,5,6}$ \eqref{gpLagr} are encoded in the terms proportional to the derivatives of the coupling functions $G_{i,X}$ for $i=4,5,6$, we require the perturbative expansion of $G_{i,X}$ in \eqref{Gexp} to reproduce the Generalized Proca effective field theory on flat space, which looks as follows
\begin{equation}
\label{GiXexp}
    \begin{aligned}
        &G_{4,X}(X)=c_{4,1}+c_{4,2}\frac{m^2\delta A^2}{\Lambda_2^4}+c_{4,3}\frac{m^4\delta A^4}{\Lambda_2^8}+...,\\[10pt]
        &G_{5,X}(X)=\frac{1}{\Lambda_2^2}\left(c_{5,1}+c_{5,2}\frac{m^2\delta A^2}{\Lambda_2^4}+c_{5,3}\frac{m^4\delta A^4}{\Lambda_2^8}+...\right),\\[5pt]
        &\qquad\qquad = \frac{1}{\Lambda_2^2}\left(c_{5,2}\frac{m^2\delta A^2}{\Lambda_2^4}+c_{5,3}\frac{m^4\delta A^4}{\Lambda_2^8}+...\right),\\[10pt]
        &G_{6,X}(X)=\frac{1}{\Lambda_2^4}\left(c_{6,1}+c_{6,2}\frac{m^2\delta A^2}{\Lambda_2^4}+c_{6,3}\frac{m^4\delta A^4}{\Lambda_2^8}+...\right),
    \end{aligned}
\end{equation}
where, again, the prefactors involving powers of $\sim 1/\Lambda_2$ take care of the Lagrangian dimension. Here, terms proportional to $\bar{G}_{5,X}(0)\sim c_{5,1}$ end up being total derivatives and can safely be neglected, so we set $c_{5,1}=0$ without loss of generality \cite{Heisenberg:2014rta, Heisenberg:2020jtr}. To obtain the full coupling functions, we still need to integrate \eqref{GiXexp} with respect to $X$. As pointed out in \cite{Heisenberg:2014rta,DeFelice:2016yws}, vector Galileons on curved spacetimes start at quadratic order $G_4(X)\sim X^2$ without linear dependence on $X$, so we additionally set $c_{4,1}=0$ to unambiguously recover the flat space quartic Generalized Proca interaction. Overall, the full coupling functions take the schematic form
\begin{equation}
\begin{aligned}
\label{vecgal}
    &G_2(X,Y,F)=(F-m^2 X)\left(\frac{m^2 X}{\Lambda_2^4}\right)^{a_1}\left(\frac{F}{\Lambda_2^2}\right)^{a_2},\\[5pt]
    &G_3(X)=\frac{m^2 X}{\Lambda_2^2}\left(c_{3,1}+c_{3,2}\frac{m^2 X}{\Lambda_2^4}+...\right),\\[8pt]
    &G_4(X)=\frac{M_\text{pl}^2}{2}+\frac{m^2 X^2}{\Lambda_2^4}\left(c_{4,2}+c_{4,3}\frac{m^2 X}{\Lambda_2^4}+...\right),\\[8pt]
    &G_5(X)=c_{5,0}+\frac{m^2 X^2}{\Lambda_2^6}\left(c_{5,2}+c_{5,3}\frac{m^2 X}{\Lambda_2^4}+...\right),\\[8pt]
    &g_5(X)=\frac{1}{\Lambda_2^2}\left(\tilde{c}_{5,0}+\tilde{c}_{5,1}\frac{m^2 X}{\Lambda_2^4}+...\right),\\[8pt]
    &G_6(X)=\frac{c_{6,0}}{\Lambda_2^2}+\frac{X}{\Lambda_2^4}\left(c_{6,1}+c_{6,2}\frac{m^2 X}{\Lambda_2^4}+...\right),
\end{aligned}
\end{equation}
where we have set $c_{4,0}= M_{\rm pl}^2/2$ and $c_{2,1}^F=1$ to canonically normalize the kinetic terms of the graviton and the Proca field. Although all Lorentz contractions are suppressed, Eq. \eqref{vecgal} should be understood in the sense that the theory assumes three propagating degrees of freedom. The dimensionless factors raised to powers $a_1,\, a_2\geq 0$ are implicitly summed over as in \eqref{Giexp} and include the flat-space effective field theory structure on the one hand but also lead to non-trivial interactions with the graviton on the other, particularly through expansions of the Proca kinetic term. The resulting Lagrangians at third and fourth order in perturbations are schematically given in Appendix \ref{app4}. 
To go beyond the vector Galileon model, consider the following very generic effective field theory expansion for the coupling functions
\begin{equation}
\begin{aligned}
\label{genmodel}
    &G_2(X,Y,F)=(F-m^2 X)\left(\frac{m^2 X}{\Lambda_2^4}\right)^{a_1}\left(\frac{F}{\Lambda_2^2}\right)^{a_2},\\[5pt]
    &G_3(X)=\frac{m^2 X}{\Lambda_2^2}\left(c_{3,1}+c_{3,2}\frac{m^2 X}{\Lambda_2^4}+...\right),\\[8pt]
    &G_4(X)=\frac{M_\text{pl}^2}{2}+\Lambda_2^2\left(c_{4,1}\frac{m^2 X}{\Lambda_2^4}+c_{4,2}\frac{m^4 X^2}{\Lambda_2^8}+...\right),\\[8pt]
    &G_5(X)=c_{5,0}+c_{5,2}\frac{m^4 X^2}{\Lambda_2^8}+c_{5,1}\frac{m^6 X^3}{\Lambda_2^{12}}+...,\\[8pt]
    &g_5(X)=\frac{1}{\Lambda_2^2}\left(\tilde{c}_{5,0}+\tilde{c}_{5,1}\frac{m^2 X}{\Lambda_2^4}+...\right),\\[8pt]
    &G_6(X)=\frac{1}{\Lambda_2^2}\left(c_{6,0}+c_{6,1}\frac{m^2 X}{\Lambda_2^4}+...\right),
\end{aligned}
\end{equation}
where we directly set $c_{3,0}=c_{5,1}=0$, as the corresponding terms in the Lagrangians amount to total derivatives and can hence be disregarded. The classical scales take care of the right Lagrangian dimension and the expansion again consists of the dimensionless building blocks $\frac{m^2X}{\Lambda_2^4}$ and $\frac{F}{\Lambda_2^2}$. This second model describes the most straightforward effective field theory expansion one can write down for Generalized Proca theory.
However, since factors of $X$ accompanying vector self-interactions are suppressed by higher powers of $m^2/\Lambda_2^4$ compared to the vector Galileon model \eqref{vecgal}, there are no surviving interaction terms beyond the ones stemming from the expansion of $\mathcal{L}_2$ in the decoupling limit, rendering the theory \eqref{genmodel} a subclass of the vector Galileon model at high energies. Therefore, we proceed by establishing radiative stability for the general vector Galileon model \eqref{vecgal}. Since this model is adjusted to explicitly contain the flat-space interactions, a large part of the analysis will parallel the derivation on flat space. Therefore, it is useful to briefly remind ourselves how the non-renormalization theorem was established on flat space, which we elaborate on in the next section.

\subsection{Reviewing the non-renormalization on flat space}
On flat space, any Generalized Proca effective field theory can be written as a schematic expansion in four essential building blocks \cite{Heisenberg:2020jtr}

\begin{equation}
\label{gpflat}
    \mathcal{L}_{\rm{flat}}\sim(F_{\delta A}^2+m^2\delta A^2)\left(\frac{m^2\delta A^2}{\Lambda_2^4}\right)^{a_1}\left(\frac{F_{\delta A}^2}{\Lambda_2^4}\right)^{a_2}\left(\frac{\partial\,\delta A}{\Lambda_2^2}\right)^{a_3},\qquad a_1,a_2,a_3\geq 0
\end{equation}
where all quantities are Lorentz-contracted such that the theory remains ghost-free and propagates three degrees of freedom. The kinetic and mass term upfront ensure the correct Lagrangian dimension and are supplemented by dimensionless factors $m^2\delta A^2/\Lambda_2^4$ and $F_{\delta A}^2/\Lambda_2^4$ stemming from expansions of the coupling functions $G_{i\geq 2}$ and $g_5$. The last term is restricted to $0\leq a_3\leq 3$ and enters accordingly in the higher Lagrangians $\mathcal{L}_{4,5,6}$. The block representation \eqref{gpflat} is particularly useful to analyse the theory's behaviour in the decoupling limit, as each classical operator can be studied individually, resulting in

\begin{equation}
\label{LDLflat}
    \mathcal{L}_{\rm flat,\,DL}\sim (F_{\delta A}^2+(\partial\phi)^2)\left(\frac{\partial^2\phi}{\Lambda_3^3}\right)^{a_3},
\end{equation}
with terms proportional to $F_{\delta A}^2$ restricted to $a_3\leq 2$. Constructing loop diagrams from the decoupling limit \eqref{LDLflat} implies that each vertex contributes a factor of at least $\sim 1/\Lambda_3^3$. For one-loop 1PI diagrams treated in dimensional regularization, all induced counterterms consist of the two essential building blocks $\frac{\partial F_{\delta A}}{\Lambda_3^3}$ and $\frac{\partial^2\phi}{\Lambda_3^3}$, such that any quantum operator admits the following form

\begin{equation}
\label{ctDL}
    \mathcal{L}_{\rm flat,\,DL}^{\rm ct}\sim \partial^4\left(\frac{\partial F_{\delta A}}{\Lambda_3^3}\right)^{2b_2}\left(\frac{\partial^2\phi}{\Lambda_3^3}\right)^{b_3}
\end{equation}
where $2b_2+b_3=N\geq 2$ with $N$ the number of external legs and $b_{2,3}\geq 0$ denoting positive integers. Overall, the flat-space Proca EFT arranges itself as expansion in three distinct parameters

\begin{equation}
\label{flatEFTexp}
    \mathcal{L}_{\rm flat,\,DL}\sim (F_{\delta A}^2+(\partial\phi)^2)\,\alpha_{\rm cl}^{a_3}+(F_{\delta A}^2+(\partial\phi)^2)\,\alpha_{\rm q}^{2+n}\,\alpha_{\rm \tilde{q}}^l\, \alpha_{\rm cl}^{m},\quad a_3,m,n,l\geq 0,
\end{equation}
where the classical and quantum parameters are
\begin{equation}
\label{alphas}
    \alpha_{\rm cl}\equiv \frac{\partial^2\phi}{\Lambda_3^3},\quad \alpha_{\rm q}\equiv\frac{\partial^2}{\Lambda_3^2},\quad \alpha_{\rm \tilde{q}}\equiv\frac{F_{\delta A}^2}{\Lambda_3^4}.
\end{equation}

While $\alpha_{\rm cl}$ and $\alpha_{\rm q}$ are known from Galileon theories, where they control the classical, non-linear interactions and quantum-induced operators, respectively, the quantum parameter $\alpha_{\rm \tilde{q}}$ is genuinely new in vector theories. Importantly, Eq. \eqref{flatEFTexp} shows that the induced counterterms at the quantum level always come with additional derivatives compared to the classical interactions, implying that the classical structure of Generalized Proca theory on flat space does not get renormalized in the decoupling limit. This establishes a clear hierarchy between classical and quantum operators and ensures the validity of Vainshtein screening, as there exists a region characterized by $\alpha_{\rm q,\tilde{q}}\ll 1$ in which classical non-linear interactions dominate over quantum effects and are comparable in size to the kinetic term with $\alpha_{\rm cl}\sim \mathcal{O}(1)$. In conclusion, Generalized Proca theory in the decoupling limit is manifestly protected from large quantum corrections, which renders the effective field theory coefficients technically natural and establishes quantum stability for the full theory in unitary gauge, as described in Sec. \ref{DL_Hierarchy}. Since the operations of taking the decoupling limit and computing quantum corrections are interchangeable, we can directly derive the quantum counterterms that give the largest contribution in unitary gauge from \eqref{ctDL} after substituting $\partial\phi\rightarrow m\,\delta A$, yielding

\begin{equation}
    \mathcal{L}_{\rm flat}^{\rm ct}\sim \partial^4\left(\frac{\partial F_{\delta A}}{\Lambda_3^3}\right)^{2b_2}\left(\frac{m\,\partial\,\delta A}{\Lambda_3^3}\right)^{b_3},
\end{equation}
which demonstrates explicitly that the dangerous high-momentum terms expected from power counting are manifestly absent. Notably, the least suppressed corrections, meaning those with $b_3=0$, appear strictly in gauge-invariant form.
In conclusion, although the presence of a mass scale allows for more irrelevant operators to be generated at the quantum level, these terms are suppressed by additional factors of $\sim m/\Lambda_3$. This directly translates into the statement that their associated ghosts are introduced at energy scales far beyond the effective field theory cutoff where they cannot harm the effective field theory structure. In a next step, we will discuss and generalize these arguments to the case where gravity is turned on.

\subsection{Quantum stability in the presence of gravity}
When adopting the model \eqref{vecgal} as effective field theory expansion for Generalized Proca theory, a large part of the stability analysis parallels the non-renormalization observed on flat space. Consequently, classical operators can only be detuned by couplings with the graviton. To analyse the theory's behaviour in the critical high-energy regime, we apply the decoupling limit to the full model \eqref{vecgal} and observe that only very few terms prevail

\begin{equation}
\label{LfullDL}
    \begin{aligned}
        \mathcal{L}_{\rm DL}\sim \,&(F_{\delta A}^2+(\partial\phi)^2)\left(\frac{\partial^2\phi}{\Lambda_3^3}\right)^{a_3}+(\partial h)^2
        \\[10pt]
        &+\Big((\partial h)(\partial\phi)+h(\partial^2\phi)\Big)\left[\frac{F_{\delta A}}{\Lambda_4^2}+\frac{(\partial h)(\partial\phi)}{\Lambda_4^4}+\frac{h(\partial^2\phi)}{\Lambda_4^4}\right],
    \end{aligned}
\end{equation}
where the precise shape of the individual operators, aside from being constrained by Lorentz invariance, is such that the theory propagates the desired three degrees of freedom. Evidently, the flat-space structure \eqref{LDLflat} is supplemented by the kinetic term of the graviton as well as non-trivial cubic and quartic interactions of the three vector polarizations with the graviton, which entirely stem from expanding the kinetic term $F$ up to fourth order. Crucially, the series stops at quartic order, because higher-point interactions arising from expansions of the coupling functions $G_i$ or metric quantities are additionally suppressed by factors of $\sim\frac{1}{\Lambda_2^2}$, $\sim \frac{m}{\Lambda_2^2}$ or $\sim \frac{1}{M_{\rm pl}}$, causing them to vanish in the decoupling limit. We discuss this crucial point in more detail in Appendix \ref{app4}. From the classical Lagrangian \eqref{LfullDL}, we observe that each vertex brings in a factor of at least $\sim \frac{1}{\Lambda_3^3}$ or $\sim \frac{1}{\Lambda_4^2}$. In combination with the healthy $\sim \frac{1}{p^2}$ behaviour of the propagators, standard power counting arguments together with Lorentz invariance allow to identify the generic structure of a quantum counterterm induced by one-loop 1PI graphs, which consists of the following simple building blocks 

\begin{equation}
\label{Lfullctgen}
\begin{aligned}
    \mathcal{L}_{\rm DL}^{\rm ct}\sim \partial^4
    \left(\frac{\partial F_{\delta A}}{\Lambda_3^3}\right)^{b_1} \left(\frac{\partial F_{\delta A}}{\Lambda_4^3}\right)^{c_1}
    \left(\frac{\partial^2\phi}{\Lambda_3^3}\right)^{b_2}
    \left(\frac{\partial^2\phi}{\Lambda_4^3}\right)^{c_2}\left(\frac{\partial\phi}{\Lambda_4^2}\right)^{c_3}\left(\frac{F_{\delta A}}{\Lambda_4^2}\right)^{c_4} \left(\frac{\partial h}{\Lambda_4^2}\right)^{c_5}
    \left(\frac{h}{\Lambda_4}\right)^{c_6},
\end{aligned}
\end{equation}
where $\sum_{i=1}^2 b_i+\sum_{j=1}^6 c_j=N$ gives the number of external legs for any integers $b_i,\,c_j\geq 0$. Clearly, each counterterm that can be formed from \eqref{Lfullctgen} carries more derivatives than any operator in the classical Lagrangian \eqref{LfullDL}, such that these terms do not get renormalized at one loop order. To highlight the form of the counterterms correcting the 
classical operators \eqref{LfullDL}, we rewrite Eq. \eqref{Lfullctgen} as corrections to the kinetic terms of the massless vector, the scalar and the graviton, as well as the leading interactions with the graviton. Firstly, quantum corrections to the kinetic vector term take the following form

\begin{equation}
\label{delF}
    \begin{aligned}
        \mathcal{L}_{\rm DL}^{\text{ct},\,F_{\delta A}^2}\sim \begin{cases}
            F_{\delta A}^2\left(\frac{\partial}{\Lambda_3}\right)^{4+b_1+\tilde{b}_3}\left(\frac{\partial}{\Lambda_4}\right)^{c_1}\left(\frac{F_{\delta A}}{\Lambda_3^2}\right)^{b_1-2}\left(\frac{F_{\delta A}}{\Lambda_4^2}\right)^{c_1+c_4}\cdot\\[8pt]
            \hspace{25pt}\cdot\left(\frac{\partial^2\phi}{\Lambda_3^3}\right)^{b_2-\tilde{b}_3}
            \left(\frac{\partial^2\phi}{\Lambda_4^3}\right)^{c_2} \left(\frac{\partial\phi}{\Lambda_3^2}\right)^{\tilde{b}_3}
            \left(\frac{\partial\phi}{\Lambda_4^2}\right)^{c_3}\left(\frac{\partial h}{\Lambda_4^2}\right)^{c_5}
            \left(\frac{h}{\Lambda_4}\right)^{c_6},\hspace{16pt} b_1\geq 2\\[13pt]
            F_{\delta A}^2\left(\frac{\partial}{\Lambda_3}\right)^{b_1+\tilde{b}_3}\left(\frac{\partial}{\Lambda_4}\right)^{4+c_1}\left(\frac{F_{\delta A}}{\Lambda_3^2}\right)^{b_1}\left(\frac{F_{\delta A}}{\Lambda_4^2}\right)^{c_1+c_4-2}\Big[\dots\Big]
            ,\hspace{7pt} c_1\,\,\text{or}\,\,c_4\geq 2\\[13pt]
            F_{\delta A}^2\left(\frac{\partial}{\Lambda_3}\right)^{2+b_1+\tilde{b}_3}\left(\frac{\partial}{\Lambda_4}\right)^{c_1+c_4+1}\left(\frac{F_{\delta A}}{\Lambda_4^2}\right)^{c_1+c_4-1}\Big[\dots\Big],\hspace{6pt} b_1=1,\,c_{1,4}\leq 1,\\
            \hspace{246pt}c_1\,\,\text{or}\,\,c_4=1\\[3pt]
            F_{\delta A}^2\left(\frac{\partial}{\Lambda_3}\right)^{\tilde{b}_3}\left(\frac{\partial}{\Lambda_4}\right)^{3+c_1+c_4}\Big[\dots\Big],\hspace{97pt} b_1=0,\, c_{1,4}=1
        \end{cases}
    \end{aligned}
\end{equation}
after suitable integration by parts, where $\tilde{b}_3\leq b_2$ and $[\dots]$ indicates the factors of $\sim \partial^2\phi,\,\partial\phi,\,\partial h,\, h$ we give explicitly in the second line but then suppressed for better readability. Contrary to flat space\footnotemark\footnotetext{\vtop{\hsize=\linewidth\noindent In the classical interactions on flat space, the field strength tensor $F_{\delta A}$ always couples to factors of $\sim \partial^2\phi$, such that the scalar external legs in quantum-induced scalar-vector operators always have two derivatives acting on them. Moreover, pure scalar counterterms can always be integrated by parts to feature only two factors of $\sim \partial\phi$.}}, the number of external scalar legs with one derivative $\sim \partial \phi$ is not restricted to two (a remnant from Galileon theories), as the Proca field can linearly couple to the graviton (see Eq. \eqref{LfullDL}), which allows for a richer structure of the induced quantum counterterms with multiple scalar legs of the type $\sim \partial\phi$ and $\sim\partial^2\phi$ \cite{Nicolis:2004qq}.

Secondly, corrections to the scalar kinetic term can be written as

\begin{equation}
\label{delphi}
\begin{aligned}
        \mathcal{L}_{\rm DL}^{\text{ct},\,(\partial\phi)^2}\sim \begin{cases}
            (\partial\phi)^2\left(\frac{\partial}{\Lambda_3}\right)^{6+b_1+\tilde{b}_3}\left(\frac{\partial}{\Lambda_4}\right)^{c_1}\left(\frac{\partial^2\phi}{\Lambda_3^3}\right)^{b_2-\tilde{b}_3-2}\left(\frac{\partial^2\phi}{\Lambda_4^3}\right)^{c_2} \left(\frac{\partial\phi}{\Lambda_3^2}\right)^{\tilde{b}_3}
            \left(\frac{\partial\phi}{\Lambda_4^2}\right)^{c_3}\cdot\\[8pt]
            \hspace{80pt}\cdot\left(\frac{F_{\delta A}}{\Lambda_3^2}\right)^{b_1}\left(\frac{F_{\delta A}}{\Lambda_4^2}\right)^{c_1+c_4}
            \left(\frac{\partial h}{\Lambda_4^2}\right)^{c_5}
            \left(\frac{h}{\Lambda_4}\right)^{c_6},\hspace{81pt} b_2\geq 2\\[13pt]
            (\partial\phi)^2\left(\frac{\partial}{\Lambda_3}\right)^{b_1+\tilde{b}_3}\left(\frac{\partial}{\Lambda_4}\right)^{6+c_1}\left(\frac{\partial^2\phi}{\Lambda_3^3}\right)^{b_2-\tilde{b}_3}
            \left(\frac{\partial^2\phi}{\Lambda_4^3}\right)^{c_2-2} \left(\frac{\partial\phi}{\Lambda_3^2}\right)^{\tilde{b}_3}
            \left(\frac{\partial\phi}{\Lambda_4^2}\right)^{c_3}\Big[\dots\Big],\hspace{5pt} c_2\geq 2\\[13pt]
            (\partial\phi)^2\left(\frac{\partial}{\Lambda_3}\right)^{b_1+\tilde{b}_3}\left(\frac{\partial}{\Lambda_4}\right)^{4+c_1}\left(\frac{\partial^2\phi}{\Lambda_3^3}\right)^{b_2-\tilde{b}_3}
            \left(\frac{\partial^2\phi}{\Lambda_4^3}\right)^{c_2} 
            \left(\frac{\partial\phi}{\Lambda_3^2}\right)^{b_3}
            \left(\frac{\partial\phi}{\Lambda_4^2}\right)^{c_3-2}\Big[\dots\Big],\hspace{5pt} c_3\geq 2
        \end{cases}
    \end{aligned}
\end{equation}
after integrating by parts, where $\tilde{b}_3\leq b_2$ and $[\dots]$ again denotes the suppressed factors of $\sim F_{\delta A},\,\partial h,\,h$ from the second line. In particular, corrections to the Proca kinetic term in \eqref{ctprocakin} are now represented by the parameter choices $\{b_{1,2}=0,c_{1,2,4,5,6}=0,c_3=2\}$ and $\{b_{1,2}=0,c_{1,2,3,5,6}=0,c_4=2\}$. Thirdly, corrections to the graviton's kinetic term become

\begin{equation}
\label{delh}
\begin{aligned}
        \mathcal{L}_{\rm DL}^{\text{ct},\,(\partial h)^2}\sim \begin{cases}
            (\partial h)^2\left(\frac{\partial}{\Lambda_3}\right)^{b_1+\tilde{b}_3}\left(\frac{\partial}{\Lambda_4}\right)^{4+c_1}\left(\frac{\partial h}{\Lambda_4^2}\right)^{c_5-2}
            \left(\frac{h}{\Lambda_4}\right)^{c_6}\cdot\\[10pt]
            \hspace{9pt}\cdot\left(\frac{F_{\delta A}}{\Lambda_3^2}\right)^{b_1}\left(\frac{F_{\delta A}}{\Lambda_4^2}\right)^{c_1+c_4}
            \left(\frac{\partial^2\phi}{\Lambda_3^3}\right)^{b_2-\tilde{b}_3}\left(\frac{\partial^2\phi}{\Lambda_4^3}\right)^{c_2} \left(\frac{\partial\phi}{\Lambda_3^2}\right)^{\tilde{b}_3}
            \left(\frac{\partial\phi}{\Lambda_4^2}\right)^{c_3},\hspace{5pt} c_5\geq 2\\[15pt]
            (\partial h)^2\left(\frac{\partial}{\Lambda_3}\right)^{b_1+\tilde{b}_3}\left(\frac{\partial}{\Lambda_4}\right)^{2+c_1}
            \left(\frac{\partial h}{\Lambda_4^2}\right)^{c_5}
            \left(\frac{h}{\Lambda_4}\right)^{c_6-2}\Big[\dots\Big],\hspace{47pt}c_6\geq 2
        \end{cases}
    \end{aligned}
\end{equation}
after integration by parts, with $\tilde{b}_3\leq b_2$ and $[\dots]$ abbreviating all terms $\sim F_{\delta A},\,\partial^2\phi,\,\partial\phi$ given in the second line. The counterterm to the graviton's kinetic term in \eqref{ctgravkin} is now captured by setting $\{b_{1,2}=0,c_{1,2,3,4,6}=0,c_5=2\}$. Overall, Eqs. \eqref{delF}-\eqref{delh} allow us to define alternative expansion parameters, which we denote as

\begin{equation}
\begin{aligned}
\label{betas}
    &\beta_{\rm cl,\partial\phi}\equiv \frac{\partial\phi}{\Lambda_4^2},\quad\beta_{\text{cl},\partial h}\equiv\frac{\partial h}{\Lambda_4^2},\quad\beta_{\text{cl}, F_{\delta A}}\equiv \frac{F_{\delta A}}{\Lambda_4^2},\quad\beta_{\text{cl},\partial^2\phi}\equiv\frac{\partial^2\phi}{\Lambda_4^3},\quad\beta_{\text{cl},h}\equiv\frac{h}{\Lambda_4},\\[10pt]
    &\tilde{\alpha}_{\rm q}\equiv \frac{\partial}{\Lambda_3},\hspace{25pt}\tilde{\alpha}_{\rm \tilde{q}}\equiv \frac{F_{\delta A}}{\Lambda_3^2},\hspace{20pt} \tilde{\alpha}_{\rm \tilde{q},new}\equiv \frac{\partial\phi}{\Lambda_3^2},\hspace{15pt} \beta_{\rm q}\equiv \frac{\partial}{\Lambda_4},
\end{aligned}
\end{equation}
in close analogy to their siblings on flat space. Importantly, since the Proca field can linearly couple to the graviton, the expansion parameters $\alpha_{\rm q},\, \alpha_{\rm \tilde{q}}$ \eqref{alphas} on flat space need to be slightly modified according to \eqref{betas} and we have to newly introduce the parameter $\tilde{\alpha}_{\rm \tilde{q},new}$ to accommodate quantum interactions with either external scalar legs $\sim \partial\phi$ \footnote{Note that these external factors were not present on flat space, as the field strength tensor always couples to factors of $\sim \partial^2 \phi$ in the classical vertices, such that the scalar external legs in quantum-induced scalar-vector interactions always have two derivatives acting on them. On the other hand, pure scalar counterterms can always be integrated by parts to feature only two factors of $\sim (\partial\phi)^2$ \cite{Nicolis:2004qq}.} or an uneven number of gauge-preserving external legs $F_{\delta A}$.

We can generalize the above analysis to higher loops in a straightforward way: including an additional loop in a 1PI diagram while simultaneously keeping the same number of external fields requires the addition of either another vertex or another internal leg, which both brings in a new factor of $\sim 1/\Lambda_3^3$ or $\sim1/\Lambda_4^2$. As we keep the number of external legs fixed, the only way in which the dimension of the resulting quantum operator remains consistent is through adding more derivatives that cancel these new factors. Therefore, higher loop diagrams can easily be accounted for by additional factors of $\beta_q$. At that point, the establishment of quantum stability in the presence of gravity follows the same logic as on flat space: We write the complete effective field theory Lagrangian as an expansion in the classical and quantum parameters \eqref{alphas}, \eqref{betas}

\begin{equation}
\begin{aligned}
    \mathcal{L}_{\rm DL}&\sim(F_{\delta A}^2+(\partial\phi)^2)\,\alpha_{\rm cl}^{a_3}+(\partial h)^2+[h(\partial^2\phi)+(\partial h)(\partial\phi)](\beta_{\text{cl},F_{\delta A}}+\beta_{\text{cl},\partial h}\,\beta_{\text{cl},\partial \phi}+\beta_{\text{cl}, h}\,\beta_{\text{cl},\partial^2 \phi})\\[10pt]
    &+(F_{\delta A}^2+(\partial\phi)^2+(\partial h)^2)(\tilde{\alpha}_{\rm q}^{2}+\beta_{\rm q}^{2})\tilde{\alpha}_{\rm q}^{m_0}\,\beta_{\rm q}^{m_1}\,\tilde{\alpha}_{\rm \tilde{q}}^{m_2}\,\tilde{\alpha}_{\rm \tilde{q},new}^{m_3} \alpha_{\rm cl}^{m_4}\,\beta_{\text{cl},F_{\delta A}}^{m_5}\,\beta_{\text{cl},\partial h}^{m_6}\,\beta_{\rm cl,\partial\phi}^{m_7}\,\beta_{\text{cl}, h}^{m_8}\,\beta_{\rm cl,\partial^2\phi}^{m_9},\\[10pt]
    &+[h(\partial^2\phi)+(\partial h)(\partial\phi)](\tilde{\alpha}_{\rm q}^{2}+\beta_{\rm q}^{2})\tilde{\alpha}_{\rm q}^{n_0}\,\beta_{\rm q}^{n_1}\,\tilde{\alpha}_{\rm \tilde{q}}^{n_2}\,\tilde{\alpha}_{\rm \tilde{q},new}^{n_3} \alpha_{\rm cl}^{n_4}\,\beta_{\text{cl},F_{\delta A}}^{n_5}\,\beta_{\text{cl},\partial h}^{n_6}\,\beta_{\rm cl,\partial\phi}^{n_7}\,\beta_{\text{cl}, h}^{n_8}\,\beta_{\rm cl,\partial^2\phi}^{n_9},
\end{aligned}
\end{equation}
where $a_3,m_i,n_j\geq 0$ and at least one $n_{i\geq 5}>0$, as these terms give corrections to the classical Proca-graviton interactions\footnote{We could have obtained the expansion parameters \eqref{betas} analogously from comparing the generic counterterm \eqref{Lfullctgen} to the surviving classical three- and four-point interactions in \eqref{LfullDL} instead of the kinetic terms.}. For example, the parameter choices $\{m_1=2,m_4=2\}$, $\{m_0=4,m_6=2\}$, $\{n_0=1,n_1=3,n_4=1,n_8=1\}$, $\{m_0=4,m_1=2,m_8=2\}$, $\{m_1=4,m_8=2\}$, $\{m_1=2,m_6=2\}$, $\{m_1=2,m_7=2\}$, $\{n_1=2,n_6=1,n_7=1\}$ and $\{n_1=2,n_8=1,n_9=1\}$, while all other $m_i,n_i=0$, contain all one-loop counterterms to the quartic vertex $\sim h^2\delta A^2$ in \eqref{cthhAA}, while $\{n_1=2,n_4=1,n_5=1\}$ and $\{n_0=1,n_1=4,n_2=1,n_8=1\}$ capture the ones for $\sim \delta A^3 h$ in \eqref{cthAAA}. Similarly, the one-loop corrections to the surviving cubic interactions are represented by $\{m_1=2,m_4=1\}$ in the pure Proca sector \eqref{ctAAA} and by $\{n_1=2,n_5=1\}$ as well as $\{n_0=1,n_1=3,n_8=1\}$, $\{m_1=2,m_4=1\}$ in the mixed Proca-graviton sectors \eqref{ctAAh} and \eqref{LhhADL}, respectively.
Evidently, the quantum-induced operators are accompanied by factors of $\tilde{\alpha}_{q,\tilde{q}},\,\beta_{q}$, implying that every loop contribution necessarily comes with higher derivative orders compared to the classical terms. This generates a clear hierarchy of scales between classical and quantum operators and, moreover, establishes the manifest non-renormalization of classical terms, even in the presence of gravity. Therefore, the Vainshtein screening mechanism remains intact, as there exists a regime with large classical non-linearities $\alpha_{\rm cl},\beta_{\rm cl}\sim \mathcal{O}(1)$ while, simultaneously, quantum effects are subdominant $\tilde{\alpha}_{\rm q,\tilde{q}},\beta_{q}\ll 1$. Overall, this marks radiative stability of Generalized Proca theories in the decoupling limit, which directly extends to radiative stability of the full effective field theory, as pointed out in Sec. \ref{DL_Hierarchy}. We conclude that Generalized Proca theory maintains its essential classical features when quantum effects and gravity are included, and thus can indeed serve as theoretically viable IR modification of gravity when viewed as an effective field theory. 

Notice again that the commutativity of taking the decoupling limit and computing quantum corrections allows to extract the least suppressed counterterms in the original unitary-gauge formulation. Upon making the replacement $\partial\phi\rightarrow m\,\delta A$ in Eq. \eqref{Lfullctgen}, we obtain 
\small{
\begin{equation}
\begin{aligned}
    \mathcal{L}^{\rm ct}\sim \partial^4
    \left(\frac{\partial F_{\delta A}}{\Lambda_3^3}\right)^{b_1}
    \left(\frac{m\,\partial\, \delta A}{\Lambda_3^3}\right)^{b_2}
    \left(\frac{\partial F_{\delta A}}{\Lambda_4^3}\right)^{c_1}
    \left(\frac{m\,\partial\, \delta A}{\Lambda_4^3}\right)^{c_2}
    \left(\frac{m\,\delta A}{\Lambda_4^2}\right)^{c_3}\left(\frac{F_{\delta A}}{\Lambda_4^2}\right)^{c_4}
    \left(\frac{\partial h}{\Lambda_4^2}\right)^{c_5}
    \left(\frac{h}{\Lambda_4}\right)^{c_6},
\end{aligned}
\end{equation}}
where, again, the leading counterterms are exactly the gauge-invariant ones with $b_{2}= c_{2,3}=0$, possibly dressed by factors involving the graviton. All other contributions in unitary gauge are further suppressed by additional powers of $\sim m/\Lambda_3^3$, $\sim m/\Lambda_4^2$ and $\sim m/\Lambda_4^3$, and hence do not spoil the effective field theory hierarchy.

Taken together, these considerations show that the non-renormalization theorem is not a consequence of special cancellations at low loop order, but rather a structural property of Generalized Proca effective field theories coupled to gravity. The decoupling limit isolates the maximally sensitive sector of the theory and exhausts the space of admissible counterterms below the cutoff, ensuring radiative stability independently of perturbative order.




\section{Conclusion}
\label{conclusion}
Within the modern effective field theory paradigm, classical theories of gravity and their modifications are understood as low-energy approximations to an underlying quantum description. From this perspective, theoretical consistency
requires not only a well-defined classical dynamics but also stability of the theory under radiative corrections within its regime of validity. In models with derivative self-interactions, this requirement is particularly stringent, as the
nonlinear structures responsible for screening mechanisms are potentially sensitive to quantum-induced operators.

In this work, we have analyzed the structure of radiative corrections in Generalized Proca effective field theories coupled to dynamical gravity. Treating these models as local, diffeomorphism-covariant functionals and working within a
perturbative expansion about flat spacetime, we investigated the fate of the classical ghost-free interactions under quantum corrections. Explicit one-loop computations of two- and three-point functions, combined with a systematic power
counting analysis in dimensional regularization, show that potentially dangerous contributions either cancel or appear in higher-derivative, gauge-invariant combinations.

A central role in this analysis is played by the decoupling limit, which isolates the dominant high-energy interactions of the theory in a controlled scaling regime. Reformulating the theory via the Stückelberg decomposition makes the
separation between transverse and longitudinal modes explicit and allows for a transparent classification of quantum corrections at the level of local operators. In this limit, the propagating degrees of freedom exhibit standard
ultraviolet behavior, and all radiatively generated counterterms are suppressed by the intrinsic strong-coupling scales of the effective theory.

These results culminate in a non-renormalization theorem for Generalized Proca theories coupled to gravity: the classical interaction terms responsible for the constraint structure and the absence of Ostrogradsky instabilities are not
renormalized by quantum effects within the effective field theory regime. All quantum-induced local operators necessarily involve additional derivatives and enter at scales parametrically above the cutoff. Consequently, both the low-energy
dynamics and the number of propagating degrees of freedom remain invariant under radiative corrections to all orders in perturbation theory.

The analysis presented here relies on an expansion about backgrounds with vanishing vector expectation value, for which the kinetic sectors of the metric and vector field diagonalize at quadratic order. Allowing for nontrivial background vector
configurations is expected to modify propagators and interaction vertices through background-dependent insertions but does not alter the underlying scaling structure that controls the decoupling limit and the hierarchy of operators.
A detailed treatment of such backgrounds is left for future work.

Overall, our results demonstrate that radiative stability in Generalized Proca theories is a structural consequence of the effective field theory hierarchy and the decoupling limit, rather than an artifact of low-order perturbative
calculations. This places Generalized Proca models on a firm footing as consistent quantum effective field theories of vector degrees of freedom coupled to gravity.

\acknowledgments
We acknowlegde the use of the xAct, FeynCalc, FeynArts and X packages for Mathematica. L.H. is supported by funding from the European Research Council (ERC) under the European Unions Horizon 2020 research and innovation programme grant agreement No 801781. L.H. further acknowledges support from the Deutsche Forschungsgemeinschaft (DFG, German Research Foundation) under Germany’s Excellence Strategy EXC 2181/1 - 390900948 (the Heidelberg STRUCTURES Excellence Cluster).

\appendix
\section{Decoupling scales}
\label{app2}
In this appendix, we show the emergence of the decoupling scales that we mentioned in Section~\ref{DL_Hierarchy}. After performing the Stückelberg transformation for the cubic and quartic Lagrangians \eqref{Lcubic}-\eqref{Lquartic}, we find schematically

\begin{equation}
\begin{aligned}
    \mathcal{L}^{(3)}_{2, \text{Stb}} &\sim \frac{1}{M_{\rm pl}m}h\,F_{\delta A}(\partial^2\phi)+\frac{1}{M_{\rm pl}m}(\partial h)F_{\delta A}(\partial\phi)+\frac{1}{M_{\rm pl}}hF_{\delta A}^2+\frac{1}{M_{\rm pl}}h(\partial\,\delta A)^2\\[5pt]
    &\quad+\frac{1}{M_{\rm pl}}h(\partial \phi)^2+\frac{1}{M_{\rm pl}}(\partial h)\delta A(\partial\,\delta A)
    +\frac{m}{M_{\rm pl}}h\,\delta\,A(\partial\phi)+\frac{m^2}{M_{\rm pl}}h\,\delta A^2, 
    \end{aligned}
\end{equation}
\begin{equation}
    \begin{aligned}
    \mathcal{L}^{(3)}_{3, \text{Stb}} &\sim \frac{c_3}{\Lambda_2^2 m}\, (\partial \phi) ^2(\partial^2 \phi)+
    \frac{c_3}{\Lambda_2^2}\, \delta A( \partial \phi)( \partial^2 \phi)
    + \frac{c_3}{\Lambda_2^2}\, (\partial \phi)( \partial \phi)( \partial\, \delta A)\\[5pt]
    &\quad
    + \frac{c_3 m^2}{\Lambda_2^2}\, \delta A^2 ( \partial\, \delta A) 
    + \frac{c_3 m}{\Lambda_2^2}\, \delta A^2 (\partial^2 \phi),
\end{aligned}
\end{equation}

\begin{equation}
    \begin{aligned}
        \mathcal{L}^{(4)}_{2, \text{Stb}} &\sim \frac{1}{M_{\rm pl}^2m^2}h^2(\partial^2\phi)^2+\frac{1}{M_{\rm pl}^2m^2}h(\partial^2\phi)(\partial h)(\partial \phi)+\frac{1}{M_{\rm pl}^2m^2}(\partial h)^2(\partial\phi)^2\\[5pt]
        &\quad +\frac{1}{M_{\rm pl}^2m}h\,F_{\delta A}(\partial h)(\partial \phi)+\frac{1}{M_{\rm pl}^2m}h^2(\partial\,\delta A)(\partial^2\phi)+\frac{1}{M_{\rm pl}^2m}h(\partial\,\delta A)(\partial h)(\partial\phi)\\[5pt]
        &\quad +\frac{1}{M_{\rm pl}^2m}(\partial h)^2\delta A(\partial\phi)+\frac{1}{M_{\rm pl}^2m}h\,\delta A(\partial h)(\partial^2\phi)+\frac{1}{M_{\rm pl}^2}h^2F_{\delta A}^2+\frac{1}{M_{\rm pl}^2}h^2(\partial\,\delta A)^2\\[5pt]
        &\quad +\frac{1}{M_{\rm pl}^2}h^2(\partial\phi)^2+\frac{1}{M_{\rm pl}^2}h(\partial\,\delta A)(\partial h)\delta A+\frac{1}{M_{\rm pl}^2}(\partial h)^2\delta A^2+\frac{m}{M_{\rm pl}^2}h^2\delta A(\partial\phi)\\[5pt]
        &\quad+\frac{m^2}{M_{\rm pl}^2}h^2\delta A^2,
    \end{aligned}
\end{equation}
\begin{equation}
    \begin{aligned}
        \mathcal{L}^{(4)}_{3, \text{Stb}} &\sim \frac{c_3}{M_{\rm pl}\Lambda_2^2 m}h\,(\partial\phi)^2(\partial^2\phi)+\frac{c_3}{M_{\rm pl}\Lambda_2^2 m}(\partial h)(\partial\phi)^3+\frac{c_3}{M_{\rm pl}\Lambda_2^2}h(\partial\phi)^2(\partial\,\delta A)\\[5pt]
        &\quad+\frac{c_3}{M_{\rm pl}\Lambda_2^2}h\,\delta A(\partial\phi)(\partial^2\phi)+\frac{c_3}{M_{\rm pl}\Lambda_2^2}\,\delta A(\partial\phi)^2(\partial h)+\frac{c_3\,m}{M_{\rm pl}\Lambda_2^2}\,\delta A^2(\partial h)(\partial\phi)\\[5pt]
        &\quad+\frac{c_3\,m}{M_{\rm pl}\Lambda_2^2}\,h\,\delta A(\partial^2\phi)+\frac{c_3\,m}{M_{\rm pl}\Lambda_2^2}\,h\,\delta A(\partial\phi)(\partial\,\delta A)+\frac{c_3\,m^2}{M_{\rm pl}\Lambda_2^2}\,h\,\delta A^2(\partial\,\delta A)\\[5pt]
        &\quad +\frac{c_3\,m^2}{M_{\rm pl}\Lambda_2^2}\,\delta A^3(\partial h).
    \end{aligned}
\end{equation}
We observe that the Stückelberg-ed theory introduces interaction terms involving scalar, vector, and tensor modes that are controlled by distinct energy scales. These scales can be extracted by identifying the leading operators in the high-energy regime. The two relevant suppression scales that emerge from this expansion are given by
\begin{equation}
{\Lambda}_4 = \left( m M_{\rm pl} \right)^{1/2}, \qquad
\Lambda_3 = \left( \Lambda_2^2 m \right)^{1/3},
\end{equation}
denoting the strong coupling scale of graviton-Proca and pure scalar interactions, respectively. By focusing on the physics at these two energy scales in the decoupling limit \eqref{DL}, most mixing terms above, in particular those with the transverse vector modes that do not appear in gauge-invariant form, disappear. The remaining terms for the quartic Lagrangian are
\begin{equation}
    \begin{aligned}
        \mathcal{L}^{(4)}_{\rm{DL}}=& -\frac{2}{\Lambda_4^4}h_\alpha{}^\delta h^{\alpha\beta}(\partial_\gamma\partial_\beta\phi)(\partial^\gamma\partial_\delta\phi)
        +\frac{2}{\Lambda_4^4}(\partial_\gamma h_\beta{}^\delta)(\partial_\delta h_{\alpha}{}^\gamma)(\partial^\alpha\phi)(\partial^\beta\phi)\\
        &+\frac{4}{\Lambda_4^4}h^{\beta\gamma}(\partial^\alpha\phi)(\partial_\gamma h_\alpha{}^\delta)(\partial_\delta\partial_\beta\phi)+\frac{2}{\Lambda_4^4}h^{\alpha\beta}h^{\gamma\delta}(\partial_\gamma\partial_\alpha\phi)(\partial_\delta\partial_\beta\phi)\\
        &-\frac{4}{\Lambda_4^4}h^{\beta\gamma}(\partial^\alpha\phi)(\partial_\delta\partial_\beta\phi)(\partial^\delta h_{\alpha\gamma})-\frac{2}{\Lambda_4^4}(\partial^\alpha\phi)(\partial^\beta\phi)(\partial_\delta h_\alpha{}^\gamma)(\partial^\delta h_{\beta\gamma}).
    \end{aligned}
\end{equation}

\section{Exact expressions for the interactions at higher orders in perturbations}
\label{app1}

Expanding the Generalized Proca Lagrangians \eqref{gpLagr} in the minimal model \eqref{minmodel} to fourth order in perturbations yields exactly 
\begin{align}
    \mathcal{L}^{(4)}_{2} &= 
            -\frac{m^2 }{2 M_{\rm pl}^2}\, h_{\beta\gamma} \,h^{\beta\gamma} \,\delta A_{\alpha}\, \delta A^{\alpha}
            + \frac{m^2 }{4M_{\rm pl}^2}\, h^2 \,\delta A_{\alpha} \,\delta A^{\alpha}
            + \frac{ m^2 }{M_{\rm pl}^2}\, h_{\alpha\beta}\, h\, \delta A^{\alpha}\, \delta A^{\beta} \\
            &\quad + \frac{1}{4M_{\rm pl}^2}\, h_{\gamma\delta} \,h^{\gamma\delta}\,\,F_{\beta\alpha}^{\delta A}\,F^{\beta\alpha}_{\delta A}- \frac{1}{8M_{\rm pl}^2}\, h^2 \,F_{\beta\alpha}^{\delta A}\,F^{\beta\alpha}_{\delta A}
            \notag \\
            &\quad - \frac{2}{M_{\rm pl}^2}\, h^{\delta}{}_{\beta} \,h_{\gamma\delta}\,( \partial^{\beta} \delta A^{\alpha} )\,(\partial^{\gamma} \delta A_{\alpha}) + \frac{2}{M_{\rm pl}^2}\, h_{\alpha\gamma}\, h_{\beta\delta}\,( \partial^{\beta} \delta A^{\alpha} )\,(\partial^{\delta} \delta A^{\gamma}),\notag \\
            &\quad  - \frac{1}{M_{\rm pl}^2}\, h_{\alpha\gamma}\, h \,(\partial_{\beta} \delta A^{\gamma})\,( \partial^{\beta} \delta A^{\alpha})
            + \frac{1}{M_{\rm pl}^2}\, h_{\beta\gamma} \,h \,(\partial^{\beta} \delta A^{\alpha})\,( \partial^{\gamma} \delta A_{\alpha})\notag \\
            &\quad
            + \frac{2}{M_{\rm pl}^2}\, h \,\delta A^{\alpha} \,(\partial_{\beta} h_{\alpha\gamma}) \,(\partial^{\gamma} \delta A^{\beta}) - \frac{2}{M_{\rm pl}^2}\, h\, \delta A^{\alpha} \,(\partial_{\gamma} h_{\alpha\beta})\,( \partial^{\gamma} \delta A^{\beta})  \notag \\
            &\quad + \frac{4}{M_{\rm pl}^2}\, h^{\delta}{}_{\gamma}\, \delta A^{\alpha} \,(\partial^{\gamma} \delta A^{\beta})\,( \partial_{\delta} h_{\alpha\beta}) - \frac{4}{M_{\rm pl}^2}\, h^{\delta}{}_{\gamma}\, \delta A^{\alpha} \,(\partial_{\beta} h_{\alpha\delta} )\,(\partial^{\gamma} \delta A^{\beta})\notag\\
            &\quad + \frac{2}{M_{\rm pl}^2}\, \delta A^{\alpha}\, \delta A^{\beta} \,(\partial_{\gamma} h_{\beta\delta} )\,(\partial^{\delta} h_{\alpha}{}^{\gamma}) - \frac{2}{M_{\rm pl}^2}\, \delta A^{\alpha}\, \delta A^{\beta}\,( \partial_{\delta} h_{\beta\gamma} )\,(\partial^{\delta} h_{\alpha}{}^{\gamma}),\\[10pt]
        \mathcal{L}^{(4)}_{3} &= 
            \frac{m^2}{2M_{\rm pl}\Lambda_2^2}\, c_3\, \delta A_{\beta}\, \delta A^{\beta}\, \delta A^{\alpha}\, (\partial_{\alpha} h)
            +  \frac{m^2}{2M_{\rm pl}\Lambda_2^2}\, c_3\, h\, \delta A_{\beta}\, \delta A^{\beta}\, (\partial_{\alpha} \delta A^{\alpha})\notag\\
            &\quad  +  \frac{m^2}{M_{\rm pl}\Lambda_2^2}\, c_3\, h_{\beta\alpha}\, \delta A^{\alpha}\, \delta A^{\beta}\, (\partial_{\gamma} \delta A^{\gamma}).
\end{align}

\section{One-loop contributions to the reduced matrix elements}
\label{app3}
The full contribution from the one-loop tadpole and bubble diagram to the reduced matrix element for the graviton involving two external momenta reads
\begin{equation}
    \begin{aligned}
        \mathcal{M}_{2,h}^{\rm{div}}\supset 
        & \frac{\epsilon^{\mu\nu}_p\,\epsilon^{\alpha\beta}_{-p}}{1920\, \pi^2\varepsilon  \,M_{\rm{pl}}^2}\Big\{ p^{2} \Big[900\, m^2 \eta_{\alpha \mu } \eta_{\beta \nu }-120\, m^2 \eta_{\alpha \nu } \eta_{\beta \mu }+2 \eta_{\mu \nu} \big(p_{\alpha } p_{\beta }-30\, m^2 \eta_{\alpha \beta }\big)\\[5pt]
        &+67 \,p_{\beta } p_{\nu } \eta_{\alpha \mu }-93\, p_{\alpha } p_{\nu } \eta_{\beta \mu }+p_{\mu } \big(2\, p_{\nu } \eta_{\alpha \beta}
        -93\, p_{\beta } \eta_{\alpha \nu}
        +67\, p_{\alpha } \eta_{\beta \nu }\big)\Big]\\[5pt]
        &+\frac{p^{4}}{m^2} \Big[m^2 \big(63\, \eta_{\alpha \nu } \eta_{\beta \mu }-37\,\eta_{\alpha \mu } \eta_{\beta \nu }-2\, \eta_{\alpha \beta } \eta_{\mu \nu }\big)\\[5pt]
        &+10\, p_{\beta } \big(p_{\mu } \eta_{\alpha\nu }-p_{\nu } \eta_{\alpha \mu }\big)
        +10\, p_{\alpha } \big(p_{\nu } \eta_{\beta \mu }-p_{\mu } \eta_{\beta \nu}\big)\Big]\\[5pt]
        &+10\, m^2 \Big[m^2 \big(13\, \eta_{\alpha \nu } \eta_{\beta \mu }+49\, \eta_{\alpha \mu } \eta_{\beta \nu }-47 \eta_{\alpha \beta } \eta_{\mu \nu }\big)\\[5pt]
        &-99\, p_{\beta } p_{\nu } \eta_{\alpha \mu }+3\, p_{\mu } \big(2 p_{\nu }\eta_{\alpha \beta }+7\, p_{\beta } \eta_{\alpha \nu }\big)\Big]+6\, p_{\alpha } \Big[5\, m^2 \big(7\, p_{\nu }\eta_{\beta \mu }-9 \,p_{\mu } \eta_{\beta \nu }\big)\\[5pt]
        &+2\, p_{\beta } \big(5\, m^2 \eta_{\mu \nu }+2 \, p_{\mu } p_{\nu}\big)\Big]+5\, \frac{p^{6}}{m^2} \big(\eta_{\alpha \mu } \eta_{\beta \nu }-\eta_{\alpha \nu } \eta_{\beta \mu }\big)\Big\}.
    \end{aligned}
\end{equation}

\section{Schematic perturbative expansion of Generalized Proca theory}
\label{app4}
The underlying structure of Generalized Proca theory was crucial in establishing quantum stability in the presence of the graviton. Here, we give the schematic interactions arising in the perturbative expansion of the full covariant expansion around the background configuration \eqref{bgconfig}. At cubic order, the Lagrangians read

\begin{equation}
    \begin{aligned}
        &\mathcal{L}_2^{(3)}\sim \frac{1}{\sqrt{2 \bar{G}_4}}\left[h\,F_{\delta A}^2+h(\partial\,\delta A)^2+(\partial\,\delta A)(\partial h)\delta A+\frac{\bar{G}_{2,X}}{\bar{G}_{2,F}}\,h\,\delta A^2\right],\\[10pt]
        &\mathcal{L}_3^{(3)}\sim \frac{\bar{G}_{3,X}}{\bar{G}_{2,F}^{3/2}}\,\delta A^2(\partial \,\delta A),\\[10pt]
        &\mathcal{L}_4^{(3)}\sim \frac{1}{\sqrt{2\bar{G}_4}}\big[h\,(\partial h)^2+h^2(\partial^2 h)\big],\\[10pt]
        &\mathcal{L}_5^{(3)}\sim \frac{\bar{G}_5}{2\bar{G}_4\,\bar{G}_{2,F}^{1/2}}\left[h\,(\partial^2h)(\partial\,\delta A)+(\partial h)^2(\partial\,\delta A)\right]+\frac{\bar{G}_{5,X}}{\bar{G}_{2,F}^{3/2}}\,(\partial\,\delta A)^3+\frac{\bar{g}_5}{\bar{G}_{2,F}^{3/2}}\,\tilde{F}_{\delta A}^2(\partial\,\delta A),\\[10pt]
        &\mathcal{L}_6^{(3)}\sim\frac{\bar{G}_6}{\sqrt{2\bar{G}_4}\,\bar{G}_{2,F}}\,(\partial^2h)(\partial\,\delta A)^2,
    \end{aligned}
\end{equation}
where the graviton and the vector field were canonically normalized according to \eqref{can}. At quartic order, we find
\begin{equation}
    \begin{aligned}
        &\mathcal{L}_2^{(4)}\sim \frac{1}{2 \bar{G}_4}\left[h^2\,F_{\delta A}^2+h^2(\partial\,\delta A)^2+h(\partial\,\delta A)(\partial h)\delta A+(\partial h)^2\delta A^2
        +\frac{\bar{G}_{2,X}}{\bar{G}_{2,F}}\,h^2\delta A^2\right]\\[5pt]
        &\qquad\quad+\frac{1}{\bar{G}_{2,F}^2}\left[\bar{G}_{2,FF}\,F_{\delta A}^4+\bar{G}_{2,Y}\,F_{\delta A}^2\delta A^2+\bar{G}_{2,XF}\,F_{\delta A}^2\delta A^2+\bar{G}_{2,XX}\,\delta A^4\right],\\[10pt]
        &\mathcal{L}_3^{(4)}\sim 
        \frac{\bar{G}_{3,X}}{\sqrt{2\bar{G}_4}\bar{G}_{2,F}^{3/2}}\,h\,\delta A^2(\partial \,\delta A),\\[10pt]
        &\mathcal{L}_4^{(4)}\sim \frac{1}{2\bar{G}_4}\left[h^2\,(\partial h)^2
        +\frac{\bar{G}_{4,X}}{\bar{G}_{2,F}}\Big\{\delta A^2h(\partial^2 h)+\delta A^2(\partial h)^2+h^2(\partial\,\delta A)^2+h(\partial\,\delta A)(\partial h)\delta A\Big\}\right.\\[5pt]
        &\left.\qquad\quad+\frac{\bar{G}_{4,XX}}{\bar{G}_{2,F}^2}\,\delta A^2(\partial\,\delta A)^2\right],\\[10pt]
        &\mathcal{L}_5^{(4)}\sim \frac{\bar{G}_5}{(2\bar{G}_4)^{3/2}\,\bar{G}_{2,F}^{1/2}}\left[h(\partial h)^2(\partial\,\delta A)+h^2\,(\partial^2h)(\partial\,\delta A)+h(\partial^2 h)(\partial h)\delta A+(\partial h)^3\delta A\right]\\[5pt]
        &\qquad\quad +\frac{\bar{G}_{5,X}}{\sqrt{2\bar{G}_4}\bar{G}_{2,F}^{3/2}}\,\left[\delta A^2(\partial^2 h)(\partial\,\delta A)+h(\partial\,\delta A)^3+(\partial\,\delta A)^2(\partial h)\delta A\right]\\[5pt]
        &\qquad\quad+\frac{\bar{g}_5}{\sqrt{2\bar{G}_4}\bar{G}_{2,F}^{3/2}}\,\left[h\,\tilde{F}_{\delta A}^2(\partial\,\delta A)+h\,(\partial\,\delta A)^3+(\partial\,\delta A)^2(\partial h)\delta A\right],\\[10pt]
        &\mathcal{L}_6^{(4)}\sim\frac{\bar{G}_6}{2\bar{G}_4\,\bar{G}_{2,F}}\,\left[h\,(\partial^2h)(\partial\,\delta A)^2+(\partial h)^2(\partial\,\delta A)^2+(\partial^2 h)(\partial\,\delta A)(\partial h)\delta A\right]+\frac{\bar{G}_{6,X}}{\bar{G}_{2,F}^2}\,\tilde{F}_{\delta A}^2(\partial\,\delta A)^2.
    \end{aligned}
\end{equation}
In general, terms proportional to $\sim (\partial h)\delta A$ stem from expansions of the covariant derivative acting on the covariant Proca field $A^\mu$. The terms proportional to $\sim \bar{G}_{i,X}$ in  $\mathcal{L}_4^{(4)}$ and $\mathcal{L}_5^{(4)}$ receive not only contributions from the vector self-interactions $\sim G_{i,X}$ ($i=4,5$) in the covariant Lagrangians, but also from expanding the non-minimal coupling to gravity $\sim G_{4,5}$, resulting in terms that particularly include $\partial^2 h$. Furthermore, notice that terms proportional to $\sim \bar{G}_{2,Y}$ in $\mathcal{L}_2^{(4)}$ contribute with the same vector structure as terms $\sim G_{2,XF}$, which is why we can schematically drop the $Y$-dependence of $G_2$ in the generic model \eqref{vecgal}. Crucially, for the specific choice of this model, we see that higher order interactions beyond the vector Galileons on flat space will vanish in the decoupling limit, as they are either dressed by additional powers of $h$ that come with a heavy $M_{\rm pl}$ suppression (see e.g. $\mathcal{L}_3^{(4)}$, where $\sqrt{2\bar{G}_4}=M_{\rm pl}, \,\bar{G}_{3,X}=\frac{m^2}{\Lambda_2^2}$ and $\bar{G}_{2,F}=1$), or bring in more factors of $\frac{m^2\delta A^2}{\Lambda_2^4}$ or $\frac{F_{\delta A}^2}{\Lambda_2^4}$ that do not survive at high energies.

\bibliographystyle{JHEP}
\bibliography{biblio.bib}
\end{document}